\documentclass[10pt]{article} 
\usepackage{sectsty}

\chapternumberfont{\bfseries\Huge\sc} 
\chaptertitlefont{\huge\sc}

\usepackage[a4paper, margin=1in]{geometry} 

\usepackage[utf8]{inputenc}
\usepackage[T1]{fontenc}

\usepackage{algorithmic}
\usepackage{amsfonts}
\usepackage{amsmath}
\usepackage{amssymb} 
\usepackage{amsthm} 
\usepackage{anyfontsize}
\usepackage{booktabs} 
\usepackage{caption} 
\usepackage{color}
\usepackage{comment}
\usepackage{courier} 
\usepackage{dsfont}
\usepackage[shortlabels]{enumitem}
\usepackage{float}
\usepackage{graphicx} 
\usepackage{harpoon} 
\usepackage{latexsym}
\usepackage{mathtools}
\usepackage{qtree}
\usepackage{slashed} 
\usepackage{subcaption}
\usepackage{tikz-cd}
\usepackage{url}
\usepackage{xcolor}
\usepackage{xurl}
\usepackage{breqn}

\usepackage[utf8]{inputenc} 
\usepackage{hyperref}
\definecolor{darkblue}{rgb}{0, 0, 0.5}
\hypersetup{colorlinks=true,citecolor=darkblue, linkcolor=darkblue, urlcolor=darkblue}

\DeclareMathOperator{\Fock}{{\Gamma}}

\DeclareMathOperator{\FreeFockOld}{{\Fock}} 

\DeclareMathOperator{\Dom}{{Dom}}

\newcommand{\tsr}{\otimes}

\newcommand{\closure}[1]{\overline{#1}}

\newcommand{\inner}[1]{{\langle #1\rangle}}

\newcommand{\NN}{{\mathds{N}}}

\newcommand{\laplace}{\bigtriangleup}

\newcommand{\Diff}[1]{\dfrac{d}{d #1}}

\newcommand{\footurl}[1]{{\footnote{\href[#1]{#1}}}}

\newcommand{\one}{{I}} 

\newcommand{\norm}[1]{\left\lVert#1\right\rVert}
\newcommand{\vertiii}[1]{{\left\vert\kern-0.25ex\left\vert\kern-0.25ex\left\vert #1 \right\vert\kern-0.25ex\right\vert\kern-0.25ex\right\vert}}

\newtheorem{thm}{Theorem}[section]
\newtheorem{prop}[thm]{Proposition}
\newtheorem{statement}[thm]{Statement}

\newtheorem{theorem}[thm]{Theorem}

\newtheorem{lemma}[thm]{Lemma}

\newtheorem{corollary}[thm]{Corollary}

\theoremstyle{definition}

\newtheorem{definition}[thm]{Definition}

\newtheorem{example}[thm]{Example}
\newtheorem{property}[thm]{Property}
\newtheorem{observation}[thm]{Observation}
\newtheorem{remark}[thm]{Remark}

\newcommand{\pmat}[1]{\begin{pmatrix} #1\end{pmatrix}}

\newcommand{\eps}{\epsilon}

\newcommand{\Pc}{{\mathcal{P}}}

\newcommand{\Hc}{{\mathcal{H}}}

\newcommand{\Bc}{{\mathcal{B}}}

\newcommand{\Ic}{{\mathcal{I}}}
\newcommand{\Jc}{{\mathcal{J}}}
\newcommand{\Tc}{{\mathcal{T}}}

\newcommand{\Sc}{{\mathcal{S}}}
\newcommand{\Nc}{{\mathcal{N}}}

\newcommand{\Lc}{{\mathcal{L}}}

\newcommand{\C}{{\mathcal{C}}}

\newcommand{\CC}{\mathbb{C}}

\newcommand{\KK}{\mathbb{K}}

\newcommand{\RR}{\mathds{R}}

\newcommand{\ZZ}{\mathds{Z}}

\newcommand{\ZZnn}{\mathds{Z}_{\geq 0}}

\newcommand{\txt}[1]{\text{#1}}

\newcommand{\aln}[1]{\begin{align*}#1\end{align*}}
\newcommand{\alnn}[1]{\begin{align}#1\end{align}}

\newcommand{\FExp}{\textsc{E}}

\newcommand{\I}{{\mathrm{i}}} 

\newcommand{\smfrac}[2]{{\textstyle\frac{#1}{#2}}}

\newcommand{\addsubtitlequote}[2]{
{}
}

\DeclareMathOperator{\SymmSum}{{\Sc}}

\DeclareUnicodeCharacter{221E}{\ensuremath{\infty}}

\newcommand{\doubledot}{\cdot, \cdot}

\newcommand{\cblue}[1]{\textcolor{blue}{#1}}
\newcommand{\cred}[1]{\textcolor{red}{#1}}

\newcommand{\cviolet}[1]{\textcolor{violet}{#1}}

\newcommand{\SelfRef}[1]{ 
}

\newcommand{\UrlRef}[1]{\href{#1}{url}}

\newcommand{\Arxiv}[1]{{\href{#1}{arxiv}}} 
\newcommand{\Wiki}[1]{{\href{#1}{wiki}}} 
\newcommand{\StackEx}[1]{{\href{#1}{stackex}}} 

\newcommand{\Rem}[1]{\textcolor{violet}{#1}} 
\newcommand{\FRem}[1]{\footnote{\Rem{#1}}}
\newcommand{\FRemHistorical}[1]{}

\newcommand{\FOnlineRef}[1]{\FRem{\href{#1}{online}}} 

\newcommand{\DeleteThisMaybe}[1]{}
\newcommand{\SuppressForNow}[1]{}

\newcommand{\V}[1]{{\bf{#1}}}

\usepackage[style=numeric, 
natbib=true, 
backend=biber, 
giveninits=true, 
uniquename=init,
isbn=false,
doi=false,
eprint=false,
maxbibnames=3, 
maxcitenames=1 %
]{biblatex}

\newcommand{\CarlemanOp}[1]{\textsc{C}(#1)}

\newtheoremstyle{nonumber-nopunct}
  {3pt}          
  {3pt}          
  {} 
  {}             
  {\bfseries}    
  {}             
  {.5em}         
  {}             
\theoremstyle{nonumber-nopunct}
\newtheorem*{block*}{} 

\usepackage{amsthm}

\usepackage{endnotes}
\usepackage{bigfoot}

\newcommand{\CarlFk}[1]{\CarlemanOp{F_1^{#1}, F_2^{#1}}} 
\newcommand{\CarlW}{\CarlemanOp{W_1, W_2}}

\begin{document}

\title{Nonlinear semigroups with unbounded generators under Carleman linearization}
\author{S. Gakkhar\footnote{{Sitanshu Gakkhar, \href{mailto:sgakkhar@waterloo.ca}{email}}, \today}, Ala Shayeghi, 
David C. Del Rey Fern\'andez  
}
\maketitle

\begin{abstract}
We treat the convergence of Carleman linearization of nonlinear evolutionary equations through the approximation theory of strongly continuous semigroups, by Carleman embedding the underlying nonlinear semigroups as linear semigroups. Linear semigroup theory then lets one replace the norm constraint on the convergence of Carleman linearization in the form used by quantum algorithms for a class of semi-discretized evolution equations by a dissipativity constraint, simplifying arguments for convergence. Applying Trotter-Kato approximation theorem to the linearized semigroup realizes the semigroup as a limit finite dimensional operator exponentials, reducing the question of convergence rate  of Carleman linearization to that of the Trotter-Kato approximation.  We then examine convergence of the Carleman linearization as the operators become unbounded, treating the hyperviscuous Burger's equation as an example.
Next we consider the perturbation theory of the Carleman semigroup and obtain conditions when polynomial nonlinearities correspond to the Carleman linearized semigroup being a $1$-integrated semigroup, so convergence is implied by variants of Trotter-Kato approximation for integrated semigroups.
\end{abstract}

\section{Introduction}\label{section_introduction}

Carleman linearization has found frequent use in designing efficient quantum algorithms for nonlinear differential equations of type \alnn{\label{eq_orig_nonlinear_eqn}
\phi' = W_1\phi + W_2\phi^{\tsr 2},\ W_1:H\to H, W_2:H^{\tsr 2}\to H
} on a Hilbert space $H$ subject to initial condition $\phi(0)=\phi_0\in H$, as well as the inhomogeneous version $\phi' = W_0(t) + W_1\phi + W_2\phi^{\tsr 2}$.
It proceeds by deriving an equivalent infinite dimensional linear ordinary differential equation system on the diagonal inside $\Fock(H) := \oplus_n H^{\tsr n}$, \alnn{\label{eq_full_carleman_system}
\Diff{t}\pmat{\phi\\ \phi^{\tsr 2}\\ \vdots \\ \phi^{\tsr n}\\ \vdots} &= \pmat{
\SymmSum_1(W_1) &\SymmSum_1(W_2) &0 &0 &0 &\dots\\
0 &\SymmSum_2(W_1) &\SymmSum_2(W_2) &0 &0 &\dots \\
\vdots &\vdots\\
0 &\dots &\SymmSum_n(W_1) &\SymmSum_n(W_2) &0 &\dots \\
\vdots &\vdots\\
} \pmat{\phi\\ \phi^{\tsr 2}\\ \vdots \\ \phi^{\tsr n}\\ \vdots}\ ,\\
(\FExp(&\phi))' = \CarlemanOp{W_1,W_2}\FExp(\phi), \txt{ with } \FExp(\phi)(0)  = \FExp(\phi_0) \nonumber,
}
where $\FExp: H\to \Fock(H), \FExp(\phi) := \oplus_n \phi^{\tsr n}$, and $\SymmSum_k(W_i)$ are $W_i$'s symmetrized to act on $H^{\tsr k}, H^{\tsr k+1}$. Notice that $W_2$ is given as a linear operator defined on all of $H^{\tsr 2}$.
Therefore, $\CarlemanOp{W_1, W_2}$ defines a linear dynamical system on $\Fock(H)$.
For the linear system on $\Fock(H)$, which is a lift of the original system to $\Fock(H)$,
one wants that the truncations obtained by restricting to $\oplus_{n\leq N} H^{\tsr n}$ converge to the solution as $N\to \infty$ with respect to projection on the first tensor power $\Tc_1:\FreeFockOld(H)\to H$ in the sense that the solution
$\hat u_N$ on $\Fock(H)_N$
to the finite truncation and the solution $u(t)$ to the original system satisfy\aln{
\lim_{N\to \infty} \norm{\Tc_1 \hat u_N (t) - u(t)}_H = 0.
}

In the literature, establishing this convergence in $N$, which we will refer to as the \emph{level} of the Carleman linearization, requires that the $W_1$ must be diagonalizable with eigenvalues $\mu_k, k\in \NN$, strictly negative
and $R := \norm{W_2}\norm{\phi_0}/\max_k |\mu_k|<1$ must hold \cite{liu2021_dissipative_nonlinear_carleman}. In this case the error decays as $R^N$. However, one observes the following.

\begin{observation}\label{obv_r_bound} Applied to partial differential equations (PDEs), like $\phi' = L(\phi)$ for $L$, a partial differential operator, quadratically non-linear, one works by fixing a discretization of $L$ of form $W_1\phi + W_2\phi^{\tsr 2}$, but the convergence condition $R=\norm{W_2}\norm{\phi_0}/\max_k |\mu_k|<1$ cannot possibly hold in the limit of the discretization as $\norm{W_2}$ becomes unbounded while $\max_k |\mu_k|$ may very well remain fixed or go to zero. \end{observation}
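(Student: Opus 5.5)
The plan is to make Observation~\ref{obv_r_bound} precise on a concrete model and compute both quantities in $R=\norm{W_2}\norm{\phi_0}/\max_k|\mu_k|$ explicitly. I will use Burgers' equation $\phi_t = \nu \phi_{xx} - \phi\phi_x$ on the periodic interval $[0,2\pi)$, or with Dirichlet conditions on $[0,1]$. I discretize by central finite differences on a uniform grid of $n$ points with spacing $h\sim 1/n$, so that $H_n=\RR^n$ carries the rescaled norm $\norm{v}^2 = h\sum_j v_j^2$ approximating $L^2$. This gives
$$W_1 = \nu D_2, \qquad (W_2(\phi\tsr\phi))_j = -\frac{\phi_j(\phi_{j+1}-\phi_{j-1})}{2h}.$$
Here $D_2$ is the standard second difference matrix, and $W_2$ is extended linearly to all of $H_n^{\tsr 2}$ in the obvious way, through the entries $e_j\tsr e_{j\pm 1}$. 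The initial datum is the grid restriction of a fixed smooth $\phi_0$, so $\norm{\phi_0}$ stays bounded and converges to $\norm{\phi_0}_{L^2}$ as $n\to\infty$.

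First I bound the spectrum of $W_1$. The eigenvalues of $D_2$ are $-\frac{4}{h^2}\sin^2(\frac{k\pi h}{2})$, up to the choice of boundary conditions. I read $\max_k|\mu_k|$ as the rate appearing in the convergence condition, namely the weakest dissipation, the eigenvalue closest to zero; this is the reading under which \cite{liu2021_dissipative_nonlinear_carleman} obtains $R^N$ decay. With that reading, $|\mu_1|\to \nu\pi^2$ in the Dirichlet case, so it remains fixed. In the periodic case the zero mode gives $\mu=0$, and the relevant rate is zero. Even if one instead reads $\max_k|\mu_k|$ as the spectral radius, of order $\nu/h^2$, I will note that the argument then has to compare growth rates. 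I therefore concentrate on the intended reading, where the rate is fixed or tends to zero, exactly as the statement says.

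Second, I bound $\norm{W_2}$ from below. I take a test tensor $\psi = u\tsr v$ with $u,v$ grid functions normalized in the $h$-weighted norms; note that the tensor norm on $H_n^{\tsr 2}$ carries a factor $h^2$. Choosing $u = $ the constant function and $v$ the highest-frequency mode $v_j=(-1)^{\lfloor j/2\rfloor}$ (or any mode with $v_{j+1}-v_{j-1}$ of size $2$) gives $\norm{W_2(u\tsr v)}\gtrsim h^{-1}\norm{u}\norm{v}$. Hence $\norm{W_2}\gtrsim 1/h\to\infty$. This is the discrete shadow of the fact that $\phi\mapsto \phi\phi_x$ is unbounded on $L^2$. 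Indeed, any consistent discretization $W_2^{(n)}$ whose action on smooth tensors converges to $u\,\partial_x v$ must have $\sup_n\norm{W_2^{(n)}}=\infty$, by the uniform boundedness principle applied to the unbounded limit. I will phrase the general claim this way, so that it covers discretizations beyond finite differences.

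Combining the two steps gives $R_n \gtrsim \norm{\phi_0}/(h\,\nu\pi^2)\to\infty$. So $R<1$ fails for all fine enough discretizations, no matter how small $\norm{\phi_0}$ is. The main obstacle is making the statement robust rather than scheme-specific. Two points need care. The first is the ambiguity in $\max_k|\mu_k|$, which I handle by fixing the reading given above. The second is the choice of norms on $H^{\tsr 2}$. For the latter I will rely on the uniform-boundedness argument, since the operator norm of the limit $u\tsr v\mapsto u\partial_x v$ is infinite with respect to the $L^2\tsr L^2$ norm, independently of the scheme.
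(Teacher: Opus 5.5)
Your argument is correct. It also does more than the paper, which states this observation as motivation and never proves it.

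The paper's implicit justification rests on the same two facts you isolate. The discretized $W_2$ approximates an unbounded operator. The denominator is governed by the eigenvalue of $W_1$ closest to zero, which is either stable under refinement or zero. These facts appear only later, in the Fourier setting of Example~\ref{example_hyperviscuous}: there $\ker W_1=\mathrm{span}\{e_0\}$ gives a zero rate, and $\norm{W_2(e_0\tsr e_k)}=|k|/(2\sqrt{2\pi})$ is unbounded. Your finite-difference computation adds an explicit divergence rate, and your consistency argument makes the claim independent of the scheme. The paper offers only the heuristic.

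Three adjustments:

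\emph{The general step.} This is not really the uniform boundedness principle; it is an elementary limit. A bound $\sup_n\norm{W_2^{(n)}}\le C$, together with convergence on a dense set of smooth tensors, would force $W_2$ to be bounded there. For this you need uniformly bounded sampling and interpolation maps between $H_n$ and $H$, which you should specify.

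\emph{The extension of $W_2$.} The extension off the diagonal is not unique, and your test tensor $u\tsr v$ with $u\neq v$ depends on that choice. It is cleaner to use $\norm{W_2}\ge\sup_\phi\norm{W_2(\phi\tsr\phi)}/\norm{\phi}^2$, which holds for every extension. The two-node grid bump $\phi=\delta_j+\delta_{j+1}$ already gives order $h^{-3/2}$ in this quotient. So your bound $\gtrsim 1/h$ is valid but not sharp; for your extension one checks $\norm{W_2}\asymp h^{-3/2}$ in the $h$-weighted norm.

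\emph{The spectral-radius reading.} Drop this hedge. With your extension $\norm{W_2}=O(h^{-3/2})$, while the spectral radius is of order $\nu/h^2$. That gives $R_n=O(h^{1/2})\to 0$, so under that reading the observation would be false, not merely in need of comparing growth rates. The reading is excluded anyway: it contradicts the statement's own phrase ``remain fixed or go to zero''. It also contradicts the paper's later restatement $R=\norm{W_2}\norm{\phi_0}/|\mathrm{Re}(\lambda_1)|$ with $\lambda_1=\max\mathrm{Re}\,\mathrm{spec}(W_1)$.
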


Under a non-resonance condition, it is possible to obtain stronger results \cite{wu2025quantum_resonance}: given $\eps, T > 0$, the error between solution $y_N$ to the $N$-level Carleman system and the solution $u$ to the original system at time $T$ \alnn{
\norm{u(T) - \Tc_1(y_N(T))   } \leq \eps
}
for $N=O(\log(T/\eps)/ \log(1/R_r))$ for $R_r = 4e\mu\kappa_1^N\norm{W_2}_1 / R_\Delta$ where $\kappa_1^N$ is the condition number for the matrix that diagonalizes $W_1$, $\norm{u(t)}_{t\leq T} \leq \mu$ and $R_\Delta$ a resonance parameter related to the spectrum of $W_2$. The question of convergence is still not resolved as the situation below illustrates.

\begin{observation}\label{obv_need_consistency}
When $W_i$'s are unbounded operators with discretizations $W^N_i$, so $\norm{W^N_i}$ grows without bound. Then for a fixed $\eps$, the number of Carleman levels needed $N:=N(\eps)$ goes to infinity. To get at a solution to the original system, it must be established that limit of the solution $y_N$ to the discretized systems, $y:=\lim_{N\to \infty}y_N$, exists, and in the vanishing $\eps$-limit, $y$  must additionally be a solution to the infinite level Carleman system over the infinite dimensional Hilbert space whose solutions are unique, and therefore has some notion of well-posedness.
\end{observation}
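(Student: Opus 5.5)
The statement is Observation~\ref{obv_need_consistency}. It is a structural remark rather than a quantitative theorem, so the plan is to justify its two claims: first, that the required level $N(\eps)$ diverges as the discretization is refined; second, that a consistency and well-posedness argument in the limit is logically necessary. For the first claim I would use the bound recalled just before the observation: $N=O(\log(T/\eps)/\log(1/R_r))$ with $R_r = 4e\mu\kappa_1^N\norm{W_2}_1/R_\Delta$. Fix $\eps$ and $T$, and consider discretizations $W_i^N$ of unbounded $W_i$ with $\norm{W_2^N}\to\infty$. As in Observation~\ref{obv_r_bound}, the spectral quantities that control the denominator, here $R_\Delta$, stay bounded or shrink. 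It follows that $R_r\to 1^-$ from below, or the hypothesis $R_r<1$ fails outright. In either case $\log(1/R_r)\to 0$, so the admissible level $N(\eps)\to\infty$. A simple example makes this concrete: $W_1$ a discretized Laplacian on a grid of mesh $h$ and $W_2$ a discretized $\partial_x(\phi^2)$. There $\norm{W_2^h}\sim h^{-1}$ while the smallest-in-modulus eigenvalue of $W_1^h$ stays near the continuum value.

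For the second claim I would argue by contradiction with a non-uniqueness scenario. Suppose only the finite-dimensional error bounds are known. Then the family $\{y_N\}$ is controlled at each fixed discretization, but with a constant that degenerates. Without a uniform bound there is no compactness, so no limit $y$ is guaranteed to exist. Even if some subsequence converges weakly, the limit need not solve the infinite-level system $(\FExp(\phi))'=\CarlW\FExp(\phi)$, whose generator is unbounded. Worse, several limits could occur if that system is not well-posed. The natural framework is therefore to require three things:
\begin{enumerate}[(i)]
\item a closed, densely defined operator $\CarlW$ on $\Fock(H)$ generating a $C_0$-semigroup, which gives uniqueness of mild solutions;
\item stability, i.e.\ uniform exponential bounds $\norm{e^{t\CarlemanOp{W^N_1,W^N_2}}}\le Me^{\omega t}$;
\item consistency, i.e.\ convergence of the resolvents on a core.
\end{enumerate}
These are exactly the hypotheses of the Trotter--Kato theorem. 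Under them, $y_N\to y$ uniformly on $[0,T]$, and $y$ is the unique solution. This shows that the conditions named in the observation are sufficient, and the argument above shows that they cannot be dispensed with.

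The main obstacle is the stability requirement (ii). Uniform bounds on Carleman truncations fail under norm-based estimates, because $\norm{W_2^N}$ blows up. I would first try to replace norm control with dissipativity. If $\mathrm{Re}\inner{\CarlemanOp{W_1^N,W_2^N}u,u}\le\omega\norm{u}^2$ uniformly in $N$, then Lumer--Phillips gives the uniform bound. This reduces to checking that the symmetrized $\SymmSum_k(W_2)$ are relatively bounded with respect to $\SymmSum_k(W_1)$, with a small relative bound. In the hyperviscous setting one would expect to prove this from the higher-order dissipation of $W_1$.
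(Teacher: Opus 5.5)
Your reading of the observation as a call for Trotter--Kato matches the paper's own resolution: stability, consistency on a core, and well-posedness of the limiting Carleman generator. Your (i)--(iii) are fine as stated; under uniform stability, resolvent convergence and the paper's convergence of generators on a core are equivalent.

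\textbf{The gap is in how you propose to obtain (ii).} A small relative norm bound for $\SymmSum_k(W_2)$ does not give $\mathrm{Re}\inner{\CarlemanOp{W_1^N,W_2^N}u,u}\le\omega\norm{u}^2$ uniformly, and that uniform bound is what Lumer--Phillips needs. A counterexample on $\CC^2$: $A=\mathrm{diag}(-1,-n^2)$ and $B=\pmat{0&\eps n^2\\0&0}$ satisfy $\norm{Bx}\le\eps\norm{Ax}$. Yet $x=(1,\eps/2)$ gives $\mathrm{Re}\inner{(A+B)x,x}=-1+\eps^2n^2/4$, which is unbounded in $n$.

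What controls dissipativity is a quadratic-form (geometric-mean) bound coupling adjacent levels,
$(\mathrm{Re}\inner{a,W_2b})^2\le\inner{a,-W_1a}\inner{b,-\smfrac{1}{2}\SymmSum_2(W_1)b}$.
This is the condition of Proposition~\ref{prop_carleman_nonlinearity_for_convergence} for the two-level block to be negative. It is this bound, not a norm bound, that yields $\nu\ge\sqrt{K_M}$ in the hyperviscous example. In the paper, norm relative bounds appear only in Section~\ref{nonlinear_perturbations}, where they give closability and a $1$-integrated semigroup, not contractivity.

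Two further problems with the per-level approach:
\begin{itemize}
\item $\SymmSum_k(W_2)$ acts on $H^{\tsr(k+1)}$, so it must be compared with $\SymmSum_{k+1}(W_1)$, or really with both adjacent diagonal blocks, not with $\SymmSum_k(W_1)$.
\item The real difficulty is uniformity in $k$, since $\SymmSum_k(W_2)$ has $k$ summands.
\end{itemize}
The paper avoids any level-by-level check. It tiles $\CarlemanOp{W_1,W_2}_N$ by overlapping blocks $Z_k$. Via Lemma~\ref{lemma_dissipativity_at_level0} and Proposition~\ref{prop_symmsum_recursion}, it deduces $Z_k+Z_k^*\le0$ for all $k$ from the single condition $\Lambda_1\le0$ (Theorem~\ref{theorem_dissipatinity_nonzero_tiling}). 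The generator property you assume in (i) is derived rather than assumed, from dissipativity of the adjoint of the limit (Lemma~\ref{lemma_CWadjoint_dissipativity}, Theorem~\ref{thm_dissipative_carleman_trotter_kato}).

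\textbf{Two smaller points.}
\begin{itemize}
\item You never fix the space in which the lifted problem lives. With the unweighted norm $Q$, $\FExp(\phi_0)\in\Fock(H)$ requires $\norm{\phi_0}<1$, or the rescaling of Remark~\ref{rem_adapted_system}. Only with $\FExp(\phi_0)\in\Dom(\CarlW)$ and uniqueness does $\Tc_1T_t\FExp(\phi_0)=u(t)$ hold. That identity is what ties semigroup convergence back to the nonlinear equation.
\item In your first paragraph, if $\norm{W_2^N}\to\infty$ while $R_\Delta$ stays bounded, then $R_r\to\infty$ and the estimate simply becomes vacuous. It is not true that $\log(1/R_r)\to0$ in either case. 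A diverging sufficient level shows only that the available guarantee degenerates, which is the heuristic sense in which the paper makes the claim anyway.
\end{itemize}
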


Even if one were to directly substitute the unbounded operators in a finite truncation of the Carleman system, then convergence and well-posedness are non-obvious; the full Carleman system requires more. The key point here is that in analogy with the Lax-Richtmeyer theorem for linear PDEs one needs the additional notion of consistency of the discretizations to guarantee that the solutions converge. The Trotter-Kato approximation theorem easily offers a resolution to the Observation~\ref{obv_need_consistency} as we will detail in Section~\ref{linear_semigroup_theory}). We show that because of the recursive structure, the convergence of Carleman linearization for  equation~\eqref{eq_orig_nonlinear_eqn}, with self-adjoint $W_1$, can be related to the dissipativity of the linear system \alnn{\label{eq_2level_0subdiagonal_system}
\phi' = W\phi,\quad W:H\oplus H^{\tsr 2}\to H\oplus H^{\tsr 2}, \txt{ where } W := \pmat{W_1 & W_2 \\ 0 &\smfrac{1}{2}\SymmSum_2(W_1)},
}
and the nonlinearity parameter $R$ from Observation~\ref{obv_r_bound} is a direct control on the dissipativity.

The results are obtained by applying the linear semigroup theory to the infinite dimensional Carleman operator $\CarlemanOp{W_1,W_2}$. The convergence results are general and do not appeal to resonance. At the same time, the question of convergence rate is reduced to convergence rates for the Trotter-Kato approximation theorem, which has been well studied. Careful, albeit problem-specific, analysis can be used to deal with unbounded $W_i$'s as we will illustrate with an example.

The semigroup methods also extend the convergence to non-dissipative dynamics by applying perturbations to the Carleman operator generated semigroup. In line, with results showing that systems with positive Lyapunov exponent do not have efficient quantum algorithms \cite{lewis2024limitations}, the dissipativity of $W_1$ is necessary for $\CarlemanOp{W_1, W_2}$ to be a strongly continuous  semigroup generator. However, perturbations $W_2$ for which solution norm can grow corresponding to spectrum of $\CarlemanOp{W_1, W_2}$ being bounded above can also be treated.

\subsection{Basic motivation}\label{section:basic_motivation}

Our approach starts by giving the space $\Fock(H)$, on which the Carleman linearization lives a Hilbert space structure using an inner-product $Q$. We first consider initial conditions, $\phi_0$, from the unit ball, $\norm{\phi_0}<1$, and show that the initial condition lifted $\Fock(H), \FExp(\phi_0)$ is in $L^2(\Fock(H), Q)$. We will then use operator semigroup theory with respect to the Hilbertian structure on $\Fock(H)$.

The solution $\hat u \in  L^2(\Fock(H), Q)$ to the full infinite-dimensional system (equation~\eqref{eq_full_carleman_system}) is given by the semigroup $T_t$ generated by $\CarlemanOp{W_1,W_2}$ evaluated on the initial condition, $\hat u(0) = \FExp(\phi_0)$ if the problem is well-posed. Projecting onto $H$ via $\Tc_1$ gives the solution to the original system. While the full infinite-dimensional semigroup necessitates working with resolvents, under some conditions it can be approximated by the finite truncations $\CarlemanOp{W_1,W_2}_N$ and their semigroups $T_{t, N}$ which are just the usual operator exponentials.
Since $\Tc_1 T_t \FExp(\phi_0) = u(t)$, this implies that the convergence rate of the error, $e(N)=\norm{\Tc_1 T_{t, N} \oplus_{n\leq N}\phi_0^{\tsr n} - u(t)}_H$, to zero is dominated by the convergence of the truncations\aln{
e(N)&=\norm{\Tc_1 T_{t, N} \FExp_N(\phi_0) - u(t)}_H \leq \norm{ T_{t, N} \FExp_N(\phi_0) - T_{t}\FExp(\phi_0)}, 
}
where $\FExp_N(v) := \oplus_{n\leq N}v ^{\tsr n}$.

This convergence of the error to zero in the large $N$-limit requires that the infinite system is well-posed and the generators are well behaved. In this article we will address this by examining the structure of the Carleman operator and using the Trotter-Kato approximation theorem. We will establish that the truncations converge when the simplified system  (equation~\eqref{eq_2level_0subdiagonal_system}) is dissipative and the infinite Carleman system is well-posed.

The convergence of the evolution semigroup to the solution of the defining PDE necessitates using $\norm{\phi_0}_H < 1$. However, this requirement can be dropped by adapting the equations. The dissipativity hypothesis for $W, W_1$ along with $\norm{\phi_0}_H < 1$ are much more widely applicable than the standard theory (including the recent work of \cite{jennings2025quantum}) as discussed in Section~\ref{section_convergence_comparison}; in particular, the discretization limit can be dealt with since the convergence condition does not involve operator norms.

The semigroup approach is motivated by trying to formalize the correspondence of the original nonlinear dynamics $\phi' = W_1\phi + W_2\phi^{\tsr 2}$ and the Carleman linearized dynamics: the regularity properties of the semigroup guarantee that the correspondence is well behaved and one can solve the linear problem to get at the nonlinear problem. Before delving into unbounded $W_i$'s and perturbations, we detail this correspondence. For this we will need to introduce some background.

\emph{Some notational remarks}: $[k]$ denotes the ordered set $\{1, 2\dots k\}$. For Hilbert space $H$, $\Bc(H)$ denotes bounded linear operators on $H$, and $\Lc(H)$ linear, possibly unbounded operators on $H$. When the inner-product $Q$ is clear from context, $L^2(\Fock(H), Q)$ and $\Fock(H)$ will be used interchangeably.

\subsection{Linear semigroup theory}\label{linear_semigroup_theory}

We start by recalling the definition of a $C_0$-semigroup.

\begin{definition}
  Let \( X \) be a Banach space. A family of bounded linear operators \( (T_t)_{t \geq 0} \subset \mathcal{B}(X) \) is called a \emph{\( C_0 \)-semigroup} (or strongly continuous semigroup) if the following holds:
\begin{enumerate}
    \item Semigroup property: $T_0 = I, \quad T_{t + s} = T_t\circ T_s$ for all $t, s \geq 0$.
    \item Continuity in strong operator topology: $\lim_{t \to 0^+}\norm{T_t x-x} = 0$ for all $x\in X$.
\end{enumerate}
\end{definition}

For bounded generators, the semigroups are uniformly continuous semigroups, $\lim_{t \to 0^+}\norm{T_t - \one} = 0$. Similar to  semigroups of linear operators, a nonlinear semigroup $(T_t)_{t\geq 0}$ is a semigroup with $T_t$ possibly nonlinear. For nonlinear semigroups, knowing that one exists does not mean one knows an exact closed form. Unless explicitly qualified, the semigroups will be assumed to be linear.

Associated to a $C_0$-semigroup $T_t$ on a Banach space $X$ are operators $A$, the generator, and $R(\lambda, A):=(\lambda \one-A)^{-1}$, the resolvent, parametrized by $\lambda\in\rho(A):=\{\lambda\in \CC:\lambda\one-A\txt{ is invertible }\}$. $A$ can be recovered from the semigroup by$$
Ax = \lim_{h\to 0^+}\dfrac{T_hx - x}{h}\ ,
$$
while $R(\lambda, A)$ is related to the semigroup via the Laplace transform $$
R(\lambda, A)x = \lim_{t\to \infty}\int_0^t e^{-\lambda s}T_s(x)ds .
$$
When the generator $A$ is bounded then for $t\geq 0$, $T_t = \exp(tA) = \sum_{k=0}^{\infty} t^kA^k/k!$, while for unbounded $A$, generally one needs to use the resolvent to reconstruct the semigroup. The solution to a well-posed evolution equation $u'(t) = Au(t), t \geq 0, u(0) = u_0$, is given by the semigroup action, $u(t) = T_t(u_0)$ for each given initial value $u_0$. While a generator may be unbounded, it must be closable, that is, there must exist a closed extension.

\begin{definition}Let $X$ and $Y$ be Banach spaces, and let $T :X \to Y$ be a linear operator with domain $\Dom(T)$, then $T$ is defined to be \emph{closed} if the graph $
\mathcal{G}(T) = \left\{ (x, Tx) \in X \times Y : x \in \Dom(T) \right\}
$
is closed in the product topology on $X \times Y$. Equivalently, $T$ is closed if for every sequence $(x_n) \subset \Dom(T)$ such that
$
x_n \to x \in X \quad \text{and} \quad Tx_n \to y \in Y,
$
it follows that $x \in \Dom(T)$ and $Tx = y$. For any linear operator $A$, we will denote by $\closure{A}$ a closed extension of $A$.
\end{definition}

We note that the semigroup is not a numerical scheme since the parameter $t$ is continuous, although numerical schemes can be derived by discretizing $t$. We will need the following property of closability and domain of an operator, and a notion of an appropriate dense subspace of the domain of an operator.

\begin{property}\label{property_closability}(See, for instance, \cite{schmudgen_unbounded_operators})
A linear operator $A:\Dom(A) \to H$ is closable if and only if the adjoint $A^*$ is densely defined.  Additionally, note that $\Dom(A^*)$ can be understood using that $\phi \in \Dom(A^*)$ iff $|\inner{A\psi, \phi}|\leq C_\phi\norm{\psi}$ for all $\psi\in \Dom(A)$ where $C_\phi$ depends only on $\phi$.
\end{property}

\begin{definition} A subspace $D\subset X$ is core for a linear operator $A:\Dom(A)\subset X \to X$ if $D$ is dense in $\Dom(A)$ in the graph norm $\norm{x}_A := \norm{x} + \norm{Ax}$. 
\end{definition}

The (first) Trotter-Kato approximation theorem relates the convergence of generators to the convergence of their semigroups.

\begin{statement}(The (first) Trotter-Kato approximation theorem, see \cite{engel_nagel_semigroups}) Suppose $T(t),T_n(t)$ with $t\geq 0$, $n \in \NN$, are strongly continuous
semigroups on Banach space $X$ with generators $A$ and $A_n$, respectively, and assume that
they satisfy the estimate $\norm{T(t)}, \norm{T_n(t)}\leq M\exp(\omega t)$ for all $t\geq 0, n\in\NN$ with appropriate constants $\omega\in \RR, M\geq 1$.
Suppose $D$ is a core for $A$, 
then the following are equivalent

\begin{enumerate}
    \item $T_n(t)x\to T(t)x$ for all $x \in X$, uniformly for $t$ on compact intervals.
    \item For each $x\in D$, there exists $x_n\in \Dom(A_n)$ such that $x_n\to x$ and $A_nx_n \to Ax$.
\end{enumerate}
\end{statement}

Contractive semigroups when $X$ is a Hilbert space are precisely those that are generated by densely defined dissipative operators.

\begin{statement}(Lumer–Phillips Theorem, see \cite[Theorem~II.3.15]{engel_nagel_semigroups}) 
Let \( X \) be a Banach space, and let \( A : \Dom(A) \subseteq X \to X \) be a densely defined linear operator. Then \( \closure{A} \) is the generator of a strongly continuous semigroup \( (T(t))_{t \geq 0} \) of contractions on \( X \) if and only if:
\begin{enumerate}
    \item \( A \) is \emph{dissipative}, i.e., for all \( x \in \Dom(A) \) and all \( \lambda > 0 \)
    \[
    \|(\lambda I - A)x\| \geq \lambda \|x\|.
    \]
    \item The range \( \text{Ran}(\lambda I - A) \) is dense for some \( \lambda > 0 \).
\end{enumerate}
In this case, the semigroup \( (T(t))_{t \geq 0} \) satisfies \( \|T(t)\| \leq 1 \) for all \( t \geq 0 \).
\begin{remark}\label{rem_dissipativity_on_hilbert_spaces}  We have that if both $A$ and $A^*$ are dissipative, then $\closure{A}$ generates a contraction semigroup on $X$ \cite[Corollary~II.3.17]{engel_nagel_semigroups}. Additionally, on Hilbert spaces, dissipativity has a nice characterization (see \cite[Example~3.26, Proposition~II.3.23]{engel_nagel_semigroups}: 
when $X=H$
for Hilbert space $H$, then $A$ being dissipative is equivalent to the bound \alnn{
\txt{Re}\inner{Av, v} \leq 0 \txt{ for all } v\in \Dom(A).
}
And last, if $A$ is dissipative and $\Dom(A)$ is dense in $H$, then $A$ is closable and $\closure{A}$ is dissipative (see, for example, 
\cite[Theorem~1.4.5]{pazy2012semigroups}). 
\end{remark}
\end{statement}

More generally, to consider nondissipative dynamics, as in dynamics where solution norm may grow, we will need a characterization for generators of not necessarily contractive semigroups. If a semigroup $(T(t))$ satisfies a bound of form, $\norm{T(t)}, \norm{T_n(t)}\leq M\exp(\omega t), M\geq 1$, then the semigroup is said to be of type $(M, \omega)$. Semigroups of type $(1, \omega)$ are quasi-contractive, type $(1, 0)$ being contractive. The Feller-Miyadera-Phillips theorem characterizes the generators of $(M, \omega)$ $C_0$-semigroups.

\begin{statement}(Feller-Miyadera-Phillips theorem, see  \cite[Theorem~II.3.8]{engel2006short}) A linear operator $(A, \Dom(A))$ on a Banach space $X$ is the infinitesimal generator of a $(M, \omega)$ $C_0$-semigroup $T(t)$ if and only if\begin{enumerate}
    \item $A$ is densely defined and closed.
    \item For every $\lambda > \omega$, the operator $(\lambda I - A)$ is invertible 
    with the resolvent $R(\lambda, A) = (\lambda I - A)^{-1}$ satisfying for all $n\in \NN$, $\|R(\lambda, A)^n\| \le {M}/{(\lambda - \omega)^n}.$
\end{enumerate}
\end{statement}

\subsection{The evolution semigroup perspective}\label{sec_evolution_semigroup_perspective}

A correspondence between the possibly nonlinear dynamics on $H$ and the linear dynamics on $\Fock(H)$ is only possible if solutions to the lift of the original system \eqref{eq_orig_nonlinear_eqn} via Carleman linearization to $\Fock(H)$ exist and are unique. For Cauchy problems on Hilbert spaces, we have the following characterization of well-posedness in terms of regularity of the semigroup.

\begin{property}\label{prop_nonlinear_to_abstract_cacuhy_problem} Consider equation~\eqref{eq_orig_nonlinear_eqn}, $\phi' = W_1\phi + W_2\phi^{\tsr 2}, \phi(0)=\phi_0$, and let $A:=\CarlemanOp{W_1,W_2}$ be the associated Carleman operator on $\FreeFockOld(H) = \oplus_n H^{\tsr n}$ completed with respect to inner-product $\inner{\doubledot}_Q := \sum_k \inner{\doubledot}_{H^{\tsr k}}$, then the following hold:
\begin{enumerate}\label{eq_abs_cauchy_problem}
    \item The abstract Cauchy problem on $\Fock(H)$, $\psi'(t) = A\psi(t), \psi(0)=\psi_0\in \Dom A\subset \Fock(H),$ having a unique solution is equivalent to the operator $A$ generating a $C_0$-semigroup on $\Fock(H)$ (see, for instance, \cite[Chapter~II,Theorem~6.7]{engel_nagel_semigroups}).
\item If $u(t)$ is a solution to equation~\eqref{eq_orig_nonlinear_eqn}, then $\FExp u(t)$ is a solution to the abstract Cauchy problem on $\Fock(H)$ with $\psi_0 = \FExp \phi_0, A=\CarlemanOp{W_1,W_2}$ (see Section~\ref{section_carleman_semigroup}).
\end{enumerate}
Therefore, under the assumption that the Cauchy problem (equation~\eqref{eq_abs_cauchy_problem}) has a unique solution, the solution $u(t)$ to equation~\eqref{eq_orig_nonlinear_eqn} can be obtained from a solution $\psi(t)$ for equation~\eqref{eq_abs_cauchy_problem} by projecting onto the first tensor factor, giving a correspondence between the nonlinear evolution on $H$ under equation~\eqref{eq_orig_nonlinear_eqn} and linear evolution on $\Fock(H)$ generated by the Carleman operator.
\end{property}

\noindent Precisely, the semigroup generates the solution as follows.

\begin{statement}(ODEs on a Banach space, see \cite{pazy2012semigroups}) 
Consider the ordinary differential equation \alnn{\label{eq_abstract_cacuhy_problem_banach_space}
\psi'(t) = A\psi(t) + f(t), \ \psi(0)=\psi_0
}
for $f:[0, T)\to X$, $X$ a Banach space. Then if $A$ is the infinitesimal generator of a strongly continuous semigroup $T_t$,
the solution is given by $$\psi(t) = T_t\psi_0 + \int_0^t T_{t-s}f(s)ds ,\  t\geq 0.$$

In particular, when $\psi_0\in \Dom(A), f=0$, the semigroup action, $T_t\psi_0$, gives the classical solution when $A$ is the generator of a $C_0$-semigroup.
\end{statement}

For $f=0$, stability implies that $\norm{\psi(t)}\leq \norm{\psi_0}$, meaning that not only must the semigroup be a $C_0$-semigroup, it must also be contractive.  This is precisely the regime of the Lumer–Phillips Theorem. We will proceed by establishing conditions under which $\CarlemanOp{W_1,W_2}$ generates a $C_0$-semigroup so the uniqueness assumption holds. In fact, the semigroup will be contractive for dissipative equations. Using the Trotter-Kato approximation, the convergence of finite truncations of the Carleman system to the infinite limit will be proved, so the semigroup can be approximated by finite-dimensional matrix exponentials.

We will also be interested in \emph{mild} solutions. The solution $\psi$ to the abstract Cauchy problem is a mild solution if we have $$
\int_0^t \psi(s)ds \in \Dom(A), \txt{ } A \int_0^t \psi(s)ds = \psi(t)-\psi_0 +\int^t_0 f(s)ds. 
$$ The solution becomes a classical (\emph{strong}) solution if $u\in \C^1(\RR_+, X)$, that is, $\psi$ is differentiable.

The semigroup approach extends to the case where the coefficients are unbounded, and proves to be an essential ingredient in establishing that the solution to the ODE, defined by the semi-discretized PDE, converges to the solution of the PDE in the limit. The Carleman linearization as used here is a way to embed a nonlinear semigroup into a linear semigroup.

\subsection{Nondissipative perturbations}\label{section_perturbed_nondissipative_dynamics}
For a nonlinear operator $L$, by the Carleman semigroup, we will mean the semigroup generated by the Carleman linearization $\CarlemanOp{L}$. The connection to well-posedness theory implies that we need the semigroup to be a $C_0$-semigroup.
When $L=L_0 + \Nc$ for a linear operator $L_0$ and a nonlinear $\Nc$, the Carleman linearization corresponds $\CarlemanOp{L} = \CarlemanOp{L_0} + \CarlemanOp{\Nc}$, the perturbation theory of linear semigroups comes into play. For a large class of perturbations, especially ones where the dynamics may not be dissipative, the growth bounds on the resolvent of the type needed by the Lumer-Phillips and Feller-Miyadera-Phillips theorem are not met.

Since the semigroup is defined by the inverse Laplace transform of the resolvent, one only needs the Laplace transform to exist. The notion of well-posedness must be relaxed, but for a large class of perturbations $\CarlemanOp{\Nc}$ on initial conditions with sufficient regularity, it is still useful. The generated semigroups are the  $1$-integrated semigroups, and the well-posedness of the Cauchy problem is in terms of the mild, and not strong, solutions. Intuitively, a $1$-integrated semigroup, $S_t$, arises as an integral of a $C_0$ semigroup, $T_t$,  $ S_t = \int_0^t T_s ds$.
If a mild solution $u(t)$ exists to the abstract Cauchy problem, equation~\eqref{eq_abstract_cacuhy_problem_banach_space}, then for \aln{
v(t) = S_t\psi_0 + \int_0^t S_s f(t-s)ds,
}
$v\in \C^1(\RR_+, X)$ and $u$ can be obtained by differentiating $v$, $u=v'$. 
If $u$ is a classical solution then $v\in\C^2(\RR_+, X)$. We have that if $A$ is the generator of a once integrated semigroup on $X$ with $x \in \Dom (A^{k+1})$ then there exists a unique classical solution of \eqref{eq_abstract_cacuhy_problem_banach_space} (see \cite[corollary~3.2.11]{arendtvectorlaplcetransforms}).  We refer to \cite{arendtvectorlaplcetransforms} for background on integrated semigroups, and for our purposes we just need a characterization in terms of resolvent growth bounds: a linear operator is a generator for a $1$-integrated semigroup if there exists $\omega \geq 0, M \geq 0, b \in \NN$ such that for all $\lambda \in \rho(A), \Re(\lambda)>\omega$,  \alnn{\label{eq_resolvent_bound_once_integrated_semigroup}
\norm{R(\lambda, A)} \leq M|\lambda|^{-b}
}
(see \cite[Theorem~3.2.8]{arendtvectorlaplcetransforms}). With this resolvent bound we can handle nonlinear perturbation under Carleman linearization. As an example we will consider polynomial perturbations and obtain the convergence for the Carleman linearization of reaction-diffusion type equations $u' = \alpha \laplace u + u^{\tsr p},\  p\in \NN$, $\laplace$ being a Laplacian. Trotter-Kato type approximation theorems are known for once integrated semigroups, but we will not comment in detail on the question of computing and approximating integrated semigroups.

\section{The Carleman semigroup}\label{section_carleman_semigroup}

In this section we will work with dynamical systems on finite-dimensional Hilbert spaces. Consider \alnn{
\phi' = W_1\phi + W_2\phi^{\tsr 2},\label{eq_orig_quadratic_system}
} with $\phi(0)=\phi_0$ on the Hilbert space $H,\dim H = d < \infty$. Let $H\ni \phi(t) = (\phi_i(t))_{i\in [d]}$ with respect to some basis $(h_i)$ for $H$, that is, $\phi_i(t) = \inner{\phi(t), h_i}$, then $\phi^{\tsr 2}(t) = (\phi_i(t)\phi_j(t))_{i, j\in [d]}$  on $H^{\tsr 2}$ with basis $(h_i\tsr h_j)_{i,j\in [d]}$. This yields $
(\phi^{\tsr 2})' = (\phi_i\phi_j)'_{ij} = (\phi'_i\phi_j + \phi_i\phi'_j)_{ij} = \phi \tsr \phi' + \phi' \tsr \phi
$. For any operator $A$, define the operator $\SymmSum_n(A)$ mapping into $\Lc(H^{\tsr n})$ by \aln{
\SymmSum_n(A) &= \sum_{i=0}^{n-1}  \one^{\tsr i} \tsr A \tsr \one^{\tsr n-i-1} \\ &= A\tsr \one^{n-1} + ... \one^{j-1}\tsr A\tsr\one^{n-j-1} ... + \one^{n-1}\tsr A
:= \sum_{j\in[1:n]}\SymmSum^j_n(A),
}
with the convention that $\one^{\tsr 0} = 1\in \RR$, making $\SymmSum_1(A)=A$, and in $\SymmSum^j_n(A)$ $j$ indicates position of $A$. Notice that $\SymmSum_k$ is linear and commutes with the adjoint. Consider $D:H\to H, D\phi = \phi' = W_1\phi + W_2\phi^{\tsr 2}$. Combining $\phi' = W_1\phi + W_2\phi^{\tsr 2} $ with the infinite family of equations for $\phi^{\tsr n}, n\in \NN$, then expanding $(\phi^{\tsr 2})' =  \phi \tsr D\phi +   D\phi \tsr \phi$, we have  \alnn{
(\phi^{\tsr 2})'
&=  \phi \tsr\left[ W_1\phi + W_2\phi^{\tsr 2}\right] +  \left[ W_1\phi + W_2\phi^{\tsr 2}\right] \tsr \phi = \SymmSum_2(D)\phi^{\tsr 2},\nonumber \\
(\phi^{\tsr n})' &= \sum_{i=1}^n  \phi^{\tsr i-1} \tsr (D\phi) \tsr \phi^{\tsr n-i-1} = \SymmSum_n(D)\phi^{\tsr n},\label{eq_tensor_power_derivative}}
yields the infinite-dimensional Carleman system, \alnn{
\Diff{t}\pmat{\phi\\ \phi^{\tsr 2}\\ \vdots \\ \phi^{\tsr n}\\ \vdots} &= \pmat{
\SymmSum_1(W_1) &\SymmSum_1(W_2) &0 &0 &0 &\dots\\
0 &\SymmSum_2(W_1) &\SymmSum_2(W_2) &0 &0 &\dots \\
\vdots &\vdots\\
0 &\dots &\SymmSum_n(W_1) &\SymmSum_n(W_2) &0 &\dots \\
\vdots &\vdots\\
} \pmat{\phi\\ \phi^{\tsr 2}\\ \vdots \\ \phi^{\tsr n}\\ \vdots}\nonumber \\
&:= \CarlemanOp{W_1,W_2}\FExp{\phi}\label{eq_infinite_carleman_expansion}
}
with finite sections that are the truncated system at Carleman level $N$ corresponding to the $K(N)$-dimensional dynamics with $K(N)=\sum_{k=1}^N d^k$.  \alnn{
\Diff{t}\pmat{\phi\\ \phi^{\tsr 2}\\ \vdots \\ \phi^{\tsr N}} &= \pmat{
\SymmSum_1(W_1) &\SymmSum_1(W_2)  &0 &\dots\\
0 &\SymmSum_2(W_1) &\SymmSum_2(W_2)  &\dots \\
\vdots &\vdots\\
0 &\dots &\dots  &\SymmSum_N(W_1)}_{K(N)\times K(N)} \pmat{\phi\\ \phi^{\tsr 2}\\ \vdots \\ \phi^{\tsr N}}_{K(N)\times 1}\\
&:= \CarlemanOp{W_1,W_2}_N \FExp_N{\phi},\label{eq_finite_carleman_expansion}
}
where $\FExp:H\to \Fock(H), \FExp(\phi) = \oplus_n \phi^{\tsr n}$ and $\FExp_N(\phi) = \oplus_{n\leq N} \phi^{\tsr n}$.
The dynamics  for $\CarlemanOp{W_1,W_2}$ live on the space $\FreeFockOld(H) = \oplus_{\ZZnn} H^{\tsr n}$, while the dynamics generated by the truncated operator $\CarlemanOp{W_1,W_2}_N$ live on $\Fock_N(H) := \oplus_{n\leq N} H^{\tsr n}$. Note that $\CC := H^{\tsr 0}$ decouples from the dynamics on $\oplus_{\NN} H^{\tsr n}$.

\begin{observation} By construction if $u(t)$ is a solution to $\phi' = W_1\phi + W_2\phi^{\tsr 2}$ with initial condition $\phi_0, \norm{\phi_0}< 1$, then $\FExp(u(t))$ satisfies $$\FExp(u(t))' = \CarlemanOp{W_1,W_2}\FExp(u(t)),\ \FExp(u(0)) = \FExp(\phi_0).
$$
The nonlinear dynamics, therefore, can be identified with an abstract Cauchy problen on $\Fock(H)$ as noted in property~\ref{prop_nonlinear_to_abstract_cacuhy_problem} if the Carleman linearized system has unique solutions.
\end{observation}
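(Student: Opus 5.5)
The observation has two parts. First, $\FExp(u(t))$ must be a well-defined element of $\Fock(H)$ with the inner product $Q$. Second, it must satisfy the Carleman system componentwise with the correct initial datum. The plan is to check each in turn, using \eqref{eq_tensor_power_derivative} for the derivative and the geometric decay of $\norm{u(t)^{\tsr n}}$ for membership and convergence.

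\textbf{Step 1 (membership).} Tensor norms are multiplicative, so $\norm{v^{\tsr n}}_{H^{\tsr n}} = \norm{v}^n$. Hence
\[
\norm{\FExp(v)}_Q^2 = \sum_n \norm{v}^{2n} < \infty \quad\text{whenever } \norm{v}<1 .
\]
At $t=0$ this gives $\FExp(\phi_0)\in\Fock(H)$. For $t>0$ I would restrict to the interval $[0,T)$ on which $\norm{u(t)}<1$. This interval is nonempty and open by continuity of $u$, which holds since $\dim H<\infty$ and the vector field is polynomial. On it $\FExp(u(t))$ lies in $\Fock(H)$. Alternatively, one reads the observation as holding as long as the solution stays in the unit ball, which is implicit in the statement.

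\textbf{Step 2 (componentwise identity).} Fix $n$. Since $u\in C^1$ and the map $v\mapsto v^{\tsr n}$ is multilinear, the Leibniz rule gives
\[
(u^{\tsr n})' = \sum_{i=1}^n u^{\tsr i-1}\tsr u'\tsr u^{\tsr n-i}.
\]
Substituting $u' = W_1u + W_2u^{\tsr 2}$ yields $\SymmSum_n(W_1)u^{\tsr n} + \SymmSum_n(W_2)u^{\tsr n+1}$. This is exactly the $n$-th row of $\CarlemanOp{W_1,W_2}\FExp(u)$ in \eqref{eq_infinite_carleman_expansion}. The initial condition $\FExp(u(0))=\FExp(\phi_0)$ is immediate.

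\textbf{Step 3 (from componentwise to $\Fock(H)$-valued).} This is the main obstacle: we need $t\mapsto\FExp(u(t))$ to be differentiable in the $Q$-norm, and $\FExp(u(t))$ to lie in the domain of $\CarlemanOp{W_1,W_2}$ with the derivative equal to the operator applied to it.

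For the domain, note that $\norm{\SymmSum_n(W_i)}\le n\norm{W_i}$ because $\dim H<\infty$, so the $W_i$ are bounded. The $n$-th component of $\CarlemanOp{W_1,W_2}\FExp(u)$ is therefore bounded by $n(\norm{W_1}r^n + \norm{W_2}r^{n+1})$, where $r=\norm{u(t)}<1$. This is square-summable, so $\FExp(u(t))$ is in the domain.

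For differentiability, on a compact subinterval we have $\sup r\le r_0<1$. The series of component derivatives then converges uniformly, with bound $\sum_n n^2 r_0^{2n}$. A termwise differentiation argument, via uniform convergence of the derivative series in $\ell^2$-sum, gives the $Q$-derivative equal to the componentwise one.

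Combined with Property~\ref{prop_nonlinear_to_abstract_cacuhy_problem}, this identifies the nonlinear dynamics with the abstract Cauchy problem whenever the latter has unique solutions.
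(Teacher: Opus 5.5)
Your proposal is correct and follows the paper's route. The identity is the componentwise Leibniz computation \eqref{eq_tensor_power_derivative}, and your domain estimate is the same geometric-series bound the paper uses to show $\FExp(\phi)\in\Dom(\CarlemanOp{W_1,W_2})$ for $\norm{\phi}<1$. Beyond the paper's ``by construction'' you add two things, both sound: you verify differentiability in the $Q$-norm by termwise differentiation, and you restrict to the maximal interval $[0,T)$ on which $\norm{u(t)}<1$. The restriction is genuinely needed, because $\norm{\phi_0}<1$ alone does not keep the trajectory in the unit ball, and outside it $\FExp(u(t))\notin\Fock(H)$.
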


For any Hilbert space $H$, the free Fock space over $H$, $\FreeFockOld(H)$, is the vector-space $\oplus_{n\in \NN_0} H^{\tsr n}$ with $H^0 = \CC$. Each $v\in \FreeFockOld(H)$ is the vector $(v_k)_{\ZZnn}$ with $v_k\in H^{\tsr k}$. For the dynamics considered here, $H^0$, plays no role. Without loss of generality, the $H^0$ factor is dropped. We need to set the following notation:

\begin{itemize}

\item The finite $N$-particle spaces are the finite direct sums $\Fock_N(H) = \oplus_{n\leq N} H^{\tsr n}$. $v\in \Fock_N(H)$ is the $N$-vector $(v_m)_{m\in [N]}, v_m\in H^{\tsr m}$.
\item The operators $\Tc_k$ are projections on $H^{\tsr k}$, $$
\Tc_k: \Fock_N(H), \Fock(H)\to H^{\tsr k},\ \Tc_k(\oplus_i v_i) = v_k$$ where for $\Tc_k:\Fock_N(H)\to H^{\tsr k}, k\leq N$.
\item The operators $\Pc_k, \Jc_k$ define projections and inclusions of finite particle spaces through which $\Fock_k(H)$ are identified inside $\Fock(H)$
\alnn{
\Pc_k&: \Fock(H)\to  \Fock_k(H),\ \Pc_k(\oplus_i v_i) = \oplus_{i\leq k} v_i,\nonumber \\ \Jc_k&:\Fock_k(H)\to \Fock(H),\ \Jc_k(\oplus_{i\leq k} v_i) =  [\oplus_{i\leq k} v_i]\oplus [\oplus_{i>k} 0],\nonumber\\ &
\txt{ with } \Pc_k\Jc_k = \one|_{\Fock_k(H)}\label{eq_finite_fock_projections_inclusions}.
}

\end{itemize}

Now we would like to translate the results on convergence of the Carleman linearization in terms of semigroups on $\Fock(H)$ fleshing out the discussion in section~\ref{section:basic_motivation}.
From \cite{liu2021_dissipative_nonlinear_carleman} we have that if $\dim H = d < \infty$, $W_i$'s are bounded, $\lambda_1 = \max_{\txt{Re}(\txt{spec}(W_1))} < 0$, then using iterative methods on the $N$-level Carleman system, one can construct $\hat u_N\in H$ such that $\hat u_N$ approximates the solution, $u$, to the nonlinear system in equation~\eqref{eq_orig_quadratic_system}. Moreover, the error $\eta_j := u^{\tsr j} - \hat u_N$ satisfies $\norm{\eta_j} \leq \norm{\phi_0}^{j}R^{N+1-j}$ (see \cite[Corollary~1]{liu2021_dissipative_nonlinear_carleman}) for $j>1$, while for $\eta_1$ a stronger bound \alnn{
\norm{\eta_1} \leq \norm{\phi_0}R^N(1-\exp(\txt{Re}(\lambda_1)t))^N \label{eq_R_carleman_cconvergence}
}
holds, where $R=\norm{W_2}\norm{\phi_0}/|\txt{Re}(\lambda_1)|$. Therefore, for $R<1$, $\eta_1$ vanishes with $N$ large. Restating in terms of the solutions $\phi_N, \phi$ to the truncated  $N$-level system, equation~\eqref{eq_finite_carleman_expansion}, and the full Carleman system, equation~\eqref{eq_infinite_carleman_expansion}, the error is given by $e_j(N) = \Tc_j \phi_N  -\Tc_j\phi$. Now suppose $\CarlemanOp{W_1,W_2}_N,\ \CarlemanOp{W_1,W_2}$ generate $C_0$-semigroups $T_{t, N}, T_t$, then since $\Tc_1\phi = u$, and $\phi_N, \phi$ must be given by the semigroup action,\aln{
\norm{e_1(N)}&=\norm{\Tc_1 T_{t, N} \FExp_N(\phi_0) - u(t)}_H \leq \norm{ T_{t, N} \FExp_N(\phi_0) - T_{t}\FExp(\phi_0)}. 
}

The semigroup approach, therefore, reduces to the following questions of showing that $\CarlemanOp{W_1,W_2}_N$, $\CarlemanOp{W_1,W_2}$ are $C_0$-semigroup generators, and the truncated Carleman semigroups with generators, $\CarlemanOp{W_1,W_2}_N$,  converge to the Carleman semigroup generated by the untruncated $\CarlemanOp{W_1,W_2}$.
And we would like to answer the questions without dependence on $R$ or the norms $\norm{W_i}$'s.

\subsection{$L^2(\Fock(H), Q)$ Carleman linearization}

Consider the subspace $\Fock(H)^0:=\cup_k\Fock_k(H)$ inside $\Fock(H)$ consisting of vectors $v=(v_i)$ such that there exists $N_v$ with the property that $v_i=0$ for all $i>N_v$, meaning $v\in \Fock_{N_v}(H)$. Note that for all $u, v\in \Fock(H), \alpha, \beta\in \KK$, $\alpha u + \beta v\in \Fock^0(H)$ where $\KK=\RR, \CC$ is the field underlying  $H$.

\begin{observation}
We take the Hilbert space completion of $\Fock(H)^0$ with respect to the norm $\norm{\cdot}_Q$ associated with the following inner-product: \alnn{
\inner{\doubledot}_Q = \sum_{k} \inner{\doubledot}_{H^{\tsr k}}\label{eq_identity_fock_innerproduct}.
}
Then by construction, $(\Fock(H), Q)$ being the completion of $\Fock(H)^0$, $\Fock(H)^0$ is dense in $(\Fock(H), Q)$. Specifically, $v=(v_m)_{m\in \NN}\in (\Fock(H), Q)$ means $\sum_m \inner{v_m, v_m}_{H^{\tsr m}} <\infty$ and, therefore, $v$ can be approximated arbitrarily well by  elements from finite particle space. Using $\dim H^{\tsr k} = d^k$, the matrix representation for $\inner{\doubledot}_Q \big|_{\Fock_N(H)}$ is given by $\txt{Diag}(1,\dots,Q_k, \dots Q_N)$ where  $Q_k:=\one_{d^{k}\times d^{k}}$.
\end{observation}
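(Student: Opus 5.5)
The plan is to identify the abstract completion of $\Fock(H)^0$ with a concrete sequence space and then read off each claim in the Observation. Let
\[
\ell^2_Q := \Bigl\{ v=(v_m)_{m\in\NN} : v_m\in H^{\tsr m},\ \sum_m \inner{v_m,v_m}_{H^{\tsr m}}<\infty\Bigr\},
\]
equipped with $\inner{u,v}_Q=\sum_k\inner{u_k,v_k}_{H^{\tsr k}}$.

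First I would check that $\inner{\doubledot}_Q$ is a genuine inner product on $\ell^2_Q$. The series converges absolutely by Cauchy--Schwarz in each $H^{\tsr k}$ followed by Cauchy--Schwarz in $\ell^2(\NN)$. Positive definiteness follows because each summand is nonnegative and vanishes only when $v_k=0$. Clearly $\Fock(H)^0\subset\ell^2_Q$, and on $\Fock(H)^0$ the sum is finite and agrees with \eqref{eq_identity_fock_innerproduct}.

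Second, I would show that $\ell^2_Q$ is complete. This is the standard argument for $\ell^2$-direct sums of Hilbert spaces. Take a Cauchy sequence $(v^{(j)})$.
\begin{itemize}
\item For each fixed $m$, the components $v^{(j)}_m$ form a Cauchy sequence in the finite-dimensional, hence complete, space $H^{\tsr m}$, so they converge to some $v_m$.
\item Fix $\eps>0$ and $N_\eps$ such that $\sum_{m\le M}\norm{v^{(j)}_m-v^{(i)}_m}^2<\eps^2$ for all $i,j\ge N_\eps$ and every $M$.
\item Let $i\to\infty$ and then $M\to\infty$. This gives $v=(v_m)\in\ell^2_Q$ and $\norm{v^{(j)}-v}_Q\le\eps$ for $j\ge N_\eps$.
\end{itemize}

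Third, I would prove that $\Fock(H)^0$ is dense in $\ell^2_Q$. For $v\in\ell^2_Q$ we have $\Jc_k\Pc_k v\in\Fock_k(H)\subset\Fock(H)^0$, and
\[
\norm{v-\Jc_k\Pc_k v}_Q^2=\sum_{m>k}\norm{v_m}^2 .
\]
The right-hand side tends to $0$ because it is the tail of a convergent series. By uniqueness of the Hilbert space completion up to isometric isomorphism fixing $\Fock(H)^0$, we may identify $(\Fock(H),Q)$ with $\ell^2_Q$. This simultaneously gives both assertions: elements of the completion are exactly the sequences with $\sum_m\inner{v_m,v_m}<\infty$, and each such element is approximated by its finite-particle truncations $\Pc_k v$.

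Fourth, I would compute the matrix representation. Choose an orthonormal basis $(h_i)$ of $H$. Then $\{h_{i_1}\tsr\cdots\tsr h_{i_k}\}$ is an orthonormal basis of $H^{\tsr k}$, of size $d^k$, since the tensor inner product factorizes:
\[
\inner{h_{i_1}\tsr\cdots\tsr h_{i_k},\,h_{j_1}\tsr\cdots\tsr h_{j_k}}=\prod_l\delta_{i_lj_l}.
\]
The spaces $H^{\tsr k}$ are mutually $Q$-orthogonal inside $\Fock_N(H)$. Hence the Gram matrix of $Q|_{\Fock_N(H)}$ in the concatenated basis is block diagonal, $\txt{Diag}(Q_1,\dots,Q_N)$ with $Q_k=\one_{d^k\times d^k}$. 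The leading $1$ is just the $k=1$ block when $d=1$, or the dropped $H^{\tsr 0}$ block.

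As a remark useful later, for $\norm{\phi_0}<1$ we have $\norm{\FExp(\phi_0)}_Q^2=\sum_n\norm{\phi_0}^{2n}<\infty$, so $\FExp(\phi_0)\in\ell^2_Q$.

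I expect the only nontrivial point to be the identification of the abstract completion with $\ell^2_Q$, namely proving completeness and invoking uniqueness of completions. Everything else is direct computation.
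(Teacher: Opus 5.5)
Your proposal is correct and follows the same route as the paper. Density of $\Fock(H)^0$ is built into the completion, elements are square-summable sequences $(v_m)$ approximated by their truncations $\Jc_k\Pc_k v$, and an orthonormal product basis of $H^{\tsr k}$ makes each block $Q_k$ the identity; the paper simply asserts these facts ``by construction,'' while you also write out the standard completeness argument for the $\ell^2$-direct sum and the uniqueness-of-completion identification that justify them.
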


\noindent We will consider the Carleman linearization (equations~\eqref{eq_infinite_carleman_expansion}, \eqref{eq_finite_carleman_expansion}) on the Hilbert space $(\Fock(H), Q)$.

\begin{prop} Suppose $\phi\in H, \norm{\phi}<1$. Then $\FExp(\phi)\in \Dom(\CarlemanOp{W_1,W_2})$.
\end{prop}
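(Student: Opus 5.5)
We need to show two things: first, that $\FExp(\phi)$ lies in $(\Fock(H), Q)$; second, that $\CarlemanOp{W_1,W_2}\FExp(\phi)$ also lies in $(\Fock(H),Q)$. The natural domain here is the maximal one, $\Dom(\CarlemanOp{W_1,W_2}) = \{v \in \Fock(H): \CarlemanOp{W_1,W_2}v\in \Fock(H)\}$, with the action defined componentwise. Equivalently, one can take the closure of the operator defined on $\Fock(H)^0$; I will handle that case at the end.

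\textbf{Membership in $\Fock(H)$.} Since $\norm{\phi^{\tsr n}}_{H^{\tsr n}} = \norm{\phi}^n$, we have $\norm{\FExp(\phi)}_Q^2 = \sum_n \norm{\phi}^{2n}$. This is a geometric series, so it converges for $\norm{\phi}<1$.

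\textbf{Image in $\Fock(H)$.} Componentwise,
\[
\Tc_n\CarlemanOp{W_1,W_2}\FExp(\phi) = \SymmSum_n(W_1)\phi^{\tsr n} + \SymmSum_n(W_2)\phi^{\tsr n+1}.
\]
Because $H$ is finite-dimensional, $W_1$ and $W_2$ are bounded. Each term $\one^{\tsr i}\tsr W_1\tsr \one^{\tsr n-i-1}$ has norm $\norm{W_1}$, so $\norm{\SymmSum_n(W_1)\phi^{\tsr n}}\le n\norm{W_1}\norm{\phi}^n$. In the same way, $\norm{\SymmSum_n(W_2)\phi^{\tsr n+1}}\le n\norm{W_2}\norm{\phi}^{n+1}$. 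Therefore
\[
\norm{\CarlemanOp{W_1,W_2}\FExp(\phi)}_Q^2 \le \sum_n n^2\left(\norm{W_1}+\norm{W_2}\right)^2\norm{\phi}^{2n} < \infty,
\]
since $\sum n^2 r^{2n}$ converges for $r<1$ (ratio test).

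\textbf{Closure-defined domain.} If the domain is instead the closure from $\Fock(H)^0$, I would use the truncations $v_N = \Jc_N\FExp_N(\phi)\in\Fock(H)^0$. Then $v_N\to\FExp(\phi)$ in $Q$-norm, since the tail is $\sum_{n>N}\norm{\phi}^{2n}\to 0$. Also, $\CarlemanOp{W_1,W_2}v_N$ agrees with $\CarlemanOp{W_1,W_2}\FExp(\phi)$ in the components $n<N$. It differs only in components $N$ and above, where the difference is bounded by a tail of the same convergent series $\sum n^2 r^{2n}$ plus a single term $N\norm{W_2}\norm{\phi}^{N+1}\to0$. So $\CarlemanOp{W_1,W_2}v_N$ is Cauchy and converges to $\CarlemanOp{W_1,W_2}\FExp(\phi)$. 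By closedness, $\FExp(\phi)$ lies in the domain of the closure. Closability itself should follow from Property~\ref{property_closability}: the formal adjoint, which is lower bidiagonal, is defined on $\Fock(H)^0$ and hence densely defined.

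\textbf{Main obstacle.} The estimates themselves are routine. The only real difficulty is fixing which domain is meant. The polynomial growth factor $n$ from the symmetrization is harmless against the geometric decay $\norm{\phi}^n$, and that is exactly where the hypothesis $\norm{\phi}<1$ is used.
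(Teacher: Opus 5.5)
Your proof is correct and follows the paper's argument. You bound the $n$-th component of $\CarlemanOp{W_1,W_2}\FExp(\phi)$ by $n(\norm{W_1}+\norm{W_2})\norm{\phi}^n$ using boundedness of the $W_i$ on the finite-dimensional $H$, then sum against the geometric decay; the paper sums component norms via the triangle inequality while you bound the squared $Q$-norm directly, which is immaterial. Your extra truncation argument, showing that $\FExp(\phi)$ also lies in the domain of the closure of the operator defined on $\Fock(H)^0$, is a sound addition that the paper leaves implicit, even though it identifies $\CarlemanOp{W_1,W_2}$ with its closure immediately afterwards.
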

\begin{proof} Using $\sum_k |x|^k$ converges uniformly for $|x|<1$, $\FExp(\phi)\in L^2(\Fock(H))$ since $
\norm{\FExp(\phi)} \leq  \sum_k \norm{\phi^{\tsr k}}\leq \sum_k \norm{\phi}^k < \infty$.
Now $\Tc_k (\CarlemanOp{W_1,W_2}\FExp(\phi)) = \Sc_k(W_1)\phi^{\tsr k} + \Sc_k(W_2)\phi^{\tsr k + 1}$.  Therefore,
\aln{
\norm{\Tc_k (\CarlemanOp{W_1,W_2}\FExp(\phi)} &\leq k\norm{W_1}\norm{\phi}^{\tsr k}  + k\norm{W_2}\norm{\phi}^{\tsr k+1}\\
&\leq \norm{\phi} \left[\norm{W_1}+\norm{W_2}\right] k \norm{\phi}^{k-1},
}
where we used $\norm{\phi}<1$. Since the uniform convergence of $\sum_k |\phi|^k$ for $|\phi|<1$ implies that term-by-term differentiated series $\sum_k k|\phi|^{k-1}$ also converges, we have \aln{
\norm{\CarlemanOp{W_1,W_2}\FExp(\phi)} &\leq  \norm{\phi} \left[\norm{W_1}+\norm{W_2}\right] \sum_k k \norm{\phi}^{k-1} < \infty.
}
\end{proof}

\begin{remark}\label{rem_adapted_system}
In analogy with the usual boson Fock space, one ideally wants to consider $\inner{\doubledot}_Q = \sum_{k} k!\inner{\doubledot}_{H^{\tsr k}}$. However, this dependence on $k$ gets in the way when trying to reduce dissipativity for $\CarlemanOp{W_1, W_2}$ to that of a finite level $\CarlemanOp{W_1, W_2}_N$. At the same time, we note that it can simplify checking analytic properties, and can prove very useful. The exponential map $\FExp:H\to \Fock(H)$ is only defined on the interior of the unit ball unlike on the usual Fock space (see, for example, \cite{meyer1995quantum}). However, the constraint $\norm{\phi_0}  < 1$ can be removed by a non-dimensionalization type rescaling. To see this, note that for any $M > \norm{\phi_0}$, by the multilinearity of $W_2$, $\psi = \phi/M$ satisfies
\alnn{\label{eq_adapted_system}
M\psi'= M W_1\psi + M^2 W_2\psi^{\tsr 2}, \psi(0) = \phi_0/M
}
with $\norm{\psi(0)} < 1$, and now one analyzes this adapted system where $W_2$ is replaced by $MW_2$.
\end{remark}

Now because $H$ is finite-dimensional, $W_i$'s are bounded. Although the Carleman operator $\CarlemanOp{W_1,W_2}$ may be unbounded, it is closeable as established next. We will identify $\CarlemanOp{W_1,W_2}$ with its closure $\closure{\CarlemanOp{W_1,W_2}}$.

\begin{prop}\label{prop_carleman_closable}  For bounded operators, $W_i$, $\CarlemanOp{W_1,W_2}$ is densely defined and closable on Hilbert space $(\Fock(H), Q)$.
\end{prop}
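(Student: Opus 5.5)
The natural domain to take is $\Dom(\CarlemanOp{W_1,W_2}) \supseteq \Fock(H)^0$, the finite-particle vectors. On $\Fock(H)^0$ the operator is well defined: for $v\in\Fock_{N_v}(H)$ the image $\CarlemanOp{W_1,W_2}v$ has nonzero components only in levels $k\le N_v$. Each component $\Sc_k(W_1)v_k+\Sc_k(W_2)v_{k+1}$ is a finite sum of bounded operators applied to finitely many vectors, so the image again lies in $\Fock(H)^0\subset(\Fock(H),Q)$. Density of $\Fock(H)^0$ in $(\Fock(H),Q)$ is exactly the content of the preceding Observation, since $(\Fock(H),Q)$ was defined as its completion. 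This settles dense definedness.

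For closability I will use Property~\ref{property_closability}: it suffices to show that $\CarlemanOp{W_1,W_2}^*$ is densely defined. I will show that $\Fock(H)^0\subset\Dom(\CarlemanOp{W_1,W_2}^*)$. Fix $\phi\in\Fock_M(H)$. For $\psi\in\Fock(H)^0$, the pairing $\inner{\CarlemanOp{W_1,W_2}\psi,\phi}_Q$ only involves levels $k\le M$. In levels $k\le M$, the vector $\CarlemanOp{W_1,W_2}\psi$ depends only on $\psi_1,\dots,\psi_{M+1}$, so
\begin{align*}
\inner{\CarlemanOp{W_1,W_2}\psi,\phi}_Q=\sum_{k\le M}\Big(\inner{\psi_k,\Sc_k(W_1)^*\phi_k}+\inner{\psi_{k+1},\Sc_k(W_2)^*\phi_k}\Big).
\end{align*}
Using $\norm{\Sc_k(W_i)}\le k\norm{W_i}$ and Cauchy--Schwarz, this gives
\begin{align*}
|\inner{\CarlemanOp{W_1,W_2}\psi,\phi}_Q|\le C_\phi\norm{\psi}_Q,\qquad C_\phi=M(\norm{W_1}+\norm{W_2})\Big(\sum_{k\le M}\norm{\phi_k}^2\Big)^{1/2}\cdot c
\end{align*}
for a harmless constant $c$. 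By the characterization in Property~\ref{property_closability}, $\phi\in\Dom(\CarlemanOp{W_1,W_2}^*)$. Concretely, the adjoint acts as the finite lower-bidiagonal matrix with blocks $\Sc_k(W_1)^*$ and $\Sc_{k}(W_2)^*$, which maps $\Fock_M(H)$ into $\Fock_{M+1}(H)$. Hence $\Dom(\CarlemanOp{W_1,W_2}^*)\supseteq\Fock(H)^0$ is dense, and $\CarlemanOp{W_1,W_2}$ is closable.

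There is a subtlety to keep in mind: if one prefers a larger domain, such as all $v$ with $\CarlemanOp{W_1,W_2}v\in\Fock(H)$, which is needed to include $\FExp(\phi)$ as in the previous proposition, the same adjoint argument goes through. The point is that for finitely supported $\phi$ the pairing still truncates to finitely many levels, with convergence of the defining sums justified componentwise. The main care point is this truncation, namely verifying that the upper-bidiagonal structure makes the pairing against finite-particle $\phi$ depend on only finitely many levels of $\psi$. Once this is checked, the rest is bookkeeping of the bound $\norm{\Sc_k(A)}\le k\norm{A}$.
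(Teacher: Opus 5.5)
Your proposal is correct and follows essentially the same route as the paper. Density of $\Fock(H)^0$ gives dense definedness, and the upper-bidiagonal structure makes the pairing against a finite-particle vector a finite sum, so $\Fock(H)^0\subset\Dom$ of the adjoint and Property~\ref{property_closability} yields closability. Your componentwise bound via $\norm{\SymmSum_k(W_i)}\le k\norm{W_i}$ is a cleaner version of the paper's estimate, which lumps the level-$(N+1)$ contribution into the term $\inner{\SymmSum_N(W_2)u,\Tc_N v}$.
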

\begin{proof} Since $(\Fock(H),Q)$ is the closure of $\Fock(H)^0:=\cup_N \Fock_N(H)$, and on each $\Fock_N(H)$, $\CarlemanOp{W_1,W_2}=\CarlemanOp{W_1,W_2}_N$ is a bounded operator, $\CarlemanOp{W_1,W_2}$ is densely defined.  We will use property~\ref{property_closability} to verify closability. Using $v \in \Dom(A^*)$ iff $|\inner{Au, v}|\leq C_v\norm{u}$ for all $u\in \Dom(A)$, it is enough to get a bound of this type for each $v\in \Fock(H)^0$.
For $u\in \Dom(\CarlemanOp{W_1,W_2})$ and any $v\in \Fock_N(H)$, $i>N$ means $\inner{\Tc_i\CarlemanOp{W_1,W_2}u, v_i}=0$, therefore, \aln{
\inner{\CarlemanOp{W_1,W_2}u, v} &= 
\inner{\CarlemanOp{W_1,W_2}_N u, v }_{\Fock} +  \inner{\SymmSum_N(W_2)u, \Tc_N v }_{H^{\tsr N}} \\
&\leq \norm{\CarlemanOp{W_1,W_2}_N + \SymmSum_N(W_2)}_N\norm{v} \norm{u}.
}
So the choice $C_v:=\norm{\CarlemanOp{W_1,W_2}_N + \SymmSum_N(W_2)}_N\norm{v}$ works. Finally, since $\Fock(H)^0$ is dense, $\CarlemanOp{W_1,W_2}^*$ is densely defined, making $\CarlemanOp{W_1,W_2}$ closable.
\end{proof}

\subsection{Fock space dissipativity for $\CarlemanOp{W_1,W_2}_N$}

To show dissipativity of the operator $\CarlemanOp{W_1,W_2}$ on the Fock space, $(\Fock(H), Q)$, the idea is to write $\CarlemanOp{W_1,W_2}$ as a sum of operators whose dissipativity can be established by the recursive structure of the Carleman linearization. This will involve adjoints with respect to the $Q$-inner-product restricted to finite particle spaces. While $Q$ is taken to be trivial when dealing with initial conditions $\phi_0$ in the unit ball $\norm{\phi_0}<1$, it will be useful to establish the following connection with the usual inner-product on the Euclidean space, $\inner{\doubledot}_{E}$, for general $Q$. We note the following elementary fact.

\begin{property}

For any diagonal positive definite matrix, $P\in \RR^{N\times N}$, defining the inner-product on $\CC^N$, $\inner{u, v}_P = \bar u^T P v$, the adjoint $A^*$ for any operator $A$ (in matrix representation with respect to Euclidean basis) can be computed
as $A^{*_P} = P^{-1}\bar A^TP $

where we use $A^{*_P}$ to denote adjoint with respect to the $P$ inner-product explicitly, $\bar A^T$ being the usual adjoint $A^\dagger$ on $\CC^N$ with respect to Euclidean inner-product. Therefore, positivity of $A+A^*$ with respect to $P$-inner-product reduces to positivity with respect to $\inner{\doubledot}_E$ since \aln{
\inner{u, (A + A^*)v}_P = \bar u^T P(P^{-1}A^\dagger P + A) v = \bar u^T(A^\dagger P + PA)v = \inner{u, (A^\dagger P + PA)v}_E .
}
\end{property}

\noindent We will only be interested in real valued inner-products $P:= \txt{Diag}(\dots Q_k\dots Q_N)$ where $Q_k$'s are simply identity.
Next the following computations help illustrate $\CarlemanOp{W_1,W_2}$'s recursive structure.

\begin{prop}\label{prop_positivity_under_left_identity_tensor} For Hilbert spaces $U, V, W$, suppose $A\in \Bc(U\oplus V)$ given by the block matrix \aln{
A = \pmat{A_{11} & A_{12}\\ A_{21} & A_{22} } \in \Bc(U\oplus V),} where $A_{ij}\in \Bc(U_i, U_j)$ with $U_1 = U, U_2 = V$, is such that $A\leq 0$, that is, for all $u\in U, v\in V$, \aln{
&\pmat{ u \\ v }^*\pmat{A_{11} & A_{12}\\ A_{21} & A_{22} }\pmat{ u \\ v } = u^* A_{11}u   +  u^* A_{12}v + v^* A_{21}u  +   v^* A_{22}v \leq 0, \\
&\hspace{2mm}\txt{where } \pmat{ u \\ v }^* = \pmat{ u^* \\ v^* }^T,
}
and ${}^*$ is the Hilbert space adjoint, then the operator $A_W\in \Bc(W\tsr U\oplus W\tsr V)$, \aln{A_W:=
\pmat{\one \tsr A_{11} &  \one \tsr A_{12}\\ \one \tsr A_{21} & \one \tsr A_{22} },\ A_W \leq 0.
}
\end{prop}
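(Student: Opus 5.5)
The plan is to reduce the claim to the fact that $B\le 0$ implies $\one\tsr B\le 0$ on $W\tsr(U\oplus V)$, using the identification $W\tsr U\oplus W\tsr V\cong W\tsr(U\oplus V)$. Under the canonical unitary $\Phi: W\tsr U\oplus W\tsr V\to W\tsr(U\oplus V)$, given on elementary tensors by $\Phi(w\tsr u, w'\tsr v) = w\tsr(u,0)+w'\tsr(0,v)$, the block operator $A_W$ is carried to $\one\tsr A$. This follows by checking on elementary tensors: $(\one\tsr A)(w\tsr(u,0)) = w\tsr(A_{11}u, A_{21}u)$, which matches the first column of $A_W$, and similarly for the second column. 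Since $\Phi$ is unitary, the quadratic forms of $A_W$ and $\one\tsr A$ agree, so it suffices to prove $\one\tsr A\le 0$.

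For $\one\tsr A\le 0$, I would first reduce to the self-adjoint part. Since $\inner{x, Ax}$ has nonpositive real part (or is real and $\le 0$ in the real case, as the statement is written), $A+A^*\le 0$ in the usual operator sense, and $(\one\tsr A)+(\one\tsr A)^* = \one\tsr(A+A^*)$. So it is enough to treat a self-adjoint $B:=A+A^*\le 0$, or $A$ itself if the hypothesis is read as $A$ being self-adjoint. For self-adjoint $B\le 0$, I would write $-B = C^*C$ with $C = (-B)^{1/2}$. Then $-(\one\tsr B) = (\one\tsr C)^*(\one\tsr C)\ge 0$, which gives the claim immediately.

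To avoid functional calculus, an alternative works in the finite-dimensional setting the paper uses. Fix an orthonormal basis $(e_i)$ of $W$ and write a general $x\in W\tsr(U\oplus V)$ as $x=\sum_i e_i\tsr x_i$ with $x_i\in U\oplus V$. Then
\[\inner{x,(\one\tsr A)x} = \sum_{i,j}\inner{e_i,e_j}\inner{x_i,Ax_j} = \sum_i \inner{x_i, Ax_i}\le 0.\]
The cross terms vanish by orthonormality, and each diagonal term is $\le 0$ by hypothesis. This direct computation is what I would actually write. The infinite-dimensional case follows from the same computation applied to finite sums, which are dense, together with continuity of the bounded form.

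The main obstacle is the bookkeeping. One must verify carefully that $A_W$ really is unitarily equivalent to $\one\tsr A$; in particular, the ordering $\one\tsr A_{ij}$, with $W$ on the left, must be consistent with the decomposition $x=\sum_i e_i\tsr x_i$. One must also avoid wrongly summing a general vector of $W\tsr U$ as a single elementary tensor $w\tsr u$. The hypothesis quantifies over elementary pairs $(u,v)$, whereas a general $x$ in $W\tsr U\oplus W\tsr V$ is not of the form $(w\tsr u, w\tsr v)$. The basis expansion handles exactly this point, since the same $e_i$ index both components: $x=(\sum_i e_i\tsr u_i,\sum_i e_i\tsr v_i)$ with $x_i=(u_i,v_i)$.
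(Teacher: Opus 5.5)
Your proposal is correct, and the computation you say you would actually write is exactly the paper's proof: expand $x=\sum_i e_i\tsr x_i$ over an orthonormal basis of $W$, with the same index in both components, and use $\inner{e_i,e_j}=\delta_{ij}$ to eliminate the cross terms. Your identification of $A_W$ with $\one\tsr A$ and the square-root argument are valid alternatives, though neither is needed.
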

\begin{proof} Fix an orthonormal basis $(w_i)$ for $W$. We need to check that for any $w \in W\tsr U \oplus W\tsr V$, \aln{
w:= \pmat{ \sum_i w_i u_i\\ \sum_j w_i v_j } 
= \sum_i \pmat{ w_i u_i \\ w_i v_i }, \ w^* A_W w \leq 0
} holds. We have  \aln{
 w^* A_W w &= \sum_{ij} \pmat{ w_i \tsr u_i \\ w_i\tsr v_i }^*  \pmat{\one \tsr A_{11} & \one \tsr A_{12}\\ \one \tsr A_{21} & \one \tsr A_{22} } \pmat{ w_j \tsr u_j \\ w_j \tsr v_j } \\
 &=  \sum_{ij} \pmat{ w_i \tsr u_i \\ w_i\tsr v_i }^*  \pmat{w_j\tsr A_{11}u_j + w_j\tsr A_{12}v_j\\ w_j\tsr A_{21}u_j + w_j\tsr A_{22} v_j } \\
 &= \sum_{ij} \inner{w_i, w_j} \left( u_i^* A_{11}u_j   +  u_i^* A_{12}v_j + v_i^* A_{21}u_j  +   v_i^* A_{22}v_j \right)\\
 &= \sum_{i} \left( u_i^* A_{11}u_i   +  u_i^* A_{12}v_i + v_i^* A_{21}u_i  +   v_i^* A_{22}v_i \right)
 \leq 0,
}
where we used $ \inner{w_i, w_j}=\delta_{ij}$ by orthonormality of $w_i$'s and that $A\leq 0$ for the last inequality.
\end{proof}

\begin{lemma}\label{lemma_dissipativity_at_level0}  For any operator $A$ as in Proposition~\ref{prop_positivity_under_left_identity_tensor}, $A\leq 0$ implies \aln{
\omega_k:=\pmat{\SymmSum_k(A_{11}) & \SymmSum_k(A_{12})\\ \SymmSum_k(A_{21}) & \SymmSum_k(A_{22})} \leq 0 .
}
\end{lemma}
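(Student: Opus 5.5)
The plan is to split $\omega_k$ according to the position of the operator inside the symmetrized sum, and to show that each piece is nonpositive using Proposition~\ref{prop_positivity_under_left_identity_tensor}. Since $\SymmSum_k$ is linear and $\SymmSum_k(B)=\sum_{j\in[1:k]}\SymmSum^j_k(B)$ for each block $B=A_{ab}$, we can write
\aln{
\omega_k=\sum_{j\in[1:k]}\omega_k^j,\qquad \omega_k^j:=\pmat{\SymmSum^j_k(A_{11}) & \SymmSum^j_k(A_{12})\\ \SymmSum^j_k(A_{21}) & \SymmSum^j_k(A_{22})}.
}
A finite sum of operators with $\inner{x,\cdot\,x}\le 0$ again satisfies this. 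It therefore suffices to prove $\omega_k^j\le 0$ for each fixed $j$.

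Fix $j$. In $\omega_k^j$ every block is $\one^{\tsr j-1}\tsr A_{ab}\tsr\one^{\tsr k-j}$, with the same identity factors throughout. Let $W_L:=H^{\tsr j-1}$ and $W_R:=H^{\tsr k-j}$ be the spaces carrying these identities. I will use the unitary flip $\sigma$ that moves the $W_R$ factor to the front, $W_L\tsr X\tsr W_R\to (W_L\tsr W_R)\tsr X$ for $X=U,V$, applied blockwise on $W_L\tsr U\tsr W_R\oplus W_L\tsr V\tsr W_R$. Conjugating by it gives
\aln{
\sigma\,\omega_k^j\,\sigma^{*}=\pmat{\one_W\tsr A_{11} & \one_W\tsr A_{12}\\ \one_W\tsr A_{21} & \one_W\tsr A_{22}}=A_W,\qquad W:=W_L\tsr W_R.
}
Proposition~\ref{prop_positivity_under_left_identity_tensor} gives $A_W\le 0$. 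Since $\sigma$ is unitary, $\inner{x,\omega_k^j x}=\inner{\sigma x, A_W\,\sigma x}\le 0$ for every $x$, so $\omega_k^j\le 0$.

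Alternatively, one can avoid the flip by repeating the proof of Proposition~\ref{prop_positivity_under_left_identity_tensor} directly. Expand $x$ in a product orthonormal basis $(a_p\tsr b_q)$ of $W_L\tsr W_R$ with coefficient vectors in $U$ and $V$. Orthonormality kills the cross terms, and the quadratic form becomes $\sum_{p,q}(\cdot)^*A(\cdot)\le 0$.

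The main obstacle is bookkeeping rather than analysis. The step needing care is making precise what spaces $\omega_k$ acts on when $A_{12}$ and $A_{21}$ map between different spaces, for instance $U=H^{\tsr 1}$ and $V=H^{\tsr 2}$ in the Carleman application. The convention should be that $\SymmSum^j_k(A_{ab})$ inserts $A_{ab}$ in the $j$-th slot between the same $j-1$ and $k-j$ identity factors in every block. With that convention all blocks of $\omega^j_k$ share the same ambient factors $W_L$, $W_R$, and the flip above is well defined. I would state this convention explicitly before carrying out the decomposition.
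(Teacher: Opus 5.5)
Your proposal is correct and follows the paper's route: split $\omega_k=\sum_j\omega_k^j$ according to the slot in which $A_{ab}$ sits, then apply Proposition~\ref{prop_positivity_under_left_identity_tensor} to each piece. The only difference is how the identity factors to the right of $A_{ab}$ are handled. The paper pulls them out of the block matrix via Kronecker-product identities, writing each piece as a block matrix with only left identity factors tensored with $\one^{\tsr i}$, and tacitly uses that $X\le 0$ implies $X\tsr\one\le 0$; your unitary factor permutation instead moves them to the left so the proposition applies verbatim, which makes that step explicit.
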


\begin{proof} We have \aln{
\omega_k &= \pmat{
\sum_i \SymmSum_k^i(A_{11}) &\sum_j \SymmSum_k^j(A_{12})  \\
  \sum_l\SymmSum_k^l(A_{21}) & \sum_m\SymmSum_k^m(A_{22})
} = \sum_{i\in [1:k]} \pmat{
\SymmSum_k^i(A_{11}) &\SymmSum_k^i(A_{12})  \\
  \SymmSum_k^i(A_{21}) & \SymmSum_k^i(A_{22})
} \\
&= \sum_i \pmat{ I_{i,k} \tsr A_{11} \tsr \one^{\tsr i} & I_{i,k}  \tsr A_{12} \tsr \one^{\tsr i}\\ I_{i,k}  \tsr A_{21} \tsr \one^{\tsr i} & I_{i,k}  \tsr A_{22} \tsr \one^{\tsr i} } \text{ with } I_{i,k} =\one^{\tsr(k-i-1)} \\ ,
&= \sum_i \pmat{ I_{i,k} \tsr A_{11}  & I_{i,k}  \tsr A_{12} \\
I_{i,k}  \tsr A_{21}  & I_{i,k}  \tsr A_{22}
} \tsr \one^{\tsr i} := \sum_i \omega^i_k\tsr \one^{\tsr i},
}
where to factor out $\one^{\tsr i}$ we used properties of Kronecker products.
Then each $\omega_k^i \leq 0$ by Proposition~ \ref{prop_positivity_under_left_identity_tensor} and the claim follows.
\end{proof}

\begin{remark} Lemma~\ref{lemma_dissipativity_at_level0} holds when the $A_{ij}$ are bounded operators on a possibly infinite-dimensional space. If $A_{ij}$ are unbounded, then closability is needed.
\end{remark}

Consider the truncated Carleman operator $\CarlemanOp{W_1,W_2}_N$. We will work with respect to a basis for $H$, with ${}^*$ denoting the abstract Hilbert space adjoint, either the standard Euclidean or the adjoint with respect to the matrix for the inner-product. Then $\CarlemanOp{W_1,W_2}_N$ is bounded, with finite dimensional support on which the inner-product is defined by a restriction of $Q$ with matrix representation, $P_{:N}=\oplus_{i\leq N} Q_N$. For dissipativity, we need the bound $$
\Re \inner{\CarlemanOp{W_1,W_2}_N v, v} = \smfrac{1}{2}\left(\inner{\CarlemanOp{W_1,W_2}_N v, v} +  \inner{\CarlemanOp{W_1,W_2}_N^* v, v}\right)\leq 0.
$$

\begin{observation}The Carleman operator $\CarlemanOp{W_1,W_2}_N$ can be decomposed into blocks as follows: \alnn{
&\pmat{
 \cviolet{\frac{1}{2}\SymmSum_1(W_1)} +  \frac{1}{2}\SymmSum_1(W_1) &\SymmSum_1(W_2) &\dots\\
0 & \frac{1}{2}\SymmSum_2(W_1)+\cred{\frac{1}{2}\SymmSum_2(W_1)} &\cred{\SymmSum_2(W_2)} &\dots  \\
0 &0 &\cred{\frac{1}{2}\SymmSum_3(W_1)} + \dots\\
0 &\dots &\dots  \\
0 &\dots &0 &\cblue{
\dots
} \frac{1}{2}\SymmSum_N(W_1),
}, \nonumber\\
 &\hspace{0.5in}\CarlemanOp{W_1,W_2}_N =  \frac{1}{2}\SymmSum_1(W_1) + \frac{1}{2}\SymmSum_N(W_1) + \sum_{k=1}^{N-1} \pmat{
\ddots &0 & \\
0 &Z_k  &0\\
 &0 &\ddots
},\label{eq_tiling_lower_diag_non_zero} \\
\nonumber
&\hspace{1in}\text{where } Z_k := {\pmat{\frac{1}{2}\SymmSum_k(W_1) &\SymmSum_k(W_2) \\ 0 &\frac{1}{2}\SymmSum_{k+1}(W_1)}_{d^k+d^{k+1}\times d^k+d^{k+1}}},
}
and for the projection $\Tc_k + \Tc_{k+1}:\Fock(H)\to H_k \oplus H_{k+1}$ and corresponding inclusions by zero-padding $\Ic_{k},\Ic_{k:k+1}:H^{\tsr k}, H^{\tsr k}\oplus H^{\tsr k+1} \to \Fock(H)$, $Z_k$ has been identified with $\Ic_{k, k+1}\circ Z_k\circ (\Tc_{k} + \Tc_{k+1})$ and $\SymmSum_k(W_1)$ with $\Ic_k \circ \SymmSum_k(W_1)\circ \Tc_k$.
We will analyze the positivity of each term in the sum\aln{ \CarlemanOp{W_1, W_2}_N + \CarlemanOp{W_1, W_2}_N^* = \SymmSum_1(W_S) + \SymmSum_N(W_S) + \sum_k\left(Z_k + Z_k^*\right),
}
with $W_S = \smfrac{1}{2}\left(W_1^*+W_1\right).$
Writing $Z_k^*$ with respect to the inner-product $P_{k:k+1}$ (which is identity since $c(n)=1$) yields \alnn{
\tau_k(u):&=\inner{u, \left[Z_k + P^{-1}_{k:k+1} Z^*_k P_{k:k+1}\right]u}_{P_{k:k+1}}= u^* \left[Z_k^* P_{k:k+1} + P_{k:k+1} Z_k \right]u \nonumber \\
\tau_k(u)&= \pmat{u_k \\ u_{k+1}}^* \left[  \pmat{
\frac{1}{2}\SymmSum_k(W_1^*) &0\\
 \SymmSum_k(W_2^*) &\frac{1}{2}\SymmSum_{k+1}(W_1^*) \\
} \right. \nonumber \\
&\hspace{10mm} \left. + \pmat{
\frac{1}{2}\SymmSum_k(W_1) &\SymmSum_k(W_2)\\
 0 &\frac{1}{2}\SymmSum_{k+1}(W_1) \\
} \right]\pmat{u_k \\ u_{k+1}}\nonumber \\
&= \left\langle u, \left[  \pmat{
\SymmSum_k\left(W_S\right) &\SymmSum_k(W_2)\\
 \SymmSum_k(W_2^*) &\SymmSum_{k+1}\left(W_S\right) \\
} \right]  u\right\rangle_E.\label{eq:non_vaninging_tiling}
}

\end{observation}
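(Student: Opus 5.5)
The plan is to verify the two assertions of the Observation separately: first the block decomposition \eqref{eq_tiling_lower_diag_non_zero} of $\CarlemanOp{W_1,W_2}_N$, and then the formula \eqref{eq:non_vaninging_tiling} for the quadratic form $\tau_k$. Throughout, the finite-level space $\Fock_N(H)=\oplus_{n\leq N}H^{\tsr n}$ is finite dimensional. The Gram matrix of $Q$ restricted to it is $P_{:N}=\txt{Diag}(Q_1,\dots,Q_N)$ with every $Q_k$ the identity, so all adjoints may be computed as ordinary conjugate transposes via the elementary Property on $P$-adjoints with $P=\one$.

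For the decomposition, I will compare the two sides of \eqref{eq_tiling_lower_diag_non_zero} block entry by block entry, using the identifications $Z_k\equiv\Ic_{k,k+1}\circ Z_k\circ(\Tc_k+\Tc_{k+1})$ and $\SymmSum_k(W_1)\equiv\Ic_k\circ\SymmSum_k(W_1)\circ\Tc_k$.
\begin{enumerate}
\item \emph{Off-diagonal blocks.} The $(k,k+1)$ block $\SymmSum_k(W_2)$ of $\CarlemanOp{W_1,W_2}_N$ appears in exactly one summand, namely $Z_k$, for $1\leq k\leq N-1$. No summand contributes to blocks $(j,l)$ with $l\notin\{j,j+1\}$, and in particular every lower-triangular block is zero.
\item \emph{Interior diagonal blocks.} For $2\leq k\leq N-1$, the $(k,k)$ block receives $\tfrac12\SymmSum_k(W_1)$ from $Z_{k-1}$ (its lower-right corner) and $\tfrac12\SymmSum_k(W_1)$ from $Z_k$ (its upper-left corner). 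These sum to $\SymmSum_k(W_1)$.
\item \emph{End blocks.} The $(1,1)$ block receives only $\tfrac12\SymmSum_1(W_1)$ from $Z_1$, and the $(N,N)$ block only $\tfrac12\SymmSum_N(W_1)$ from $Z_{N-1}$. The two explicit boundary terms $\tfrac12\SymmSum_1(W_1)$ and $\tfrac12\SymmSum_N(W_1)$ supply the missing halves.
\end{enumerate}
This is pure bookkeeping, and linearity of $A\mapsto \SymmSum_k(A)$ is all that is used.

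For the adjoint identity, I will take adjoints term by term in the decomposition. Since $\SymmSum_k$ commutes with adjoints, as noted after its definition, $\SymmSum_k(W_1)+\SymmSum_k(W_1)^*=2\SymmSum_k(W_S)$ with $W_S=\tfrac12(W_1+W_1^*)$. The two boundary terms therefore give
\[
\tfrac12\bigl(\SymmSum_1(W_1)+\SymmSum_1(W_1)^*\bigr)+\tfrac12\bigl(\SymmSum_N(W_1)+\SymmSum_N(W_1)^*\bigr)=\SymmSum_1(W_S)+\SymmSum_N(W_S).
\]
Because each inclusion--projection pair satisfies $(\Ic_{k,k+1}Z_k(\Tc_k+\Tc_{k+1}))^*=\Ic_{k,k+1}Z_k^*(\Tc_k+\Tc_{k+1})$ for the orthogonal decomposition of $Q$, the adjoint of each embedded $Z_k$ is the embedding of $Z_k^*$. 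This yields the sum formula for $\CarlemanOp{W_1,W_2}_N+\CarlemanOp{W_1,W_2}_N^*$.

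Finally, $\tau_k$ follows from the $P$-adjoint Property with $P_{k:k+1}=\one$. Writing $Z_k^*$ as the block conjugate transpose gives upper-left block $\tfrac12\SymmSum_k(W_1^*)$, lower-left block $\SymmSum_k(W_2)^*=\SymmSum_k(W_2^*)$, and lower-right block $\tfrac12\SymmSum_{k+1}(W_1^*)$. Adding $Z_k$ and collecting the diagonal halves into $\SymmSum_k(W_S)$ and $\SymmSum_{k+1}(W_S)$ produces the Hermitian block matrix in \eqref{eq:non_vaninging_tiling}.

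The only point requiring care, which I expect to be the main obstacle, is the lower-left block. Here $\SymmSum_k(W_2)$ maps $H^{\tsr(k+1)}\to H^{\tsr k}$, so its adjoint maps $H^{\tsr k}\to H^{\tsr(k+1)}$. I will check directly from $\SymmSum_k(W_2)=\sum_j \one^{\tsr(j-1)}\tsr W_2\tsr\one^{\tsr(k-j)}$ that this adjoint equals $\sum_j \one^{\tsr(j-1)}\tsr W_2^*\tsr\one^{\tsr(k-j)}$, which is the meaning of $\SymmSum_k(W_2^*)$ in the formula. This uses $(A\tsr B)^*=A^*\tsr B^*$ and that tensor products of orthonormal bases are orthonormal, as in the proof of Proposition~\ref{prop_positivity_under_left_identity_tensor}.
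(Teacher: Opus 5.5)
Your proposal is correct and follows the same route as the paper. The paper justifies the Observation only through the color-coded tiling, in which each diagonal block $\SymmSum_k(W_1)$ is split into halves shared by $Z_{k-1}$ and $Z_k$ (with the two boundary halves supplied separately), followed by the block conjugate transpose of $Z_k$ with $P_{k:k+1}=\one$. Your write-up additionally makes explicit two facts the paper uses tacitly: that the adjoint of the zero-padded $Z_k$ is the zero-padded $Z_k^*$, and that $\SymmSum_k(W_2)^*=\SymmSum_k(W_2^*)$ as a map $H^{\tsr k}\to H^{\tsr(k+1)}$.
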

\noindent To apply lemma~\ref{lemma_dissipativity_at_level0} we need to write $\SymmSum_{k+1}$ in terms of $\SymmSum_{k}$ for which a recursive expansion is useful.

\begin{prop}\label{prop_symmsum_recursion}  For all $n$, $$  \SymmSum_{n+1}(A) = \frac{1}{2}\left( \SymmSum_n(\SymmSum_2(A)) + (\one^{\tsr n} \tsr A + A \tsr \one^{\tsr n}) \right) $$
with $\SymmSum_2(A)=\one \tsr A + A\tsr \one$.
\end{prop}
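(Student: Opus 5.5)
The plan is to expand both sides as sums of the elementary terms $\SymmSum^p_{n+1}(A)=\one^{\tsr(p-1)}\tsr A\tsr\one^{\tsr(n+1-p)}$, for $p\in[1:n+1]$, and compare the coefficients. First I fix the meaning of $\SymmSum_n(\SymmSum_2(A))$. The operator $B:=\SymmSum_2(A)=A\tsr\one+\one\tsr A$ acts on $H^{\tsr 2}$ and so occupies two consecutive tensor slots. Accordingly, $\SymmSum_n(B)$ is read as $\sum_{j=1}^{n}\one^{\tsr(j-1)}\tsr B\tsr\one^{\tsr(n-j)}$, an operator on $H^{\tsr(n+1)}$. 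This is the only reading under which the two sides of the claimed identity act on the same space.

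Next, for each $j$, I use bilinearity of the Kronecker product to expand
\[
\one^{\tsr(j-1)}\tsr B\tsr\one^{\tsr(n-j)}=\SymmSum^{j}_{n+1}(A)+\SymmSum^{j+1}_{n+1}(A).
\]
Summing over $j\in[1:n]$ then gives $\SymmSum_n(\SymmSum_2(A))=\sum_{j=1}^n\SymmSum^j_{n+1}(A)+\sum_{j=2}^{n+1}\SymmSum^j_{n+1}(A)$.

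The key step is the counting. An interior position $p\in\{2,\dots,n\}$ is covered by two blocks, namely $j=p-1$ and $j=p$. The endpoint positions $p=1$ and $p=n+1$ are each covered by exactly one block. Therefore
\[
\SymmSum_n(\SymmSum_2(A))=2\SymmSum_{n+1}(A)-\SymmSum^1_{n+1}(A)-\SymmSum^{n+1}_{n+1}(A)=2\SymmSum_{n+1}(A)-\left(A\tsr\one^{\tsr n}+\one^{\tsr n}\tsr A\right).
\]
Rearranging and dividing by $2$ yields the stated recursion.

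I should also check the edge case $n=1$. Here $\SymmSum_1(\SymmSum_2(A))=\SymmSum_2(A)$, and the formula reads $\SymmSum_2(A)=\frac12\bigl(\SymmSum_2(A)+\SymmSum_2(A)\bigr)$, which is consistent.

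The only real obstacle is the interpretive point above, namely making precise that $\SymmSum_n$ applied to a two-slot operator places it at consecutive slot pairs. Once that convention is stated, the rest is bookkeeping of overlaps, and no analytic issue arises: for bounded $A$ all operators involved are bounded, and for unbounded $A$ the identity holds on the algebraic tensor product of domains.
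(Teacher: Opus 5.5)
Your proof is correct and is essentially the paper's argument: the paper also writes $\SymmSum_n(\one\tsr A)+\SymmSum_n(A\tsr\one)$ as the two shifted sums $\sum_{k=1}^{n}$ and $\sum_{k=0}^{n-1}$ of single-slot terms $\one^{\tsr k}\tsr A\tsr\one^{\tsr(n-k)}$, which together give $2\SymmSum_{n+1}(A)$ minus the two endpoint terms, exactly as in your overlap count. Your convention that $\SymmSum_n$ places a two-slot operator at consecutive slot pairs is the same one the paper uses implicitly through its formula $\SymmSum_n(X)=\sum_{k=0}^{n-1}\one^{\tsr k}\tsr X\tsr\one^{\tsr (n-1-k)}$.
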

\begin{proof}
The following computation using $\SymmSum_n(X+Y) = \SymmSum_n(X) + \SymmSum_n(Y)$ and $\SymmSum_n(X) = \sum_{k=0}^{n-1} \one^{\tsr k} \tsr X \tsr \one^{\tsr n-1-k}$ verifies the claim, \aln{
\SymmSum_n(\one \tsr A) &+ \SymmSum_n(A\tsr \one) =  \sum_{k=0}^{n-1} \one^{\tsr k+1} \tsr A \tsr \one^{\tsr n-1-k} + \sum_{k=0}^{n-1} \one^{\tsr k} \tsr A \tsr \one^{\tsr n-k} \\
&= \sum_{k=1}^{n} \one^{\tsr k} \tsr A \tsr \one^{\tsr ((n+1)-1-k)} + \sum_{k=0}^{n-1} \one^{\tsr k} \tsr A \tsr \one^{\tsr ((n+1)-1-k)} \\
&= 2\cdot \sum_{k=0}^{n} \one^{\tsr k} \tsr A \tsr \one^{\tsr ((n+1)-1-k)} - \one^{\tsr n} \tsr A - A \tsr \one^{\tsr n} \\
&= 2\cdot \SymmSum_{n+1}(A) - (\one^{\tsr n} \tsr A + A \tsr \one^{\tsr n}).
}
\end{proof}

\begin{theorem}\label{theorem_dissipatinity_nonzero_tiling}  Suppose $W_S \leq 0$ and, in addition, \alnn{\label{eq_hypothesis_k}
\Lambda_1 := Z_1^* + Z_1 = \pmat{
W_S &W_2\\
 W_2^* &\frac{1}{2} \SymmSum_2(W_S) \\
} \leq 0.
}
Then $\CarlemanOp{W_1,W_2}^*_N + \CarlemanOp{W_1,W_2}_N  \leq 0$ for all $N$, so $\CarlemanOp{W_1,W_2}_N, \CarlemanOp{W_1,W_2}_N^*$ are dissipative.
\end{theorem}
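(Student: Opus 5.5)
The plan is to use the tiling decomposition \eqref{eq_tiling_lower_diag_non_zero}:
\[
\CarlemanOp{W_1,W_2}_N + \CarlemanOp{W_1,W_2}_N^* = \SymmSum_1(W_S) + \SymmSum_N(W_S) + \sum_{k=1}^{N-1}\left(Z_k+Z_k^*\right),
\]
where each $Z_k$ is understood as zero-padded via $\Ic_{k,k+1}$ and $\Tc_k+\Tc_{k+1}$. Since $Q$ restricts to the identity on each $\Fock_N(H)$, all adjoints are the Euclidean ones, so it suffices to show that every summand is $\leq 0$. Zero-padding preserves negativity: $\inner{u, \Ic Z \Tc u} = \inner{\Tc u, Z\Tc u}$. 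The two boundary terms are easy. $\SymmSum_1(W_S)=W_S\leq 0$ by hypothesis. For $\SymmSum_N(W_S)=\sum_j \SymmSum_N^j(W_S)$, each summand $\one^{\tsr(j-1)}\tsr W_S\tsr\one^{\tsr(N-j)}$ is $\leq 0$ because tensoring a negative semidefinite operator with identities preserves negativity. This is the one-block case of Proposition~\ref{prop_positivity_under_left_identity_tensor}, applied on both sides.

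The main step is to show $\tau_k\leq 0$, that is,
\[
\Lambda_k := \pmat{\SymmSum_k(W_S) & \SymmSum_k(W_2)\\ \SymmSum_k(W_2^*) & \SymmSum_{k+1}(W_S)}\leq 0
\]
as in \eqref{eq:non_vaninging_tiling}. The idea is to split the lower-right block using Proposition~\ref{prop_symmsum_recursion}:
\[
\SymmSum_{k+1}(W_S) = \smfrac12 \SymmSum_k(\SymmSum_2(W_S)) + \smfrac12\left(\one^{\tsr k}\tsr W_S + W_S\tsr\one^{\tsr k}\right).
\]
This gives $\Lambda_k = \Lambda_k' + \Lambda_k''$, where
\[
\Lambda_k' = \pmat{\SymmSum_k(W_S) & \SymmSum_k(W_2)\\ \SymmSum_k(W_2^*) & \SymmSum_k(\smfrac12\SymmSum_2(W_S))}
\]
and $\Lambda_k''$ is zero except for the lower-right block $\smfrac12(\one^{\tsr k}\tsr W_S + W_S\tsr \one^{\tsr k})$. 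The matrix $\Lambda_k'$ is exactly the $\omega_k$ of Lemma~\ref{lemma_dissipativity_at_level0} built from $A=\Lambda_1$, with $A_{11}=W_S$, $A_{12}=W_2$, $A_{21}=W_2^*$, $A_{22}=\smfrac12\SymmSum_2(W_S)$, $U=H$ and $V=H^{\tsr 2}$. So the hypothesis $\Lambda_1\leq 0$ gives $\Lambda_k'\leq 0$. The term $\Lambda_k''$ is $\leq 0$ because $W_S\leq 0$ and tensoring with identities preserves negativity. Hence $\tau_k\leq 0$.

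The step I expect to need the most care is this application of Lemma~\ref{lemma_dissipativity_at_level0}. One has to check that $\SymmSum_k$ applied to the off-diagonal block $W_2:H^{\tsr 2}\to H$, a map between spaces of different tensor rank, is consistent with the paper's convention. Concretely, $\SymmSum_k^j(W_2)$ must act on $H^{\tsr (k+1)}$ by applying $W_2$ to slots $j,j+1$, so that the identity-padding factorization $\omega_k=\sum_i \omega_k^i\tsr\one^{\tsr i}$ in the lemma's proof is legitimate. One must also check that $\SymmSum_k(\smfrac12\SymmSum_2(W_S))$ is read in the same slot convention, so that the recursion identity of Proposition~\ref{prop_symmsum_recursion} matches the block structure. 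I would verify both by writing the blocks out for $k=2$ and then generalizing.

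Summing the pieces, $\CarlemanOp{W_1,W_2}_N+\CarlemanOp{W_1,W_2}_N^*\leq 0$, so $\Re\inner{\CarlemanOp{W_1,W_2}_N v,v}\leq 0$ for every $v$. The same holds for the adjoint, since its symmetric part is identical. By Remark~\ref{rem_dissipativity_on_hilbert_spaces}, both operators are therefore dissipative.
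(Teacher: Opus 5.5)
Your proposal is correct and follows the paper's proof essentially step for step. It uses the same tiling into the boundary terms $\SymmSum_1(W_S)$, $\SymmSum_N(W_S)$ and the blocks $Z_k+Z_k^*$, the same splitting of $\SymmSum_{k+1}(W_S)$ via Proposition~\ref{prop_symmsum_recursion}, and Lemma~\ref{lemma_dissipativity_at_level0} applied to the displayed $\Lambda_1$ with lower-right block $\smfrac{1}{2}\SymmSum_2(W_S)$, which is the form the argument actually needs (note that $Z_1+Z_1^*$ literally has $\SymmSum_2(W_S)$ there, so the displayed hypothesis is the stronger one); your justification of $\SymmSum_N(W_S)\leq 0$ by tensoring $W_S\leq 0$ with identities is, if anything, cleaner than the paper's eigenvalue remark.
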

\begin{proof}

First $W_S\leq 0$ implies $W_1+W_1^* \leq 0$ and it is immediate that each $\SymmSum_N(W_1^*+W_1) = \SymmSum_N(W_1^*)+\SymmSum_N(W_1) \leq 0$ because one can explicitly diagonalize $\SymmSum_N(W_1^*+W_1)$ with the eigenvalues are a positive scalar multiple of those for $W_1^*+W_1$. So from equation~\eqref{eq:non_vaninging_tiling}, it is enough to show for all $u\in H^{\tsr k}\oplus H^{\tsr (k+1)}$, each $\tau_k(u)\leq 0$ , \aln{
\tau_k(u)&=\langle u, \left[Z_k + P^{-1}_{k:k+1} Z^*_k P_{k:k+1}\right]u \rangle_{P_{k:k+1}}\\
&= \left\langle u, \left[  \pmat{
\SymmSum_k\left(W_S\right) &\SymmSum_k(W_2)\\
 \SymmSum_k(W_2^*) &\SymmSum_{k+1}\left(W_S\right) \\
} \right]  u\right\rangle_E \leq 0.
}
For each $k$, from equation~\eqref{eq_tiling_lower_diag_non_zero} and \eqref{eq:non_vaninging_tiling}, using Proposition~\ref{prop_symmsum_recursion}, it is equivalent to the bound \aln{
&\pmat{
\frac{1}{2}\SymmSum_k(W_1+W_1^*) &\SymmSum_k(W_2)\\
 \SymmSum_k(W_2^*) &\frac{1}{2}\SymmSum_{k+1}(W_1+W_1^*) \\
} \\ &= \pmat{
\SymmSum_k W_S &\SymmSum_k(W_2)\\
 \SymmSum_k(W_2^*) &\frac{1}{2}\SymmSum_{k}\SymmSum_2 W_S \\
} + \frac{1}{2}\pmat{0 &0 \\ 0 &  \one^{\tsr k} \tsr W_S + W_S \tsr \one^{\tsr k} }  \leq 0.
}
This follows since by lemma~\ref{lemma_dissipativity_at_level0} the bound in the hypothesis, equation~\eqref{eq_hypothesis_k}, implies the negative semi-definiteness of the first term, while $W_S\leq 0$ implies $\one^{\tsr k} \tsr W_S + W_S \tsr \one^{\tsr k}\leq 0$.

\end{proof}

\subsection{Convergence for bounded coefficients}\label{section_Convergence_for_bounded_coefficients}

Now we obtain convergence results for Carleman linearization in the limit of Carleman level when the Hilbert space is finite dimensional, $d:=\dim H<\infty$. Consider the system $
\phi' = W_1\phi + W_2\phi^{\tsr 2}, \phi(0) = \phi_0
$
with $W_1, W_2$ bounded. We are interested in convergence with respect to the norm induced by the inner-product on the space $\Hc:=(\Fock(H), Q)$. Now we will apply 
the Lumer-Phillips theorem to the abstract Cauchy problem on Fock space.
By remark~\ref{rem_dissipativity_on_hilbert_spaces}, we will use the dissipativity of the adjoint to avoid checking the range condition; additionally, the dissipativity on a dense subset implying the dissipativity of the closed extension will prove useful as it is enough to show dissipativity over finite particle space $\Fock(H)^0$ to get dissipativity on $\Fock(H)$. In fact,  $\Fock(H)^0$ form a core (see theorem~\ref{theorem_trotter_kato_applies}).
When $W_i$'s are unbounded, the dissipativity for $\CarlemanOp{W_1, W_2}^*$  will require a technical result, lemma~\ref{lemma_CWadjoint_dissipativity}, and it can be applied here as well.

\begin{prop}
Suppose $\CarlemanOp{W_1,W_2}$ is dissipative. Then so is $\CarlemanOp{W_1,W_2}^*$, and, therefore, it generates a contractive $C_0$-semigroup. Additionally, if $\CarlemanOp{W_1,W_2}_N$ is dissipative for each $N$, then $\CarlemanOp{W_1,W_2}$ is also dissipative.
\end{prop}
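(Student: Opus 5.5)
The proof has three parts. The second assertion (finite-level dissipativity implies dissipativity of $\CarlemanOp{W_1,W_2}$) is the base step, because the first assertion will be reduced to the same finite-particle computation. Throughout, $\CarlemanOp{W_1,W_2}$ denotes the closure of the operator defined on $\Fock(H)^0$, as fixed after Proposition~\ref{prop_carleman_closable}.

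\textbf{Finite levels imply the closure.} Take $v\in\Fock_N(H)\subset\Fock(H)^0$. Then $\CarlemanOp{W_1,W_2}v$ agrees with $\CarlemanOp{W_1,W_2}_{N+1}v$: the only new component is $\Tc_N$ of the output, which picks up $\SymmSum_N(W_2)$ applied to $v_{N+1}=0$. This is harmless, so in fact $\CarlemanOp{W_1,W_2}v=\Jc_{N}\CarlemanOp{W_1,W_2}_N\Pc_N v$ for such $v$. Since $Q$ is block diagonal,
\[
\Re\inner{\CarlemanOp{W_1,W_2}v,v}_Q=\Re\inner{\CarlemanOp{W_1,W_2}_N v,v}_{P_{:N}}\le 0 .
\]
Hence the operator is dissipative on the dense set $\Fock(H)^0$. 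By the last statement of Remark~\ref{rem_dissipativity_on_hilbert_spaces}, its closure is dissipative as well. I also record that dissipativity of every $\CarlemanOp{W_1,W_2}_N^*$ is available, for instance from Theorem~\ref{theorem_dissipatinity_nonzero_tiling}, since both the operator and its adjoint are controlled by the symmetric part $\CarlemanOp{W_1,W_2}_N+\CarlemanOp{W_1,W_2}_N^*$.

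\textbf{The adjoint.} Let $A=\CarlemanOp{W_1,W_2}$. The step I expect to be the main obstacle is showing $\Re\inner{A^*w,w}\le 0$ for every $w\in\Dom(A^*)$, not just for $w\in\Fock(H)^0$. The adjoint $A^*$ is the infinite lower-triangular operator with blocks $\SymmSum_k(W_1)^*$ on the diagonal and $\SymmSum_k(W_2)^*$ on the subdiagonal. For finitely supported $w$ one computes $\Re\inner{A^*w,w}=\Re\inner{w,Aw}=\Re\inner{Aw,w}\le 0$, using that $\Fock(H)^0\subset\Dom(A)\cap\Dom(A^*)$ (the latter shown in the proof of Proposition~\ref{prop_carleman_closable}).

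For general $w\in\Dom(A^*)$, the natural approximants are the truncations $w_N=\Jc_N\Pc_N w$. Then $w_N\to w$, and $\Pc_N A^* w_N=\Pc_N A^*w$ because $A^*$ is lower triangular. The remainder is the single block
\[
(A^*w_N)_{N+1}=\SymmSum_N(W_2)^*w_N ,
\]
which lies outside level $N$ and is orthogonal to $w_N$. It therefore contributes nothing to $\inner{A^*w_N,w_N}$. This gives
\[
\inner{A^*w_N,w_N}=\inner{\Pc_N A^*w,\Pc_N w}\longrightarrow\inner{A^*w,w},
\]
so $\Re\inner{A^*w,w}\le 0$. This is the content I would package as the technical Lemma~\ref{lemma_CWadjoint_dissipativity}.

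The hypothesis in the proposition is dissipativity of $A$ itself; dissipativity of the levels $\CarlemanOp{W_1,W_2}_N$ comes as a consequence. The truncation formula $\Re\inner{A^*w_N,w_N}=\Re\inner{A w_N,w_N}$ uses only $w_N\in\Fock(H)^0$, so dissipativity of $A$ on $\Fock(H)^0$ suffices.

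\textbf{Conclusion.} With $A$ closed, densely defined and dissipative, and $A^*$ dissipative, Remark~\ref{rem_dissipativity_on_hilbert_spaces} (Corollary~II.3.17 of Engel--Nagel) shows that $A$ generates a contractive $C_0$-semigroup on $(\Fock(H),Q)$. The range condition in Lumer--Phillips is then automatic.
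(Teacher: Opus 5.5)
Your proof is correct. For the second assertion it is the paper's argument: on $\Fock_N(H)$ the operator equals $\Jc_N\CarlemanOp{W_1,W_2}_N\Pc_N$, so it is dissipative on the dense set $\Fock(H)^0$, and dissipativity passes to the closure.

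For the adjoint you take a different and more careful route. The paper checks $\Re\inner{\CarlemanOp{W_1,W_2}^*v,v}\le 0$ only for $v\in\Fock(H)^0$. It then extends this to all of $\Dom(\CarlemanOp{W_1,W_2}^*)$ by citing closedness and density. That step actually needs $\Fock(H)^0$ to be a core for the adjoint, and density in $\Fock(H)$ does not give this. The natural approximants $\Jc_N\Pc_N w$ also need not converge in the adjoint's graph norm: $\CarlemanOp{W_1,W_2}^*\Jc_N\Pc_N w$ carries the spill-over $\SymmSum_N(W_2)^*\Tc_N w$ at level $N+1$, which need not tend to zero.

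You avoid the core question entirely. You identify the adjoint as the maximal lower-triangular operator and use two facts: truncation commutes with it on levels $\le N$, and the spill-over is orthogonal to the truncated vector. So the quadratic forms converge even without graph-norm convergence. This is the mechanism of Lemma~\ref{lemma_CWadjoint_dissipativity} specialized to the Carleman structure; the paper mentions that this lemma could be applied here but does not use it in this proof.

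Your route relies on the convention, which you state explicitly, that $\CarlemanOp{W_1,W_2}$ is the closure of its restriction to $\Fock(H)^0$. This convention is what makes the adjoint the maximal lower-triangular operator. The paper tacitly relies on the same convention when it extends dissipativity of $\CarlemanOp{W_1,W_2}$ from $\Fock(H)^0$.

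One minor slip: Theorem~\ref{theorem_dissipatinity_nonzero_tiling} requires $W_S\le 0$ and $\Lambda_1\le 0$, which are not assumed here. The identity $\Re\inner{\CarlemanOp{W_1,W_2}_N^*v,v}=\Re\inner{\CarlemanOp{W_1,W_2}_N v,v}$, which you also mention, is the right justification. In any case that remark is not used in your argument.
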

\begin{proof} We have that $ \Fock(H)^0\subset \Dom(\CarlemanOp{W_1,W_2})\cap \Dom(\CarlemanOp{W_1,W_2}^*)$, therefore, for all $v\in \Fock(H)^0$, \aln{
\txt{Re}\inner{\CarlemanOp{W_1,W_2}v, v} = \inner{(\CarlemanOp{W_1,W_2}+\CarlemanOp{W_1,W_2}^*)v, v}/2 = \txt{Re}\inner{\CarlemanOp{W_1,W_2}^*v, v} \leq 0  .
}
As $\CarlemanOp{W_1,W_2}, \CarlemanOp{W_1,W_2}^*$ are closed, with $\Fock(H)^0$ dense, this holds on the entire $\Dom(\CarlemanOp{W_1,W_2})$ and $ \Dom(\CarlemanOp{W_1,W_2}^*)$. For the second claim, restricted to each $\Fock_N(H)$, $\CarlemanOp{W_1,W_2}|_{\Fock_N(H)} = \CarlemanOp{W_1,W_2}_N$, so for any $v\in \Fock^0(H)$, $\txt{Re}\inner{\CarlemanOp{W_1,W_2} v, v} = \txt{Re}\inner{\CarlemanOp{W_1,W_2}_N v, v}   \leq 0$. And again as $\cup_N\Fock_N(H)$ is dense, $\CarlemanOp{W_1,W_2}$ closed, therefore, this holds over $\Dom(\CarlemanOp{W_1,W_2})$.
\end{proof}

Since $\CarlemanOp{W_1,W_2}$ is unbounded and non-normal, the semigroup $T_t$ generated by $\CarlemanOp{W_1,W_2}$ can only be defined in terms of the resolvent. However, we next show that it can be approximated by finite-dimensional semigroups with generators that are bounded, therefore, they are expressible as operator exponentials. For this we will appeal to the first Trotter-Kato approximation theorem.

\begin{theorem}\label{theorem_trotter_kato_applies}
Suppose $\CarlemanOp{W_1,W_2}$ and $\CarlemanOp{W_1,W_2}_N$ generate contractive semigroups $T(t), T_N(t)$, 
then for $t\in [0, C]$, $C$ finite, $$\lim_{N\to \infty } \norm{\Jc_N T_N(t)\Pc_Nx - T(t)x }= 0\ \text{uniformly}$$
for all $x\in \Fock(H)$, where $\Jc_N, \Pc_N$ are projections and inclusion maps defined in equation~\eqref{eq_finite_fock_projections_inclusions}.
\end{theorem}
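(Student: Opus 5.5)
The plan is to apply the first Trotter--Kato approximation theorem on the fixed Hilbert space $\Hc=(\Fock(H),Q)$. The truncated semigroups are first transported there: set $A_N:=\Jc_N\CarlemanOp{W_1,W_2}_N\Pc_N$. This is a bounded operator on $\Fock(H)$. Since $\Pc_N\Jc_N=\one$, we have $A_N^k=\Jc_N\CarlemanOp{W_1,W_2}_N^k\Pc_N$ for $k\ge1$, and hence
\[
e^{tA_N}=\Jc_NT_N(t)\Pc_N+(\one-\Jc_N\Pc_N).
\]
The extra term $(\one-\Jc_N\Pc_N)x$ is the tail $\oplus_{i>N}x_i$, whose norm tends to zero for every fixed $x$, uniformly in $t$. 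It therefore suffices to prove $e^{tA_N}x\to T(t)x$ uniformly on $[0,C]$.

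Next I check the uniform bound required by the theorem. On $\Fock(H)=\Jc_N\Fock_N(H)\oplus\ker\Pc_N$ (an orthogonal decomposition, since $Q$ is block diagonal), $e^{tA_N}$ acts as $T_N(t)$ on the first summand and as the identity on the second. It is therefore a contraction, because $T_N$ is contractive by hypothesis. The limit $T(t)$ is contractive by hypothesis as well, so all the semigroups are of type $(1,0)$ and the estimate holds with $M=1,\ \omega=0$.

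Then I verify the consistency condition (2) on the core $D=\Fock(H)^0$. The first point is that $\Fock(H)^0$ is a core for $\CarlemanOp{W_1,W_2}$. Since the paper identifies the Carleman operator with the closure of its restriction to the finite particle space $\Fock(H)^0$ (Proposition~\ref{prop_carleman_closable}), the core property holds by construction. If instead one takes the maximal domain, I would argue via the adjoint: the closure $\overline{A_0}$ of the restriction $A_0$ to $\Fock(H)^0$ is dissipative and satisfies $A_0^*=\CarlemanOp{W_1,W_2}^*$, which is dissipative, so $\overline{A_0}$ already generates a contraction semigroup and no proper dissipative extension of it can generate one. For $x\in\Fock(H)^0$, say $x\in\Fock_M(H)$, I take $x_N=x$ for $N\ge M+1$. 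Then
\[
A_Nx=\Jc_N\CarlemanOp{W_1,W_2}_N x=\CarlemanOp{W_1,W_2}x,
\]
exactly, because $\CarlemanOp{W_1,W_2}x$ has support only in levels $\le M\le N$ (the super-diagonal $W_2$ block maps level $k+1$ into level $k$). The approximating sequence is thus eventually constant, so condition (2) holds trivially. Trotter--Kato then gives $e^{tA_N}x\to T(t)x$ uniformly on compact $t$-intervals for all $x$, and combining this with the tail estimate from the first step proves the theorem.

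The main obstacle I expect is the core statement when $\CarlemanOp{W_1,W_2}$ is read as the closure on its maximal domain rather than as the closure of its restriction to $\Fock(H)^0$. There I would rely on the dissipativity of both the closure of $A_0$ and its adjoint (Remark~\ref{rem_dissipativity_on_hilbert_spaces}) to show that this closure is itself a generator. Generators of contraction semigroups admit no proper generator extensions, because $\lambda-A$ is already bijective. Consequently the closure of the restriction to $\Fock(H)^0$ coincides with the given generator, and $\Fock(H)^0$ is a core.
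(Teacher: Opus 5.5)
Your treatment of the core, in the case that actually matters, rests on a circular step. If $\CarlemanOp{W_1,W_2}$ is by definition the closure of its restriction $A_0$ to $\Fock(H)^0$, your proof is complete. But the paper treats the core property as something to prove (its claim a)), and it computes $\CarlemanOp{W_1,W_2}\FExp(\phi_0)$ componentwise, i.e.\ it works on the maximal domain. In that setting you assert $A_0^*=\CarlemanOp{W_1,W_2}^*$. Because $\CarlemanOp{W_1,W_2}$ is closed and densely defined, this identity is equivalent to $\closure{A_0}=\CarlemanOp{W_1,W_2}$, which is exactly the core property you are trying to prove. A priori you only know $\CarlemanOp{W_1,W_2}^*\subseteq A_0^*$, and dissipativity of the smaller operator says nothing about the larger one.

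The missing ingredient is a direct proof that $A_0^*$ is dissipative, and it comes from the truncations. For $x\in\Fock_n(H)$ one has $A_0\Jc_nx=\Jc_n\CarlemanOp{W_1,W_2}_nx$, so for every $y\in\Dom(A_0^*)$
\[
\Pc_nA_0^*y=\CarlemanOp{W_1,W_2}_n^*\Pc_ny,\qquad \Re\inner{A_0^*y,y}=\lim_{n\to\infty}\Re\inner{\CarlemanOp{W_1,W_2}_n^*\Pc_ny,\Pc_ny}\le 0.
\]
The inequality holds because each $\CarlemanOp{W_1,W_2}_n$ is bounded and dissipative (it generates a contraction semigroup by hypothesis), hence so is its adjoint. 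This is the paper's Lemma~\ref{lemma_CWadjoint_dissipativity} with $V_n=\Fock_n(H)$. With it, $\closure{A_0}$ generates a contraction semigroup. Since $\closure{A_0}\subseteq\CarlemanOp{W_1,W_2}$, with $\lambda-\closure{A_0}$ surjective and $\lambda-\CarlemanOp{W_1,W_2}$ injective for $\lambda>0$, the two operators coincide and $\Fock(H)^0$ is a core.

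Apart from this, your proof follows the paper's skeleton: the first Trotter--Kato theorem with $M=1$, $\omega=0$, core $\Fock(H)^0$, and an eventually constant approximating sequence for condition 2). You are more careful than the paper in two places.
\begin{enumerate}
\item The paper simply identifies $T_N$ with $\Jc_NT_N(t)\Pc_N$, which is not a $C_0$-semigroup on $\Fock(H)$ (it equals $\Jc_N\Pc_N$ at $t=0$). Your correction $e^{tA_N}=\Jc_NT_N(t)\Pc_N+(\one-\Jc_N\Pc_N)$, contractive by the orthogonal splitting, repairs this at the cost of a tail term that vanishes uniformly in $t$.
\item For the core, the paper takes $x_n=\Jc_n\Pc_nx$ and infers $\CarlemanOp{W_1,W_2}x_n\to\CarlemanOp{W_1,W_2}x$ from closedness. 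Closedness does not provide this, since the level-$n$ discrepancy $\SymmSum_n(W_2)x_{n+1}$ is never controlled. Your maximality-of-generators route, patched as above, is a sound replacement.
\end{enumerate}
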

\begin{proof} The point of using $\Jc_N T_N(t)\Pc_N$ is to identify the semigroups $T_N(t)$ on $\Fock_N(H)$ with semigroups inside $\Fock(H)$ so the first Trotter-Kato approximation can be used with $X=\Fock(H)$.
Since semigroups are contractive, $M=1, \omega=0$ can be fixed. With the identification under $\Pc_N, \Jc_N$, condition \emph{2)} of the Trotter-Kato approximation follows from the two claims below.
\begin{enumerate}[a)]
    \item $\Fock(H)^0$ is  dense inside $\Dom(\CarlemanOp{W_1,W_2})$ in the graph norm $\norm{\cdot}_{\CarlemanOp{W_1,W_2}}$, meaning $\Fock(H)^0$ is core for $\CarlemanOp{W_1,W_2}$.
    \item For each $x\in \Fock(H)^0$, there exists $x_n\in \Dom(A_n)$ such that $x_n\to x$ and $\CarlemanOp{W_1, W_2}_nx_n \to \CarlemanOp{W_1, W_2}x$.
\end{enumerate}

The second is straight-forward since if $x\in  \Fock(H)^0$ then the $k$-th tensor projection $\Tc_k(x)=0$ for all $k\geq K$, $K$ large enough. So one can choose the $x_i$ = $\Pc_ix$ if $i< K$, and $x_i=\Pc_K(x)=x$ for all $i\geq K$. Then ${(x_{i})}_{i\in \NN}\subset \Fock_K(H) \subset\cap_\NN \Dom(\CarlemanOp{W_1, W_2}_k)\cap \Dom(\CarlemanOp{W_1, W_2})$. As $\CarlemanOp{W_1, W_2}|_{\Fock_K(H)} = \CarlemanOp{W_1, W_2}_K$ which is a bounded operator, $$
\lim_{n}x_n\to x,\ \lim_{n}\CarlemanOp{W_1, W_2}_nx_n=\CarlemanOp{W_1, W_2}x.
$$

To see the first which is the graph-norm density of $\Fock(H)^0$ inside $\Fock(H)$, for $x\in \Dom(\CarlemanOp{W_1,W_2})$, choose $x_n=\Jc_n\Pc_n x\in \Fock(H)^0$. Then $\lim_n\CarlemanOp{W_1,W_2}_n x_n = \lim_n\CarlemanOp{W_1,W_2} x_n = \CarlemanOp{W_1,W_2}x$ since $\CarlemanOp{W_1,W_2}$ is closed
and
$\lim_n x_n = x$ with $x\in \Dom(\CarlemanOp{W_1,W_2})$.
Therefore, $(x_n)\subset \Fock(H)^0$ has limit $x$ in the graph norm: \aln{
\lim_n \norm{x-x_n}_{\CarlemanOp{W_1,W_2}} = \lim_n \norm{x-x_n} + \lim_n\norm{\CarlemanOp{W_1,W_2}x-\CarlemanOp{W_1,W_2}x_n} = 0 .
}
The two claims together are precisely condition $2)$ of the first Trotter-Kato approximation theorem; condition $2)$ is equivalent to condition $1)$ which is the uniform convergence of semigroups generated by $\Jc_NT_N\Pc_N$ on compact intervals as needed.
\end{proof}

\noindent From this we have an immediate result on convergence of truncated Carleman linearizations.

\begin{prop}\label{prop_carleman_convergence1} If $W_S\leq 0$ and $\Lambda_1\leq 0$ then $$\lim_{N\to \infty} \norm{\Tc_1 T_{t, N}\FExp_N(\phi_0) - u(t)}_H=0, \txt{ uniformly on } [0, T] \txt{ for any  }T>0,$$
where $u(t)$ is the solution to $\phi' = W_1\phi + W_2\phi^{\tsr 2}, \phi(0) = \phi_0, \norm{\phi_0} < 1$.
\end{prop}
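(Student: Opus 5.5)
The plan is to chain the earlier results: dissipativity at every finite level, then generation of contraction semigroups, then Trotter--Kato convergence, then identification of the first component with the nonlinear solution.

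\emph{Step 1 (finite levels).} By Theorem~\ref{theorem_dissipatinity_nonzero_tiling}, the hypotheses $W_S\leq 0$ and $\Lambda_1\leq 0$ give $\CarlemanOp{W_1,W_2}_N+\CarlemanOp{W_1,W_2}_N^*\leq 0$ for every $N$. Each $\CarlemanOp{W_1,W_2}_N$ is a bounded operator on the finite-dimensional space $\Fock_N(H)$, so $T_{t,N}=\exp(t\CarlemanOp{W_1,W_2}_N)$ is a uniformly continuous semigroup. Dissipativity gives $\frac{d}{dt}\norm{T_{t,N}v}^2=2\Re\inner{\CarlemanOp{W_1,W_2}_N T_{t,N}v,T_{t,N}v}\leq 0$, so each $T_{t,N}$ is a contraction.

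\emph{Step 2 (the full Carleman operator).} On $\Fock(H)^0$ the full operator agrees with the truncations, so the preceding proposition shows that the closure $\CarlemanOp{W_1,W_2}$ is dissipative. The same proposition then gives dissipativity of the adjoint. Since $\CarlemanOp{W_1,W_2}$ is densely defined and closable (Proposition~\ref{prop_carleman_closable}), Remark~\ref{rem_dissipativity_on_hilbert_spaces} shows it generates a contractive $C_0$-semigroup $T_t$ on $(\Fock(H),Q)$.

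\emph{Step 3 (Trotter--Kato).} The hypotheses of Theorem~\ref{theorem_trotter_kato_applies} now hold. Hence $\Jc_N T_{t,N}\Pc_N x\to T_t x$ uniformly for $t\in[0,T]$, for every $x\in\Fock(H)$. Take $x=\FExp(\phi_0)$; this lies in $\Fock(H)$, and even in the domain of the generator, because $\norm{\phi_0}<1$. Since $\Pc_N\FExp(\phi_0)=\FExp_N(\phi_0)$ and $\Tc_1\Jc_N=\Tc_1$ is a norm-one projection, we get
\[
\norm{\Tc_1T_{t,N}\FExp_N(\phi_0)-\Tc_1T_t\FExp(\phi_0)}_H\leq \norm{\Jc_NT_{t,N}\Pc_N\FExp(\phi_0)-T_t\FExp(\phi_0)}\to 0
\]
uniformly on $[0,T]$.

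\emph{Step 4 (identification).} It remains to show $\Tc_1T_t\FExp(\phi_0)=u(t)$. This is the main obstacle, since it requires $\FExp(u(t))$ to be the classical solution of the abstract Cauchy problem, and that needs $\norm{u(t)}<1$ for all $t\in[0,T]$. For this I use $W_S\leq 0$ together with $\Lambda_1\leq 0$. Test $\Lambda_1$ against vectors of the form $(\phi,\phi^{\tsr 2})$, or more directly use the dissipativity of $\CarlemanOp{W_1,W_2}$ along the curve $\FExp(u(t))$. This shows $t\mapsto\norm{\FExp(u(t))}_Q^2=\sum_k\norm{u(t)}^{2k}$ is nonincreasing while it is finite, so $\norm{u(t)}\leq\norm{\phi_0}<1$ on the maximal existence interval. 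As a consequence, $u$ exists globally on $[0,T]$.

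With this bound, $\FExp(u(t))\in\Dom(\CarlemanOp{W_1,W_2})$ by the earlier proposition. The estimate in that proof, which is uniform in $\norm{u}\leq\norm{\phi_0}$, shows that $t\mapsto\FExp(u(t))$ is $C^1$ into $(\Fock(H),Q)$, with derivative $\CarlemanOp{W_1,W_2}\FExp(u(t))$ by equation~\eqref{eq_tensor_power_derivative}. Uniqueness of classical solutions for a $C_0$-semigroup generator (Property~\ref{prop_nonlinear_to_abstract_cacuhy_problem}) then gives $T_t\FExp(\phi_0)=\FExp(u(t))$, hence $\Tc_1T_t\FExp(\phi_0)=u(t)$. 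Combined with Step 3, this proves the claim.
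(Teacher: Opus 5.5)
Your proposal is correct and follows the paper's proof: finite-level dissipativity from Theorem~\ref{theorem_dissipatinity_nonzero_tiling}, then generation of contraction semigroups, then Theorem~\ref{theorem_trotter_kato_applies} applied to $\FExp(\phi_0)$ and projected by $\Tc_1$. Your Step 4 is sound and makes explicit the identification $\Tc_1 T_t\FExp(\phi_0)=u(t)$ that the paper takes for granted; testing $\Lambda_1\leq 0$ on $(u,u^{\tsr 2})$ gives $\frac{d}{dt}\norm{u}^2\leq \inner{u,W_Su}(1-\norm{u}^2)\leq 0$, hence $\norm{u(t)}\leq\norm{\phi_0}<1$, as you require.
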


\begin{proof} This follows since $\FExp(\phi_0)\in \Dom(\CarlemanOp{W_1,W_2})\in L^2(\Fock(H), Q)$, and by Theorem~\ref{theorem_dissipatinity_nonzero_tiling} 
along with Theorem~\ref{theorem_trotter_kato_applies}, $W_S, \Lambda_1\leq 0$ implies that the semigroup $T_{N}(t)$ generated $\CarlemanOp{W_1,W_2}_N$ converges to $T(t)$ generated by $\CarlemanOp{W_1,W_2}$ in $\norm{\cdot}_{\Fock(H)}$ and, therefore, in $\norm{\cdot}_H$, on compact time intervals. By Trotter-Kato theorem the error $e(N)$ converges uniformly to zero for $t\in [0, T]$, \aln{\lim_{N\to \infty} e(N) &=\norm{\Tc_1 T_{N}(t) \FExp_N(\phi_0) - u(t)}_H \\
&\leq \lim_{N\to \infty}\norm{T_{N}(t) \FExp_N(\phi_0) - T(t) \FExp(\phi_0)}_{\Fock(H)} = 0.
}
\end{proof}

\subsection{Comparison of dissipativity conditions}\label{section_convergence_comparison}

For $W_1$ self-adjoint, $W_1< 0$ bounded with largest eigenvalue $\lambda_1$,
then on the diagonal we have \aln{
\pmat{u\\ u^{\tsr 2}}^*&\pmat{W_1 & W_2^* \\ W_2^* &\smfrac{1}{2}\SymmSum_2(W_1)} \pmat{u\\ u^{\tsr 2}}\\
&= \inner{u, W_1u} + \smfrac{1}{2}\inner{u^{\tsr 2}, \SymmSum_2(W_1) u^{\tsr 2}} - 2\Re\inner{u, W_2 u^{\tsr 2}}\\
&\leq -|\lambda_1|(\norm{u}^2 + \norm{u}^4)   + 2\norm{W_2}\norm{u}^3 \\
&\leq -2|\lambda_1|\norm{u}\norm{u}^2 + 2\norm{W_2}\norm{u}^3 \leq 2(\norm{W_2}-|\lambda_1|  )\norm{u}^3 ,
}
where we used the inequality $-2ab\geq -a^2-b^2$ with $a,b\in \RR_\geq 0$.
Extending this, if we are interested in general, $\phi_0, \norm{\phi_0}=M$, then by  Remark~\ref{rem_adapted_system} \eqref{eq_adapted_system}, we need for any $\eps>0, M_\eps = M+\eps$,  $M_\eps\norm{W_2}-|\lambda_1| \leq 0$, equivalently $|\lambda_1| - M\norm{W_2}>0$. This is implied by the convergence condition $R<1$ noted in observation~\ref{obv_r_bound}, $\norm{W_2}\norm{\phi_0}/|\lambda_1|<1$.

Therefore, the dissipativity analysis is in agreement with the iterative argument on the diagonal. The obvious advantage of the dissipativity condition is the independence from operator norms, using remark~\ref{rem_dissipativity_on_hilbert_spaces}, we will push this to unbounded $W_i$'s. Since dissipativity is essential for contractive semigroups, this bound is likely tight if one wants both a $C_0$-semigroup and contractivity.

\section{A nonlinear Trotter-Kato approximation}

For 
the nonlinear system with bounded $W_i$'s, the convergence for $\CarlemanOp{W_1, W_2}_N$ to $\CarlemanOp{W_1, W_2}$ one only needs dissipativity. For $W_i$'s unbounded, additional hypotheses about the $W_i$'s cores are needed. First we consider types of nonlinearities for which the $2$-level dissipativity condition can hold. For $W_1:H\to H, W_2: H^{\tsr 2} \to H$. We will assume $W_1$ is self-adjoint with discrete spectrum. For every $a\oplus b\in H\oplus H^{\tsr 2}$, we need the bound \alnn{
\Re \inner{a\oplus b, A_W a\oplus b} :&= \smfrac{1}{2}\pmat{a\\ b}^*\pmat{W_1 & W_2 \\ W_2^* &  \smfrac{1}{2}\SymmSum_2(W_1) }\pmat{ a\\  b  }  \leq 0.
}
This comes down to a condition similar to relative-boundedness of the perturbations in linear semigroup theory
which is expected since $A_W$ is a perturbation of $W_1\oplus \SymmSum_2(W_1)/2$ by $W_2$, the former being a $C_0$-semigroup generator as $W_1$ is dissipative. Since $W_1\leq 0$ implies $\SymmSum_2(W_1)/2\leq 0$, there is a $Q$ so that $\SymmSum_2(W_1)/2 = -Q^*Q$.

\subsection{Nonlinear relative bounds}

\begin{prop}\label{prop_carleman_nonlinearity_for_convergence} Suppose that $\txt{Kernel}(\SymmSum_2(W_1)) \subset \txt{Kernel}(W_2), W_1\leq 0$. Then
$\Re \inner{w, A_W w}\leq 0$
holds for all $w=a\oplus b$, $a\in \Dom(W_1)$ and $b\in \Dom(W_2)\cap \Dom(\SymmSum_2(W_1))$ if for all $a\in \Dom(W_1)$, $|\inner{b, \SymmSum_2(W_1)b}|>0$ implies  \aln{
\inner{a, W_1 a} -\dfrac{(\Re\inner{a, W_2 b})^2}{\inner{b, \smfrac{1}{2}\SymmSum_2(W_1) b}} \leq 0.
}
Note the first and second term are both quadratic in $a$ while the numerator and denominator of the second term are quadratic in $b$, therefore, both $a, b$ can be normalized by any positive constant.
\end{prop}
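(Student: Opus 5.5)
The plan is to treat $\Re\inner{w, A_W w}$ as a real quadratic form in the block variable $w=a\oplus b$ and to reduce its nonpositivity to the stated Schur-complement type inequality by minimizing over a scalar rescaling. Write $\beta:=\inner{b,\smfrac{1}{2}\SymmSum_2(W_1)b}$. Since $W_1$ is self-adjoint with $W_1\leq 0$, the operator $\SymmSum_2(W_1)\leq 0$ by the argument in the proof of Theorem~\ref{theorem_dissipatinity_nonzero_tiling}, so $\beta\leq 0$ and $\inner{a,W_1a}$ is real and nonpositive. Expanding,
\aln{
2\Re\inner{w, A_W w} = \inner{a,W_1a} + 2\Re\inner{a,W_2 b} + \beta,
}
using that the off-diagonal blocks $W_2,W_2^*$ contribute $\inner{a,W_2b}+\inner{W_2^* a,b} = 2\Re\inner{a,W_2b}$.

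\emph{Case 1: $\beta=0$.} Then $\inner{b,\SymmSum_2(W_1)b}=0$. Write $\SymmSum_2(W_1)=-Q^*Q$ as noted before the statement; then $\norm{Qb}^2=0$, so $Qb=0$ and hence $\SymmSum_2(W_1)b=0$. The kernel hypothesis gives $W_2b=0$, so the form reduces to $\inner{a,W_1a}\leq 0$. This is the step where the kernel inclusion is used. To be careful about domains, one should check that $b\in\Dom(\SymmSum_2(W_1))$ lies in $\Dom(Q)$; this follows from the spectral calculus, since $\SymmSum_2(W_1)$ is self-adjoint and nonpositive.

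\emph{Case 2: $\beta<0$.} Here I would use the homogeneity remark in the statement. Replace $b$ by $tb$ for $t\in\RR$ and view $f(t)=\inner{a,W_1a}+2t\Re\inner{a,W_2b}+t^2\beta$ as a concave quadratic in $t$. Its maximum is $\inner{a,W_1a}-(\Re\inner{a,W_2b})^2/\beta$, attained at $t^*=-\Re\inner{a,W_2b}/\beta$. The hypothesis (applied with $b$, since the condition is scale invariant in $b$) says exactly that this maximum is $\leq 0$. Hence $f(1)\leq f(t^*)\leq 0$, which is the desired bound for $w=a\oplus b$. The identity $\max_t f = c-B^2/\beta$ is just completing the square: $f(t)=\beta(t-t^*)^2 + \inner{a,W_1a}-B^2/\beta$ with $B=\Re\inner{a,W_2b}$.

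The main obstacle is Case 1, namely justifying that vanishing of the quadratic form $\inner{b,\SymmSum_2(W_1)b}$ forces $b\in\txt{Kernel}(\SymmSum_2(W_1))$ for a possibly unbounded operator. Once the square-root factorization is available, the rest is an algebraic completion of the square. A secondary point is confirming that $\Re\inner{a,W_2b}$, rather than the complex pairing, is what appears; this holds because the form is real after taking real parts, so only real scalings $t$ are needed.
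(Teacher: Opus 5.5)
Your proposal is correct and follows the paper's proof: you handle the degenerate case $\inner{b,\SymmSum_2(W_1)b}=0$ by deducing $\SymmSum_2(W_1)b=0$ from semidefiniteness and then $W_2b=0$ from the kernel inclusion, and otherwise you bound $f(1)$ by the maximum of the concave quadratic obtained from $b\mapsto tb$, which is exactly the stated expression. Your square-root justification in the degenerate case is slightly more careful than the paper, which just asserts the implication. The opposite sign on the cross term (the paper writes $-2\Re\inner{a,W_2b}$) is immaterial, since only its square enters the maximum.
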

\begin{proof} We have $
\Re \inner{a\oplus b, A_W a\oplus b}
= \inner{a, W_1 a } -2\Re\inner{a, W_2 b} + \inner{b, \smfrac{1}{2}\SymmSum_2(W_1)b}\label{eqn_dissipativity_inequality_needed}
$.
With $a, b$ fixed, $t\in \RR$, considering $\Re \inner{a\oplus tb, A_W a\oplus tb}$, define $f_{ab}:\RR\to \RR$, \aln{
f_{ab}(t) :&= \inner{a, W_1 a} -2t\Re\inner{a, W_2 b} + t^2\inner{b, \smfrac{1}{2}\SymmSum_2(W_1) b}.
}
We are interested in $f_{ab}(1)$. Since $\SymmSum_2(W_1)\leq 0$, $\inner{b, \SymmSum_2(W_1)b} = 0$ implies $\SymmSum_2(W_1)b=0$ and by hypothesis, $W_2b=0$. In which case $f_{ab}(t) = \inner{a, W_1a}\leq 0$. So we only need to consider $|\inner{b, \SymmSum_2(W_1)b}|>0$. Additionally, $f_{ab}(0), \lim_{t\to \pm \infty }f_{ab}(t) \leq 0$. Being quadratic in $t, f_{ab}$ has a max at $t^* = \Re\inner{a, W_2 b}/\inner{b, \smfrac{1}{2}\SymmSum_2(W_1) b}$ given by\alnn{
f_{ab}(t^*) &= \inner{a, W_1 a} -2\dfrac{\Re\inner{a, W_2 b}}{\inner{b, \smfrac{1}{2}\SymmSum_2(W_1) b}}\Re\inner{a, W_2 b} + \left(\dfrac{\Re\inner{a, W_2 b}}{{\inner{b, \smfrac{1}{2}\SymmSum_2(W_1) b}}}\right)^2\inner{b, \smfrac{1}{2}\SymmSum_2(W_1) b}\nonumber \\
&= \inner{a, W_1 a} -\dfrac{(\Re\inner{a, W_2 b})^2}{\inner{b, \smfrac{1}{2}\SymmSum_2(W_1) b}}.\label{eq_quadratic_bound1}
} Enforcing negative semi-definiteness for RHS to equation~\eqref{eq_quadratic_bound1} gives the claim.

\end{proof}

\noindent When $W_1<0$, the above condition reduces to a nonlinear relative bound.

\begin{corollary}
$W_1\leq 0$ with $\lambda_1=\inf\{|\lambda|:\lambda\in\txt{Spec}(W_1)\}, \lambda_1 >0$, and $W_2$ is relatively bounded in the sense that $
\norm{W_2b}^2  \leq |{{\inner{b, \smfrac{1}{2}\SymmSum_2(W_1) b}}}\lambda_1|.
$
implying $\norm{W_2b}\leq \norm{Qb}\sqrt{\lambda_1}$.
\end{corollary}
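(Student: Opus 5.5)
The corollary is read as saying that under the stated relative bound the hypothesis of Proposition~\ref{prop_carleman_nonlinearity_for_convergence} holds, so that $\Re\inner{w, A_W w}\leq 0$. The plan is to verify the inequality
\[
\inner{a, W_1 a} \leq \frac{(\Re\inner{a, W_2 b})^2}{\inner{b, \smfrac{1}{2}\SymmSum_2(W_1) b}}
\]
whenever $\inner{b,\SymmSum_2(W_1)b}\neq 0$. We also have to check the kernel hypothesis $\txt{Kernel}(\SymmSum_2(W_1))\subset\txt{Kernel}(W_2)$. That hypothesis is immediate from the relative bound: if $\SymmSum_2(W_1)b=0$, then $\inner{b,\smfrac12\SymmSum_2(W_1)b}=0$, hence $\norm{W_2b}^2\leq 0$. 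In fact, since $\lambda_1>0$, the operator $\SymmSum_2(W_1)$ is strictly negative, so the kernel is trivial anyway.

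\textbf{Step 1: the relative bound.} Because $W_1\leq 0$, we have $\SymmSum_2(W_1)/2\leq 0$ and write it as $-Q^*Q$, as above. Then $|\inner{b,\smfrac12\SymmSum_2(W_1)b}| = \norm{Qb}^2$, and the hypothesis becomes $\norm{W_2 b}^2\leq \lambda_1\norm{Qb}^2$, which is the stated consequence $\norm{W_2b}\leq\sqrt{\lambda_1}\norm{Qb}$.

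\textbf{Step 2: both sides as multiples of $\norm{a}^2$.} The denominator equals $-\norm{Qb}^2<0$, so the right-hand side is $-(\Re\inner{a,W_2b})^2/\norm{Qb}^2$. By Cauchy--Schwarz and Step 1,
\[
(\Re\inner{a,W_2b})^2\leq \norm{a}^2\norm{W_2b}^2\leq \lambda_1\norm{a}^2\norm{Qb}^2,
\]
so the right-hand side is bounded below by $-\lambda_1\norm{a}^2$. On the left, $W_1\leq 0$ is self-adjoint and every spectral value has modulus at least $\lambda_1$, so the spectral theorem gives $\inner{a,W_1a}\leq -\lambda_1\norm{a}^2$. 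Chaining the two bounds gives the required inequality.

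\textbf{Step 3: conclusion.} Proposition~\ref{prop_carleman_nonlinearity_for_convergence} now yields $\Re\inner{a\oplus b,A_W(a\oplus b)}\leq 0$ on $\Dom(W_1)\oplus(\Dom(W_2)\cap\Dom(\SymmSum_2(W_1)))$, which is the $\Lambda_1\leq 0$ condition feeding Theorem~\ref{theorem_dissipatinity_nonzero_tiling}.

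The main obstacle is bookkeeping rather than any hard estimate. The form $f_{ab}$ in Proposition~\ref{prop_carleman_nonlinearity_for_convergence} carries a factor $2$ in front of the cross term, and the normalisation $\frac12$ in $A_W$ has to be tracked. If these constants break the clean chain in Step 2, I would redo the completion of the square directly on
\[
\inner{a,W_1a}-2\Re\inner{a,W_2b}-\norm{Qb}^2 .
\]
Applying Young's inequality gives $2|\inner{a,W_2b}|\leq 2\sqrt{\lambda_1}\norm{a}\norm{Qb}\leq \lambda_1\norm{a}^2+\norm{Qb}^2$, so this quantity is bounded by $\inner{a,W_1a}+\lambda_1\norm{a}^2\leq 0$. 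That route closes the argument with exactly the constant $\lambda_1$ in the hypothesis.
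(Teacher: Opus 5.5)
Your proposal is correct and follows the paper's proof: bound $\inner{a,W_1a}\le -\lambda_1\norm{a}^2$ spectrally, and bound the cross term via Cauchy--Schwarz and the relative bound, $(\Re\inner{a,W_2b})^2\le\lambda_1\norm{a}^2\norm{Qb}^2$. Together these give $f_{ab}(t^*)\le 0$, which is the condition in Proposition~\ref{prop_carleman_nonlinearity_for_convergence}. Your explicit sign bookkeeping (denominator $=-\norm{Qb}^2$) is cleaner than the paper's use of absolute values, and your Young's-inequality fallback is just an equivalent direct completion of the square; it is not needed, since the factor $2$ is already absorbed in $f_{ab}(t^*)$.
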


\begin{proof} Applying the hypothesis to $f_{ab}$, \eqref{eq_quadratic_bound1} and setting $f_{ab}\leq 0$ yields the claim, \alnn{
f_{ab}(t^*)&\leq -|\lambda_1|\norm{a}^2 + \left|\dfrac{\norm{a}^2\norm{W_2b}^2 }{\inner{b, \smfrac{1}{2}\SymmSum_2(W_1) b}}\right|\label{eq_quadratic_bound2} \leq 0.
}
\end{proof}

\noindent The type of perturbation we have is, therefore, an example of a nonlinear relatively bounded perturbation. In particular, it's not globally Lipschitz, which is often needed for handling nonlinear perturbations, for example, nonlinear Miyadera-Voight perturbations (see \cite{fkirine2022nonlinear}).

Now we consider an example where an infinite family of discretizations for a nonlinearity satisfying conditions from proposition~\ref{prop_carleman_nonlinearity_for_convergence} can be constructed.

\begin{example}[Dissipativity for the hyperviscous Burger's equation]\label{example_hyperviscuous}
Consider the hyperviscous Burger's equation $u_t = (-u^2)_x/2 + \nu \laplace^M u$ on $J=[0, 2\pi]$, $M$ odd, periodic boundary conditions. Let $(e_n)_{n\in \ZZ}, e_n = \exp(\I nx)/\sqrt{2\pi}$ be the standard Fourier basis for $H:=L^2(J)$ with $e_{mn}:=e_m\tsr e_n$ forming an orthonormal basis for $H^{\tsr 2}$. The equation can be put into the form $u_t = W_1 u + W_2u^{\tsr 2}$ where the action of $W_1, W_2$ on $ \sum_k a_k e_k\in H,  \sum_{mn} b_{mn} e_{mn}\in H^{\tsr 2}$ 
is defined by
\aln{
W_1 a &= - \nu \sum_k k^{2M} a_ke_k, \\
W_2 b &= - \smfrac{\I}{2\sqrt{2\pi}} \sum_{m,n} (m+n) b_{mn}e_{m+n}  = - \smfrac{\I}{2\sqrt{2\pi}} \sum_p \sum_{m+n=p} p b_{mn}e_{p}.
}
Note that $\txt{kernel}(W_1) = \{ e_0\}, \txt{kernel}(\SymmSum_2(W_1)) = \{ e_{00}\}\subset \txt{kernel}(W_2) $ and non-zero eigenvalues for $W_1$ are $k^2, k\in \NN$. Importantly, $\inner{e_0, W_2b}=0$ for all $b$. Assume $a_k, b_{mn}$ are finitely supported, $b\not\in \txt{Kernel}(\SymmSum_2(W_1))$, and set $a_\perp:=\sum_{k\neq 0} a_ke_k$. Then we have \aln{
\left|\dfrac{(\Re\inner{a, W_2b})^2}{\inner{b, \smfrac{1}{2}\SymmSum_2(W_1)b}}\right| &=  \left|\dfrac{(\Re\inner{\sum_k a_k e_k, - \smfrac{\I}{2\sqrt{2\pi}} \sum_p \sum_{m+n=p} p b_{mn}e_{p}})^2
}{
-\smfrac{\nu}{2}\sum_{m,n} b_{mn}^2(m^{2M}+n^{2M})
}\right|\\
&\leq \dfrac{\norm{a_\perp }^2}{\nu}\dfrac{\smfrac{1}{4 \cdot 2\pi} \sum_p\norm{ \sum_{m+n=p} p b_{mn}}^2}{\smfrac{1}{2}\sum_p \sum_{m+n=p} b_{mn}^2(m^{2M}+n^{2M})}.
}
Setting $K_{mn} =(m^{2M}+n^{2M})^{1/2}$, and applying Cauchy-Schwarz inequality to the sum over $m+n=p$, 
\aln{
&\dfrac{\smfrac{1}{2\cdot 2\pi} \sum_p\norm{ \sum_{m+n=p} (p/K_{mn})\cdot K_{mn}b_{mn}}^2}{\sum_l \sum_{m+n=l} b_{mn}^2(m^{2M}+n^{2M})}\\
&\hspace{10mm} \leq \smfrac{1}{4\pi}\sum_p \left(\dfrac{\sum_{m+n=p}(m^{2M}+n^{2M})^2b_{mn}^2}{\sum_l \sum_{m+n=l} b_{mn}^2(m^{2M}+n^{2M})} \cdot\sum_{m+n=p} \dfrac{p^2}{K_{mn}^2} \right)\\
&\hspace{10mm} \leq \smfrac{1}{4\pi}\sum_{p\in \ZZ} \sum_{m+n=p, m,n\in \ZZ}\dfrac{p^2}{m^{2M}+n^{2M}} := K_M.
}
Substituting $n=p-m$, one can verify that the inner sum behaves like $O(p^{3-2M})$ and $K_M$ is finite for $M>2$ (see appendix, proposition~\ref{prop_hyperviscous_cross_bound}). This means
for $\nu\geq \sqrt{K_M}$, we have the bound
\aln{
\inner{a, W_1a} - \dfrac{(\Re\inner{a, W_2b}^2)}{\inner{b, \smfrac{1}{2}\SymmSum_2(W_1)b}} \leq -\nu \norm{a_\perp}^2  + \dfrac{\norm{a_\perp}^2}{\nu} K_M \leq 0.
}
So the dissipativity condition is satisfied on the finite dimensional subspaces $H_{m} = \txt{LinSpan}\{e_k:|k|\leq m\}, a\in H_m, b\in H_m^{\tsr 2}$. Now with orthogonal projections, $P_m:H\to H_{m}$, for $n\in \NN$
define finite dimensional discretizations ${W_1}^{2n}:H_{2n}\to H_{2n}, {W_2}^{2n}:H^{\tsr 2}_{2n}\to H_{2n}$ as $W_1^{2n} = P_{2n} W_1 P_n$, $W_2^{2n} = P_{2n} W_2 P_n^{\tsr 2}$. Notice that they are just restrictions of $W_1, W_2$ to $H_n$, so for each $n$ they satisfy the dissipativity condition for truncated Carleman semigroup to be a contractive $C_0$-semigroup.
\end{example}

\subsection{Convergence for unbounded coefficients}
Consider the system, defined by unbounded operators $W_1:H\to H, W_2:H^{\tsr 2}\to H$, and their discretizations supported on $k$-dimensional subspaces $V_k\subset H,  F_1^k:V_k\to V_k, F_2^k:V_k^{\tsr 2}\to V_k$ for $k$ an increasing sequence going to infinity,  \aln{
\phi' = W_1\phi + W_2\phi^{\tsr 2}, \phi(0)=f,
\phi_k' = F^k_1\phi + F^k_2\phi^{\tsr 2}, \phi_k(0)=f_k.
}
We have the corresponding Carleman systems, $\FExp(\phi)' = \CarlemanOp{W_1, W_2}\FExp(\phi), \FExp(\phi(0))=\FExp(f)$, and $
\FExp(\phi_k)' = \CarlemanOp{F^k_1, F^k_2)}\FExp(\phi_k), \FExp(\phi_k(0))=\FExp(f_k)$. Suppose $\CarlemanOp{W_1, W_2}$ is closed with a core $D$, then the Trotter-Kato approximation theorem says that if for all $x\in D$, $\lim_{k\to \infty}\CarlemanOp{F^k_1, F^k_2)}x = \CarlemanOp{W_1, W_2}x$ and  $\CarlemanOp{F^k_1, F^k_2)}, \CarlemanOp{W_1, W_2}$  are $C_0$-semigroup generators on $\Fock(H)$, then the generated semigroups, $T_{t, k}, T_t$, converge. Now for the $2$-level systems, \aln{
\phi' = \pmat{W_1 & W_2 \\ 0 &\smfrac{1}{2}\SymmSum_2(W_1)}\phi := \Lambda_1\phi,\ \ \pmat{F^k_1 & F^k_2 \\ 0 &\smfrac{1}{2}\SymmSum_2(F^k_1)}\phi := \Lambda_1^k \phi.
}
Since ${F^k_i}'s, i\in [2]$ are bounded for each $k$, the hypothesis $\Lambda_i^k, F^k_1\leq 0$ imply that $\CarlFk{k}$ generates contractive $C_0$-semigroups, $T_{t, k}$. Now we need to consider the large $k$-limit, $\lim_{k\to \infty} \CarlFk{k}v$ for $v$ in an appropriately dense space $D$. For consistency $\lim_{k\to \infty} \CarlFk{k}v = \CarlemanOp{W_1, W_2}v$ is needed on a core $D$. But this is not enough to ensure that $\CarlemanOp{W_1, W_2}$ is a $C_0$-semigroup generator; unlike theorem~\ref{theorem_trotter_kato_applies} this still need to be established.
For a class of $F_j^k$'s it will turn out that $(\lim_{k\to \infty} \CarlFk{k})^*$ is dissipative, meaning checking the range condition in Lumer-Phillips theorem can be avoided. So we will additionally require  $\CarlFk{k})^*$ is dissipative.

\begin{remark} By the second Trotter-Kato theorem (see, ~\cite[Section~III.4.9]{engel_nagel_semigroups}), one can replace the dissipativity for $(\lim_{k\to \infty} \CarlFk{k})^*$ by convergence of the semigroups $T_{t, k}$ to a contractive $C_0$-semigroup or by the strong convergence of the resolvents $R(\lambda_0, \CarlFk{k})$ to an operator with dense range, but we will not consider this since it will usually require knowing $W_1, W_2$.
\end{remark}

\begin{theorem}[The dissipative adjoint Carleman-Trotter-Kato theorem]\label{thm_dissipative_carleman_trotter_kato} Suppose $\Lambda_1^k, F^k_1\leq 0$ for all $k$, the limit $\CarlemanOp{F_1, F_2} :=\closure{\lim_{k\to \infty}\CarlemanOp{F^k_1, F^k_2}}$ is densely defined, $\cap_k \Dom(\CarlemanOp{F^k_1, F^k_2})\cap \Dom(\CarlemanOp{F_1, F_2})$ dense, $\CarlW$ is densely defined and closed satisfying
\alnn{\label{eq_limit_trotter_kato_core_convergence}
\CarlemanOp{F_1, F_2} =\closure{\lim_{k\to \infty}\CarlemanOp{F^k_1, F^k_2}} = \CarlemanOp{W_1, W_2}.
}  Additionally, suppose $(\lim_{k\to \infty} \CarlFk{k})^*$ is dissipative. Then the Carleman operators $\CarlW$, $\CarlFk{k}$ generate contractive $C_0$-semigroups $T_t, T_{t, k}$, with $T_{t, k}$ converging to $T_t$ as in the Trotter-Kato approximation theorem.
\end{theorem}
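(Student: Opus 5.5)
The plan has three steps: generation of the truncated semigroups, generation of the limit semigroup, and convergence.

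\emph{Step 1: the discretized Carleman operators.} Each $\CarlFk{k}$ is built from the bounded operators $F^k_1, F^k_2$ on the finite-dimensional $V_k$. The hypothesis $F^k_1\leq 0$ gives $W_S\leq 0$ with $W_1$ replaced by $F^k_1$, and $\Lambda_1^k\leq 0$ is the two-level condition \eqref{eq_hypothesis_k}. Theorem~\ref{theorem_dissipatinity_nonzero_tiling} then gives that every truncation $\CarlFk{k}_N$ and its adjoint are dissipative. The argument of the proposition preceding Theorem~\ref{theorem_trotter_kato_applies} passes this to $\CarlFk{k}$ and $\CarlFk{k}^*$: these hold on the dense set $\Fock(H)^0$ and extend by closedness, with closability from Proposition~\ref{prop_carleman_closable}. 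Remark~\ref{rem_dissipativity_on_hilbert_spaces} (dissipativity of $A$ and $A^*$) and Lumer–Phillips then show that each $\CarlFk{k}$ generates a contraction semigroup $T_{t,k}$.

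\emph{Step 2: the limit operator.} The dissipativity of $\CarlFk{k}$ is inherited by the limit. For $v$ in the common domain $\cap_k \Dom(\CarlFk{k})\cap\Dom(\CarlemanOp{F_1,F_2})$, which is assumed dense, we have $\Re\inner{\CarlFk{k}v,v}\leq 0$. Passing $k\to\infty$ gives $\Re\inner{\CarlemanOp{F_1,F_2}v,v}\leq 0$, since real parts of inner products are continuous in the first argument. By Remark~\ref{rem_dissipativity_on_hilbert_spaces} the closure remains dissipative, so \eqref{eq_limit_trotter_kato_core_convergence} shows $\CarlW$ is dissipative on its domain. For the adjoint, note that $\CarlW^* = (\closure{\lim_k \CarlFk{k}})^* = (\lim_k\CarlFk{k})^*$, because the adjoint of an operator equals the adjoint of its closure; this is dissipative by hypothesis. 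Since $\CarlW$ is densely defined and closed, the Corollary quoted in Remark~\ref{rem_dissipativity_on_hilbert_spaces} shows $\CarlW$ generates a contractive $C_0$-semigroup $T_t$, and the range condition in Lumer–Phillips never has to be checked.

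\emph{Step 3: convergence.} All semigroups are contractive, so we may take $M=1,\omega=0$ in the first Trotter–Kato theorem. For the core $D$, take $D:=\cap_k \Dom(\CarlFk{k})\cap\Dom(\CarlemanOp{F_1,F_2})$, or more precisely the set on which $\lim_k \CarlFk{k}v$ exists. By \eqref{eq_limit_trotter_kato_core_convergence}, $\CarlW$ is the closure of the limit operator restricted to this set, so the set is a core in the graph norm by the definition of closure. For $x\in D$, choose the constant sequence $x_k=x$; then $\CarlFk{k}x\to\CarlW x$. This is condition 2), and the equivalence gives strong convergence $T_{t,k}x\to T_t x$, uniformly on compact time intervals.

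\emph{Main obstacle.} The delicate point is Step 2's use of the limit. One must justify that the limit operator is defined exactly on a set dense in the graph norm of its closure, and that this set also lies in every $\Dom(\CarlFk{k})$. I would read the hypotheses of the theorem as guaranteeing exactly this: the domain of $\lim_k\CarlFk{k}$ is taken as the set where the strong limit exists, contained in the dense intersection. Under that reading, both dissipativity passing to the limit and the core property follow directly.
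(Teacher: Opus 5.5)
Your proposal is correct and follows the paper's proof. Dissipativity of each $\CarlFk{k}$ passes to the limit operator, and the assumed dissipative adjoint together with closedness of $\CarlW$ gives a contraction semigroup via Lumer--Phillips and Remark~\ref{rem_dissipativity_on_hilbert_spaces}. The first Trotter--Kato theorem then gives convergence; your Step~3 only makes explicit the core (the domain of the limit operator) and the constant approximating sequences that the paper leaves implicit.

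One step needs repair. In Step~1, dissipativity of $\CarlFk{k}^*$ does not carry over from $\Fock(H)^0$ to all of $\Dom(\CarlFk{k}^*)$ by closedness alone, because $\Fock(H)^0$ is not known to be a core for the adjoint. Obtain it instead from Lemma~\ref{lemma_CWadjoint_dissipativity}, applied with $V_n=\Fock_n(V_k)$ and $A_n=\CarlFk{k}_n$.
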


\begin{proof} It's enough to show that $\lim_{k\to \infty}\CarlemanOp{F^k_1, F^k_2)}$ is dissipative, since being densely defined, with a dissipative adjoint,
it will generate a contractive $C_0$-semigroup by Lumer-Phillips theorem and Remark~\ref{rem_dissipativity_on_hilbert_spaces}. In particular it's closable, so the limit can be identified with its closure. Now $\Lambda_1^k, F_1^k\leq 0$ implies each $\CarlFk{k}$, and also $\CarlemanOp{F_1, F_2}$, are dissipative. Since $\closure{\lim_{k\to \infty}\CarlemanOp{F^k_1, F^k_2)}} = \CarlemanOp{W_1, W_2}$,
$\CarlW$ is a $C_0$-semigroup generator.
The convergence of the semigroups is exactly the first Trotter-Kato theorem.
\end{proof}

The condition \eqref{eq_limit_trotter_kato_core_convergence} is a amalgamation of convergence conditions for first and second Trotter-Kato theorems, and a form of consistency of generators. Now by Theorem~\ref{theorem_dissipatinity_nonzero_tiling}, $\CarlFk{k}^*$ are dissipative, so the limit $\CarlFk{k}^*$ will be dissipative when $\CarlFk{k}$'s are compatible in the right way. The following lemma to get at dissipativity is natural in the setting as the example~\ref{example_hyperviscuous_burgers_limit} will illustrate.

\begin{lemma}\label{lemma_CWadjoint_dissipativity} Let $A_n: \Dom(A_n)\subset V_n \to V_n\subset V$ be a sequence of operators on Hilbert subspaces $\dots V_n\subset V_{n+1}\dots\subset V$ of Hilbert space $V$. Suppose on $V_n, A_n, A_n^*$ are both dissipative. Extend $A_n$ to $V$ by setting $A_n(V_n^\perp)=0$ and let $D_n:= \Dom(A_n)$ after the extension. Additionally assume that $A_{n+1}|_{V_n} = A_n$. Define the limit operator $A$ with $D:=\Dom(A) = \bigcup_{n=1}^\infty D_n$ by $Ax = A_n x$ for $x \in D_n$. If $V=\closure{\cup_n V_n}$ and $A$ is densely defined, then $A, A^*$ are dissipative.
\end{lemma}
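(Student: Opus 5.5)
Dissipativity of $A$ follows directly from the compatibility condition, so most of the work is for $A^*$. For $x\in D$ there is an $n$ with $x\in D_n$, and then $Ax=A_nx$. We write $x=x_n+x_n^\perp$ with $x_n\in V_n$ and $x_n^\perp\in V_n^\perp$. Because of how the extension is defined, $A_n x = A_n x_n\in V_n$, and this gives
\[
\Re\inner{Ax,x}=\Re\inner{A_nx_n,x_n+x_n^\perp}=\Re\inner{A_nx_n,x_n}\le 0 ,
\]
using that $A_n$ is dissipative on $V_n$. The condition $A_{n+1}|_{V_n}=A_n$ makes $A$ well defined, since different choices of $n$ give the same value. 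By Remark~\ref{rem_dissipativity_on_hilbert_spaces}, $A$ is then closable and $\closure{A}$ is dissipative.

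For $A^*$ the plan is to compare it with the extended adjoints $A_n^*$, which we also extend by zero on $V_n^\perp$. First we show $\bigcup_n \Dom(A_n^*)\subset\Dom(A^*)$, or at least a suitable dense subset of it. Take $y\in\Dom(A_m^*)\subset V_m$ and $x\in D_n$ with $n\ge m$. The compatibility condition gives $\inner{Ax,y}=\inner{A_nx,P_my}$. To bound this by $C_y\norm{x}$ we would use the adjoint relation on $V_n$ together with the compatibility, so that $P_mA_n$ agrees with $A_mP_m$ up to terms that vanish when paired with $y$. This yields $A^*y=A_m^*y$ on such $y$. Then $\Re\inner{A^*y,y}=\Re\inner{A_m^*y,y}\le0$ on the dense set $\bigcup_m\Dom(A_m^*)$, which is dense because $V=\closure{\cup_nV_n}$ and each $\Dom(A_m^*)$ is dense in $V_m$.

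The last step passes from this dense set to all of $\Dom(A^*)$. We would show that $\bigcup_m\Dom(A_m^*)$ is a core for $A^*$ and then conclude by continuity of the quadratic form $y\mapsto\Re\inner{A^*y,y}$ in the graph norm, as in the earlier proposition on dissipativity of $\CarlemanOp{W_1,W_2}^*$. One way to get the core property is through the truncations $y\mapsto P_ny$. For $y\in\Dom(A^*)$ the identity $\inner{x,A^*y}=\inner{A_nx,P_ny}$ for $x\in D_n$ indicates that $P_ny\in\Dom(A_n^*)$ with $A_n^*P_ny=P_nA^*y$. We then get $P_ny\to y$ and $A^*P_ny=A_n^*P_ny=P_nA^*y\to A^*y$, which is exactly the required graph-norm density.

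The main obstacle is the identification $A^*|_{V_m}=A_m^*$. It needs $A_n$ to leave $V_m$ "block upper-triangular" compatibly, meaning $P_mA_n=A_mP_m$ on $D_n$, and the stated hypothesis $A_{n+1}|_{V_n}=A_n$ only gives $A_nP_m=A_mP_m$. For the Carleman operators this is exactly where the off-diagonal $\SymmSum_N(W_2)$ block appears, as in the closability proof of Proposition~\ref{prop_carleman_closable}. If the direct identification fails, the fallback is the bound $\Re\inner{A^*y,y}=\lim_n\Re\inner{A_n^*P_ny,P_ny}\le0$, obtained from the convergence $A_n^*P_ny\to A^*y$ on a dense set, which uses only the dissipativity of each $A_n^*$.
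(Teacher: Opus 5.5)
Your argument for $A$ is fine, provided you take $D_n=\Dom(A_n)\subset V_n$, as the paper implicitly does when it writes $P_nx=x$. Under the literal zero extension, a vector in $\Dom(A_{n+1})\cap(V_{n+1}\ominus V_n)$ lies in both $D_n$ and $D_{n+1}$, with $A_nx=0\neq A_{n+1}x$ in general. So the compatibility condition alone does not make $A$ well defined.

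The main route for $A^*$, however, rests on a false identification. The hypothesis $A_{n+1}|_{V_n}=A_n$ makes $A_n$ block upper triangular with respect to $V_m\oplus(V_n\ominus V_m)$, so its adjoint is block \emph{lower} triangular. For $y\in V_m\cap\Dom(A^*)$ one does get $P_mA^*y=A_m^*y$. But $(I-P_m)A^*y$, which comes from the adjoint of the upper off-diagonal block, need not vanish.

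Concretely, take $\ell^2$ with $V_n=\mathrm{span}\{e_1,\dots,e_n\}$ and $A_n=-I+N_n$, where $N_ne_{k+1}=e_k$ and $N_ne_1=0$. All hypotheses hold, yet $A^*e_n=-e_n+e_{n+1}\neq -e_n=A_n^*e_n$. When the off-diagonal block is unbounded (as with unbounded $W_2$), $y\in\Dom(A_m^*)$ need not lie in $\Dom(A^*)$ at all. So two steps of your plan are unjustified:
\begin{itemize}
\item the inclusion $\bigcup_m\Dom(A_m^*)\subset\Dom(A^*)$ with $A^*=A_m^*$ there;
\item the core argument built on $A^*P_ny=A_n^*P_ny$.
\end{itemize}
You sensed this correctly in your last paragraph.

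The repair is your fallback, and it needs no density or core argument. The identity in your third paragraph, $P_ny\in\Dom(A_n^*)$ with $A_n^*P_ny=P_nA^*y$, holds for \emph{every} $y\in\Dom(A^*)$. Indeed, for $x\in D_n$ one has $A_nx\in V_n$ and $x=P_nx$, hence $\inner{A_nx,P_ny}=\inner{Ax,y}=\inner{x,A^*y}=\inner{x,P_nA^*y}$. This uses the density of $D_n$ in $V_n$, which is implicit in $A_n^*$ being an operator.

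Since $V=\closure{\bigcup_nV_n}$ with the $V_n$ nested, $P_n\to I$ strongly. For each $y\in\Dom(A^*)$ this gives directly
$\Re\inner{A^*y,y}=\lim_n\Re\inner{P_nA^*y,P_ny}=\lim_n\Re\inner{A_n^*P_ny,P_ny}\le 0$.
This is exactly the paper's proof.

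Drop the qualifier ``on a dense set''. Knowing the inequality only on a dense subset of $\Dom(A^*)$ would put you back to needing the core property you could not establish.
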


\begin{proof} The dissipativity of $A$ on $D$ is clear since if $x \in D$, $x \in D_n$ for some $n$ with $Ax=A_nx$ and $A_n$ is dissipative. Now $y \in \Dom(A^*)$ if there exists $g \in H$ satisfying $\inner{ Ax, y } = \inner{ x, g }$ for all $x\in D$ and $A^*y=g$. Since $D_n \subset D$, if this holds for all $x \in D$, it holds for all $x \in D_n$, implying
if $y \in  \Dom(A^*)$, then $y \in \Dom(A_n^*)$ and $\Dom(A^*)\subset \cap_n \Dom(A_n^*)$.

Let $x\in \Dom (A)$, with $n$ such that $x\in D_n$ (so for orthogonal projections $P_n:V\to V_n$, $P_nx = x$), we have $Ax\in V_n$, meaning $\inner{Ax,y} =  \inner{P_nx,A^*y} = \inner{x,P_nA^*y}$, and additionally
\aln{
\inner{Ax,y} 
&= \inner{AP_n x,y} = \inner{A_nP_n x,P_n y}=\inner{P_nx,A_n^* P_n y} = \inner{x,A_n^* P_n y}.
}
Since this holds for all $x\in D$ which is dense, $P_n A^* y= A_nP_n y $. Fix $y \in  \Dom(A^*)$, then $V=\closure{\cup_n V_n}$ implies that  $P_n$ converges strongly to $I$, giving $$
\Re \inner{A^*y,y} = \lim_{n\to \infty} \inner{P_nA^* y,P_n y}  = \lim_{n\to \infty} \inner{A_n^*P_n y,P_n y} \leq 0 .
$$
using $A_n^*$ is dissipative on $V_n$.

\end{proof}

\begin{example}\label{example_hyperviscuous_burgers_limit} Consider the discretization in example~\ref{example_hyperviscuous}, $W^{2n}_i:i\in[2]$. The  hypothesis of the Lemma~\ref{lemma_CWadjoint_dissipativity} are satisfied with $V_{n}=\Fock(H_{2n})$, $A_n=\CarlemanOp{W^{2n}_i:i\in[2]}$. So for the limit $A$, $A^*$ is dissipative. In general, it holds for spectral discretizations.
\end{example}

It remains to characterize the core since it's enough to check the equality 
\refeq{eq_limit_trotter_kato_core_convergence}, $
\closure{\lim_{k\to \infty}\CarlemanOp{F^k_1, F^k_2)}} = \CarlemanOp{W_1, W_2}$ on a core. For the quadratic system considered, $W_i$'s preserve or lower the tensor degree, they correspond to preservation and annihilation operators on Fock space.   Observe that for every $k$, the Carleman operator $\CarlFk{k}$ restricted to the space $\Fock(H)_n$, agrees with $\CarlFk{k}_n$, and so is generator of a $C_0$-semigroups $T_{t, n}$ on $\Fock(H)_n$. This means the semigroup $T_t$ generated by $\CarlemanOp{F_1, F_2}$ on $\Fock(H)$, satisfies $T_{t}(\Fock(H)_n)\subset \Fock(H)_n$ for all $n$. Now because $\cup_{n,k} \Dom(\CarlFk{k})$ is dense, and invariant under $T_t$, it's a core by \cite [Proposition~II.1.7]{engel_nagel_semigroups}.

\begin{corollary}\label{cor_core_is_finite_particles} For example~\ref{example_hyperviscuous_burgers_limit}, with $H=L^2(S^1), \Fock(H)^0$, contains a core for $\CarlemanOp{F_1, F_2}$,  $\Fock(H)^0\cap (\cup_{n,k} \Dom(\CarlFk{k}_n)).$
\end{corollary}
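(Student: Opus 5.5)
The plan is to use the core criterion cited just before the statement, \cite[Proposition~II.1.7]{engel_nagel_semigroups}: a subspace $D\subset\Dom(A)$ that is dense in $\Fock(H)$ and invariant under the semigroup $T_t$ generated by $A$ is a core for $A$. We apply it with $A=\CarlemanOp{F_1,F_2}$ and the candidate
\[
D:=\Fock(H)^0\cap\Big(\textstyle\bigcup_{n,k}\Dom(\CarlFk{k}_n)\Big).
\]
Theorem~\ref{thm_dissipative_carleman_trotter_kato} already gives that $A$ generates a contractive $C_0$-semigroup, using Example~\ref{example_hyperviscuous_burgers_limit}, Lemma~\ref{lemma_CWadjoint_dissipativity} and the dissipativity from Example~\ref{example_hyperviscuous}. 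It therefore remains to check three things: $D\subset\Dom(A)$, $D$ is dense, and $T_tD\subset D$.

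\emph{Inclusion and density.} In the spectral discretization, $F^k_i$ is the restriction of $W_i$ to the trigonometric polynomials $H_k$. Hence $D$ consists of finite sums of $\phi_m\in H_k^{\tsr m}$ with $m\le n$, i.e.\ tensors of trigonometric polynomials of bounded degree and bounded particle number. On such vectors $\CarlFk{k'}x=\CarlemanOp{W_1,W_2}x$ for all $k'\ge k$, because the construction in Example~\ref{example_hyperviscuous} is compatible across $k$. So $x\in\Dom(A)$ with $Ax=\CarlemanOp{W_1,W_2}x$. Density in $\Fock(H)$ follows because $\bigcup_k H_k$ is dense in $L^2(S^1)$. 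Consequently $\bigcup_k H_k^{\tsr m}$ is dense in $H^{\tsr m}$, and finite particle vectors are dense in $(\Fock(H),Q)$.

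\emph{Invariance.} This is the main obstacle. The key structural fact is that $\CarlemanOp{\cdot,\cdot}$ is upper bidiagonal: it preserves or lowers tensor degree. It also preserves Fourier support, since $W_1$ is diagonal in $e_k$ and $W_2$ maps $e_{mn}$ to $e_{m+n}$. The subtlety is that $W_2$ sends $H_k^{\tsr 2}$ into $H_{2k}$, not into $H_k$, so a fixed $H_k$ is not invariant.

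To handle this, I will use the graded subspaces
\[
G_{n,k}:=\bigoplus_{m\le n}H_{(n-m+1)k}^{\tsr m}\cap(\text{support conditions}).
\]
Alternatively, I will simply track $\Fock_n(H_K)$ for $K=nk$, which is large enough to absorb the frequency growth from $n-1$ applications of $W_2$. On such a finite-dimensional space $A$ acts as a bounded matrix, namely the truncation $\CarlFk{K}_n$, and this space is $A$-invariant.

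The semigroup restricted to it is then $\exp(t\CarlFk{K}_n)$. This is justified since, for a finite-dimensional $A$-invariant subspace $E\subset\Dom(A)$, $e^{tA|_E}x$ solves the Cauchy problem, and uniqueness from the $C_0$ property gives $T_tx=e^{tA|_E}x\in E\subset D$. Thus $T_tD\subset D$, and Proposition~II.1.7 yields that $D$, and hence $\Fock(H)^0$, contains a core for $\CarlemanOp{F_1,F_2}$.

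The one careful point is to verify that the chosen frequency cutoff indeed produces an $A$-invariant finite-dimensional subspace containing the given $x$. I would do this by induction downward in particle number $m$.
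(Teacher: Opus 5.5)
Your strategy matches the paper's: both apply \cite[Proposition~II.1.7]{engel_nagel_semigroups} to a dense, $T_t$-invariant subspace of $\Dom(\CarlemanOp{F_1,F_2})$. The paper only uses tensor-degree preservation, $T_t(\Fock_n(H))\subset\Fock_n(H)$. Your extra bookkeeping of Fourier support, together with identifying $T_tx=e^{tA|_E}x$ through uniqueness of classical solutions, is what actually makes the finite-frequency set in the statement invariant.

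However, the invariant subspace you name is wrong, and this is exactly the step you flagged. $\Fock_n(H_K)$ is not $A$-invariant for any $n\ge 2$. For $x=0\oplus(e_K\tsr e_K)$ one has $(Ax)_1=W_2(e_K\tsr e_K)=-\frac{\I}{2\sqrt{2\pi}}\,2K\,e_{2K}\notin H_K$. Likewise $\bigoplus_{m\le n}H_{(n-m+1)k}^{\tsr m}$ fails for $n\ge 3$: put $e_{2k}\tsr e_{2k}$ in the $m=2$ slot, and $W_2$ sends it to a multiple of $e_{4k}\notin H_{3k}$. More generally, any downward induction with per-factor cutoffs $c_m$ forces $c_m\ge 2c_{m+1}$, because $\SymmSum_m(W_2)$ can double the largest frequency. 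So linear cutoffs cannot close the induction.

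What is true is that the $A$-orbit of $x\in\Fock_n(H_k)$ stays inside $\Fock_n(H_{nk})$, and the quantity to track is the $\ell^1$ frequency. $\SymmSum_m(W_1)$ is diagonal on $e_{k_1}\tsr\cdots\tsr e_{k_m}$. $\SymmSum_m(W_2)$ maps $e_{k_1}\tsr\cdots\tsr e_{k_{m+1}}$ to a combination of basis tensors in which one adjacent pair $k_j,k_{j+1}$ is replaced by $k_j+k_{j+1}$. By the triangle inequality, $\sum_i|k_i|$ cannot increase. Hence
\[
E_{n,S}:=\bigoplus_{m\le n}\mathrm{span}\Big\{e_{k_1}\tsr\cdots\tsr e_{k_m}:\textstyle\sum_i|k_i|\le S\Big\}
\]
is finite-dimensional and $A$-invariant. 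For $S=nk$ it contains $\Fock_n(H_k)$ and lies inside $\Fock_n(H_{nk})$. There $A$ agrees with the truncated discretized Carleman matrix at a high enough discretization level. Geometric cutoffs $\bigoplus_{m\le n}H_{2^{n-m}k}^{\tsr m}$ would also work. With $E=E_{n,nk}$ in place of $\Fock_n(H_{nk})$, the rest of your argument goes through unchanged.
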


\begin{remark} The utility of the finite particle space containing a core is that instead of considering the convergence for the untruncated $\CarlFk{k}$, only the convergence for the truncations $\CarlFk{k}_{m_k}$ for any increasing sequence $(m_k)$ tending to infinity needs to be considered. Finally, being a contractive $C_0$-semigroup generator $\CarlemanOp{F_1, F_2}$ is maximal dissipative (see \cite{arendt2023extensions}), so the equality  $\CarlemanOp{F_1, F_2}=\CarlemanOp{W_1, W_2}$ follows if $\CarlemanOp{W_1, W_2}$ is also dissipative on its core which can be characterized similar to corollary~\ref{cor_core_is_finite_particles}.
\end{remark}

\section{Nonlinear perturbations and integrated semigroups}\label{nonlinear_perturbations}

For a polynomial system on Hilbert space $H$ \alnn{\label{eq_poly_nonlinear_system}
x'(t) = W_1 x + W_2 x^{\tsr 2} + \ldots +  W_p x^{\tsr p}, x(0):=x_0
}
the Carleman system is given by \begin{equation}
\left\{
\begin{aligned}
y'(t) &= \CarlemanOp{W_j,j\in [p]} y(t), \\
y(0) &= \FExp(x_0)
\end{aligned}
\right.
\label{eq:InfSysMatrix}
\end{equation}
where $\CarlemanOp{W_j,j\in [p]}$ is the infinite-dimensional block upper-triangular operator (see, for example, \cite{forets2017explicit})
\alnn{
&\CarlemanOp{W_i:i\in [p]} := A + B\ \label{eq_perturbation_rep}  \\
&\hspace{1mm}= \begin{pmatrix}
W_1^1 & 0 & 0 & \ldots & 0 & 0 & 0 & \ldots \\
0& W_2^2 & 0 & \ldots & 0 & 0 &0 & \ldots \\
0 & 0 & W_3^3 & \ldots  & 0 & 0  & 0 & \ldots \\
\vdots &  \vdots &  \vdots &    & \vdots &\vdots & \vdots &\\
\end{pmatrix}\nonumber \\
&\hspace{5mm}+ \begin{pmatrix}
0 & W_2^1 & W_3^1 & \ldots & W_{k}^1 & 0 & 0 & \ldots \\
0& 0 & W_3^2 & \ldots & W_{k}^2 & W_{k+1}^2&0 & \ldots \\
0 & 0 & 0 & \ldots  & W_{k}^3 & W_{k+1}^3  & W_{k+2}^3 & \ldots \\
\vdots &  \vdots &  \vdots &    & \vdots &\vdots & \vdots &\\
\end{pmatrix} ,\nonumber
}
with $W^i_{i+j-1} := \SymmSum_i(W_j)$.
We require that as for the quadratic nonlinearity in equation~\eqref{eq_orig_nonlinear_eqn}, $W_i$'s are symmetric. The Carleman operator $\CarlemanOp{W_i, i\in [p]}$ can be viewed as a perturbation of the block diagonal operator $A:=\txt{Diag}(A^i_i)_{i\in \NN \nonumber}$ by the linear operator $B$.  Now $W_1\leq 0$ is necessary: if $W_1$ is self-adjoint and diagonalizable with discrete spectrum, then $\SymmSum_n(W_1)$ has $n\lambda$ as an eigenvalue for any eigenvalue $\lambda$ for $W_1, W_1 v_i = \lambda v_i$, and $v_i^{\tsr n}$ is the corresponding eigenvector. Therefore, if $W_1\leq 0$ does not hold, then $\CarlemanOp{W_1}$ cannot generate a $C_0$-semigroup, not just a contractive $C_0$-semigroup, and perturbation theory doesn't apply. Note that we don't need $\dim H$ to be finite.

However, while $W_1$ must be dissipative, perturbations of $W_1$ by bounded operators, and nonlinear functions of bounded operators, can be
nondissipative on $\Fock(H)$.  This has been observed for reaction-diffusion equations by \cite{liu2023efficient}; next we consider polynomial perturbations that generalize the convergence results for reaction-diffusion equations.

\begin{remark}
It will be convenient to work in a basis in which $W_1$ is diagonal since this has no effect on existence of the resolvent and its bounds, $\norm{R(A, \lambda)} = \norm{\psi^{-1}(\lambda - A)^{-1}\psi^{}}
$
where $\psi$ is unitary that diagonalizes $W_1$. If $\psi$ is not unitary then the resolvent bounds are scaled by the condition number for $\psi$, but the existence is unaffected.
\end{remark}

\begin{remark} While $x'(t) = W_0(t) + W_1 x + W_2 x^{\tsr 2} + \ldots +  W_p x^{\tsr p}$ has a similar Carleman linearized system with $\SymmSum_n(W_0(t))$ appearing below the diagonal, we do not consider this inhomogeneity in this article. The reason being $W_0$ is a creation operator on $\Fock(H)$ while $W_i, i>1$ are annihilation operators. The creation operator means that $\CarlemanOp{W_0}$ maps $\Fock(H)_N$ to $\Fock(H)_{N+1}$, which has an effect on the core for $\CarlemanOp{W_i:i\in[0:p]}$. Additionally, while a polynomial nonlinearity of order $p$ can be reduced to a quadratic nonlinearity on a larger Hilbert space, we do need $W_i$'s to be symmetric. The analysis is convenient enough that we do not need to explicitly reduce to a quadratic system, and check if the system can be symmetrized.
\end{remark}

\subsection{Relatively bounded perturbations}

Suppose $W_1$ is densely defined and closed, diagonal with eigenvalues, $(\lambda_i)$ ordered as $0\geq \dots \lambda_i \geq \lambda_{i+1}$ associated to eigenbasis $(w_i)_{i\in \NN}, W_1w_i = \lambda_i w_i$, and is strictly dissipative, $\lambda_1 < 0$. And let $W_j, j>1$ be bounded operators with $\gamma=\sup_{j>1} \norm{W_j}$. Now we need to establish first that $\Dom(A)\subset \Dom(B)$, so $A+B$ is defined on a dense subspace. For this, $|\lambda_1| > 0$ is useful. It will turn out that $B$ is $A$-bounded.

\begin{definition} For linear operators $S: \Dom(S) \subset X\to X, S': \Dom(S') \subset X\to X$, $S'$  is called (relatively)
$S$-bounded if $\Dom(S) \subset \Dom(S')$ and there exist constants $a,b\in \RR^{\geq 0}, \norm{S'x} \leq a \norm{Sx} + b\norm{x}$
for all $x \in \Dom(S)$. The $S$-bound of $S'$ is the minimum such  $a$ over all choices of $b$.  
\end{definition}

\begin{remark}\label{rem_kato_closable}
    By \cite[Theorem~IV.1.1]{kato_perturbations}, if $S'$ is $S$-bounded with $S$-bound $a<1$ then $S+S'$ is closable iff $S$ is closable.
\end{remark}

\begin{remark}\label{rem_relative_bounded_core_closable} The Banach space $X$ can be normed using $\norm{x}_1 = \norm{x} + \norm{Sx}$ similar to Sobolev norms, $X_1:=(\Dom(S), \norm{\cdot}_1)$ being the Sobolev space relative to $S$. Now $S'\in \Bc(X_1, X)$ means $\norm{S'}_{\Bc(X_1, X)}\leq \alpha <\infty$, $$
\norm{S'x} \leq  \alpha \norm{x}_1 =  \alpha(\norm{Sx} + \norm{x}) < \infty,
$$
therefore, if $S'$ is $S$-bounded, then $S'\in \Bc(X_1, X)$. So it's enough that $S$-boundedness holds on a core for $S$, since then $S'$ is a densely defined, bounded operator $S':\Dom(S)\to  X$ and extends uniquely to all of $\Dom(S)$.
\end{remark}

\begin{prop}\label{prop_a_boundedness} For $A, B$ in equation~\eqref{eq_perturbation_rep}, $B$ is $A$-bounded with $A$-bound $(k-1)\gamma/|\lambda_1|$, so by remark~\ref{rem_kato_closable}, when $(k-1)\gamma/|\lambda_1|<1$, $A+B$ is closable.
\end{prop}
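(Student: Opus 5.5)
The plan is to compare $B$ with $A$ level by level on the finite particle space and then pass to the closure. Here $k$ is the polynomial degree $p$, so each row of $B$ has at most $k-1$ nonzero off-diagonal blocks. First fix $n$ and look at the $n$-th row of $B$. For $y=(y_m)\in\Fock(H)^0$ we have
\[
\Tc_n(By)=\sum_{j=2}^{k} \SymmSum_n(W_j)\,y_{n+j-1},
\]
and since $\SymmSum_n(W_j)$ is a sum of $n$ terms each of norm at most $\gamma$,
\[
\norm{\Tc_n(By)}\le n\gamma\sum_{j=2}^{k}\norm{y_{n+j-1}}.
\]
On the other side $A$ is block diagonal with blocks $\SymmSum_m(W_1)$. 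In the eigenbasis $w_{i_1}\tsr\cdots\tsr w_{i_m}$ of $W_1$ the block $\SymmSum_m(W_1)$ is diagonal with eigenvalues $\lambda_{i_1}+\dots+\lambda_{i_m}$. All $\lambda_i\le\lambda_1<0$, so every such eigenvalue has modulus at least $m|\lambda_1|$, and therefore
\[
\norm{\SymmSum_m(W_1)y_m}\ge m|\lambda_1|\norm{y_m}.
\]
This gives the basic comparison
\[
\norm{y_{n+j-1}}\le \frac{\norm{A_{n+j-1}y_{n+j-1}}}{(n+j-1)|\lambda_1|}.
\]
Since $n\le n+j-1$, each cross term satisfies
\[
n\gamma\norm{y_{n+j-1}}\le \frac{\gamma}{|\lambda_1|}\norm{\Tc_{n+j-1}(Ay)}.
\]
Combining the two bounds,
\[
\norm{\Tc_n(By)}\le \frac{\gamma}{|\lambda_1|}\sum_{j=2}^{k}\norm{\Tc_{n+j-1}(Ay)}.
\]
The next step is to sum over $n$ in $\ell^2$. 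Cauchy--Schwarz on the $k-1$ terms gives
\[
\norm{\Tc_n By}^2\le (k-1)\frac{\gamma^2}{|\lambda_1|^2}\sum_j\norm{\Tc_{n+j-1}Ay}^2.
\]
Each level $m$ of $Ay$ appears in at most $k-1$ of the rows $n$, so summing over $n$ yields
\[
\norm{By}\le (k-1)\frac{\gamma}{|\lambda_1|}\norm{Ay}
\]
for all $y\in\Fock(H)^0$. This is a relative bound with $a=(k-1)\gamma/|\lambda_1|$ and $b=0$.

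To finish, I would note that $\Fock(H)^0\cap\Dom(A)$ is a core for $A$. Concretely, for $y\in\Dom(A)$ the truncations $\Pc_N y$, after also truncating each level in the eigenbasis of $W_1$, converge in the graph norm, because $A$ is block diagonal and each block is diagonal. Remark~\ref{rem_relative_bounded_core_closable} then extends $B$ from this core to all of $\Dom(A)$ with the same bound, giving $\Dom(A)\subset\Dom(\closure B)$. Finally, $A$ is closed because it is a direct sum of the closed operators $\SymmSum_m(W_1)$, which are diagonal with real eigenvalues. So Remark~\ref{rem_kato_closable} applies when $(k-1)\gamma/|\lambda_1|<1$ and shows that $A+B$ is closable.

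The main obstacle is the bookkeeping when summing over $n$: the factor $n$ from $\SymmSum_n(W_j)$ has to be absorbed by the larger factor $n+j-1$ coming from $A$, and the overlaps between rows have to be counted correctly to get the constant $k-1$ rather than $(k-1)^2$. If the crude count produces $(k-1)^2$, I would instead write $B=\sum_{j=2}^k B_j$, where $B_j$ is the $j$-th superdiagonal. Each $B_j$ satisfies $\norm{B_jy}\le(\gamma/|\lambda_1|)\norm{Ay}$ because it maps distinct levels to distinct levels, and the triangle inequality over $j$ then gives exactly $(k-1)\gamma/|\lambda_1|$. This decomposition is the cleaner route, and I would use it as the main argument.
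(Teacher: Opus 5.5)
Your proposal is correct and follows essentially the paper's argument. It uses the same ingredients: the row bound $\norm{\SymmSum_n(W_j)}\le n\gamma$, the spectral lower bound $\norm{\SymmSum_m(W_1)u_m}\ge m|\lambda_1|\norm{u_m}$, absorbing $n\le n+j-1$, and Cauchy--Schwarz with overlap counting to get $a=(k-1)\gamma/|\lambda_1|$ and $b=0$; your superdiagonal decomposition with the triangle inequality is only a cosmetic variant giving the same constant. The one difference is that the paper runs the estimate directly on all of $\Dom(A)$, which works because every step is a termwise inequality between nonnegative series, so your detour through $\Fock(H)^0$ and a core extension is unnecessary but harmless.
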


\begin{proof} Suppose $\V u:=(u_m)_{m\in \NN}\in \Dom(A)$ where $u_m\in H^{\tsr m}$, then \alnn{\label{eq:lower_bound_w1_fock}
\eta =\sum_{m\in \NN} m^2|\lambda_1|^2\norm{u_m}^2 \leq  \sum_{m\in \NN} \norm{\SymmSum_m(W_1)u_m}^2 = \norm{A\V u}^2 < \infty .
}
Using that $B$ can be written as sum over operator mapping in to $m$-particle spaces, the $m^{th}$-component of $B\V u, (B\V u)_m$ is bounded by $$
\norm{(B\V u)_m} = \|\sum_{j=1}^{k-1} \SymmSum_m(W_{j+1})u_{m+j}\| \leq \sum_{j=1}^{k-1} m \gamma \norm{u_{m+j}} \leq m \gamma \sqrt{(k-1) \sum_{j=1}^{k-1}  \|u_{m+j}\|^2 } ,
$$
where for the last bound Cauchy-Schwarz was used. Therefore, \alnn{
\|Bu\|^2 &\leq (k-1) \gamma^2 \sum_m \sum_{j=1}^{k-1} m^2\|u_{m+j}\|^2 \leq (k-1)^2\gamma^2 \sum_m m^2\|u_{m}\|^2\\
&\leq \dfrac{(k-1)^2\gamma ^2}{|\lambda_1|^2}\norm{Au}^2,
 \label{eq:B_is_A_bounded}
}
from which the conclusion follows.
\end{proof}

Now we are interested in conditions under which $A+B$ generates a $C_0$-semigroup. By \cite[theorem~III.2.7]{engel_nagel_semigroups}, if $S$ is a generator of a contractive $C_0$ semigroup, $S'$ is $S$-bounded and dissipative, with $S$-bound $a_0<1$. Then  $(S + S', \Dom(S))$ is a generator of a contractive $C_0$-semigroup. Equivalently, this also follows by the Lumer-Phillips theorem by considering the dissipativity for $A+B$ since $A$ is already dissipative. It is possible to relax the condition $\lambda_1<0$ to $\lambda_1\leq 0$ if $\txt{kernel}(A)\subset\txt{kernel}(B)$ which can be checked depending on $W_i$'s. Since if $\txt{kernel}(A) \subset \txt{kernel}(B)$, then with $P_{0}, P_1$ being projections onto $\txt{kernel}(B), \txt{kernel}(B)^\perp$,
we have $\norm{Bu} = \norm{BP_1u}$. Now $A_{|\txt{kernel}(B)^\perp}$ is strictly negative, so the role of $\lambda_1$ is now played by $\lambda_2$. However, a polynomial perturbation without a degree zero term will not be dissipative. For some $W_i$'s it is possible to examine the spectrum to show that $A+B$ is a $C_0$-semigroup generator, but for the general case we need to pass to $1$-integrated semigroups.

The requirement that the relative $A$-bound $a_0<1$, which is the same as saying that $|\lambda_1|$ is relatively large (or under kernel inclusion, $|\lambda_2|$ is large), appears frequently in the literature, for example, the perturbation $B$ is similar to reaction-diffusion equations considered in \cite{liu2023efficient}, where they analyze $u' = (F_1 + a \one )u + bu^{\tsr M}$ and require the constraint on $b/|\max \txt{spec}(F_1)|$ for convergence. The key significance of this constraint is closability in remark~\ref{rem_kato_closable}.

\subsection{Nonlinear perturbations and integrated semigroups}

For the system, $\phi' = W_1\phi + W_p\phi^{\tsr p}$ with $W_1$ self-adjoint and negative semi-definite with largest eigenvalue $-|\alpha|$.
Now from equation~\eqref{eq:B_is_A_bounded}, if $W_p$ is bounded then $\CarlemanOp{W_p}$ also has $\CarlemanOp{W_1}$-bound $\norm{W_p}/|\alpha|$ where $b=0$. Under the hypothesis on $W_1$, $A=\CarlemanOp{W_1}$ generates a contractive $C_0$-semigroup, so by the Feller-Miyadera-Phillips Theorem,  the resolvent $R(\lambda, A) = (\lambda I - A)^{-1}$ exists for all $\lambda > 0$ and satisfies $
\norm{R(\lambda, A)} \leq 1/\lambda
$. Notice that we do not need $W_1$ to be bounded. We will use the following estimate for resolvent $R(\lambda, A+B)$ for any $B$ that is $A$-bounded.
\begin{prop}\label{prop_resolvent_bound_a_bounded_perturbation} For $\lambda > b/(1-2a):=\omega, M=1/(1-2a)$, $a<1/2$, where $a, b$ are constants that realize the $A$-bound for $B$, then we have
$
\norm{R(\lambda, A+B)} \leq {M}/{\lambda - \omega}.
$ 
\end{prop}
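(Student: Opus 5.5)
The plan is a Neumann series argument around the resolvent of the unperturbed generator. Its only inputs are the bound $\norm{R(\lambda,A)}\leq 1/\lambda$ for $\lambda>0$, which holds because $A$ generates a contraction semigroup (Feller--Miyadera--Phillips), and the relative bound $\norm{Bx}\leq a\norm{Ax}+b\norm{x}$ for $x\in\Dom(A)$.

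\textbf{Step 1 (factorization).} For $\lambda>0$ and $x\in\Dom(A)$ write
\[
\lambda - A - B = \bigl(I - BR(\lambda,A)\bigr)(\lambda - A).
\]
This reduces invertibility of $\lambda-A-B$ on $\Dom(A)$ to invertibility of $I-BR(\lambda,A)$ on $X$.

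\textbf{Step 2 (estimate of $BR(\lambda,A)$).} For $y\in X$, apply the relative bound to $x=R(\lambda,A)y$ and use $AR(\lambda,A)=\lambda R(\lambda,A)-I$:
\[
\norm{BR(\lambda,A)y}\leq a\norm{AR(\lambda,A)y}+b\norm{R(\lambda,A)y}\leq a\Bigl(\lambda\cdot\tfrac{1}{\lambda}+1\Bigr)\norm{y}+\tfrac{b}{\lambda}\norm{y}=\Bigl(2a+\tfrac{b}{\lambda}\Bigr)\norm{y}.
\]
This is where the factor $2a$, and hence the hypothesis $a<1/2$, comes from.

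\textbf{Step 3 (Neumann series).} Set $q:=2a+b/\lambda$. We have $q<1$ exactly when $\lambda(1-2a)>b$, i.e. $\lambda>\omega$. For such $\lambda$ the Neumann series makes $I-BR(\lambda,A)$ invertible with
\[
\norm{(I-BR(\lambda,A))^{-1}}\leq \frac{1}{1-q}.
\]
Consequently $R(\lambda,A+B)=R(\lambda,A)(I-BR(\lambda,A))^{-1}$, so $\lambda\in\rho(A+B)$, with $A+B$ taken on $\Dom(A)$ and closed there by Remark~\ref{rem_kato_closable}.

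\textbf{Step 4 (final bound).} Combining,
\[
\norm{R(\lambda,A+B)}\leq \frac{1}{\lambda}\cdot\frac{1}{1-2a-b/\lambda}=\frac{1}{\lambda(1-2a)-b}=\frac{M}{\lambda-\omega}.
\]

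The main point of care is Step 2: it needs a uniform bound on $\norm{AR(\lambda,A)}$. For a general contraction generator the crude bound is $\norm{AR(\lambda,A)}\leq 2$, and this is what forces $a<1/2$ rather than $a<1$. For the self-adjoint diagonal $A$ here one could sharpen this to $1$, but the statement as given only needs the crude bound.

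The remaining check is that $B$ is defined on the range of $R(\lambda,A)$, which is $\Dom(A)$; this holds since $\Dom(A)\subset\Dom(B)$ by Proposition~\ref{prop_a_boundedness}.
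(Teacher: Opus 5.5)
Your argument is correct and matches the paper's proof: the factorization $R(\lambda,A+B)=R(\lambda,A)(I-BR(\lambda,A))^{-1}$, the estimate $\norm{BR(\lambda,A)}\leq 2a+b/\lambda$ obtained from $AR(\lambda,A)=\lambda R(\lambda,A)-I$ and $\norm{R(\lambda,A)}\leq 1/\lambda$, and a Neumann series giving $M/(\lambda-\omega)$. One minor point: closedness of $A+B$ on $\Dom(A)$ follows directly from $\lambda-A-B$ having a bounded, everywhere-defined inverse, so you do not need Remark~\ref{rem_kato_closable}, which in any case only asserts closability.
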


\begin{proof}
We have  $
    R(\lambda, A+B) =  (\lambda - A - B)^{-1} = ((1- BR(\lambda, A)) (\lambda - A))^{-1} =R(\lambda, A)(1-BR(\lambda, A))^{-1}
$.
Since $B$ is $A$-bounded, with constants $a, b \ge 0$, for all $x \in D(A)$, $\norm{Bx} \leq a\norm{Ax} + b\norm{x}$, \aln{
\norm{B R(\lambda, A)x} &\leq a \norm{A R(\lambda, A)x} + b\norm{R(\lambda, A)x}.
}
Suppose $a<1/2$. Then for $\lambda > \lambda_0 = b/(1-2a)$, using $A R(\lambda, A) = \lambda R(\lambda, A) - I$ we obtain
\aln{
\norm{B R(\lambda, A)} &\leq a \norm{\lambda R(\lambda, A) - I} + b\norm{R(\lambda, A)} \leq 2a + {b}{\lambda}^{-1} < 1.
}
Therefore, the Neumann series gives the inverse, and the needed growth bound,\aln{
\norm{(1-BR(\lambda, A))^{-1}} &= \norm{\sum_{k=0}^\infty BR(\lambda, A)^k}
\leq \sum_k\left(2a+{b}{\lambda}^{-1}\right)^k = \dfrac{1}{1-\left(2a +b{\lambda^{-1}}\right)},\\
\norm{R(\lambda, A+B)} &\leq \dfrac{1}{\lambda} \dfrac{\lambda}{(1-2a)\lambda - b}
=  \dfrac{M}{\lambda - \omega}. 
}

\end{proof}

Note the bound used above $\norm{\lambda R(\lambda, A) - I}\leq 2$ can be tightened for well behaved semigroups -- analytic or with bounded generators, by taking $\omega$ larger. Since $M>1$ Lumer-Phillips does not apply, and without checking powers of the resolvent Feller-Miyadera-Phillips may not either, but the bound \eqref{eq_resolvent_bound_once_integrated_semigroup} does apply. So with this estimate we can show as an example that the Carleman operator for reaction-diffusion equations is a generator of at least a $1$-integrated semigroup. Solving the linearized equation is reduced to approximating the $1$-integrated semigroup which can be done via Trotter-Kato type approximations, where convergence of generators to $1$-inrtegrated semigroups on a core implies convergence of the semigroups, for example, see \cite[corollary~3.5]{xiao2000_integrated_approximations}.

\begin{example} Consider the reaction-diffusion equation of type $
\phi' = (W_1-\alpha)\phi + W_p \phi^{\tsr p} $
on $J=[0, T]$ where $W_1=\laplace$ is the Laplacian, $\norm{W_p}<\infty$. Then by Lumer-Phillips $\CarlemanOp{W_1-\alpha}$ generates a contractive $C_0$-semigroup on $\Fock(H), H=L^2([0, T])$. Now if the $\CarlemanOp{W_1}$-bound $a$ on $\CarlemanOp{W_p}$ is less than $1/2$, which is true when $\norm{W_p} < \alpha/2$,  then
using Proposition~\ref{prop_resolvent_bound_a_bounded_perturbation},
\aln{
R(\lambda, A+B)\leq \dfrac{1}{\lambda}\dfrac{1}{1-2a},
}
implying the bound~\eqref{eq_resolvent_bound_once_integrated_semigroup} holds and $\CarlemanOp{W_1-\alpha, W_p}$ is a generator for $1$-integrated semigroup. Therefore, the reaction diffusion equation for such $\alpha$ is well-posed in the mild-solution sense, and the solution can be approximated from the semigroup.
\end{example}

Note that in the above, the boundedness of $W_2$ is not necessary if one can directly show $\CarlemanOp{W_2}$ is $\CarlemanOp{W_1}$-bounded.

\section{Summary and discussion}

In this article, Carleman linearization has been approached using semigroup theory. We treat quadratic nonlinear systems, \eqref{eq_orig_nonlinear_eqn}, by appealing to dissipativity, deriving a condition on a finite level truncated Carleman system that guarantees the dissipativity of the untruncated system. This condition for bounded nonlinearities is closely related to a condition well studied in quantum computing literature. However, we are also able to obtain convergence when both $W_1$ and $W_2$ are unbounded, and additionally, the linear part, $W_1$,  has a zero eigenvalue. This is seen to be a type of nonlinear relative bound. The convergence is then obtained through the first Trotter-Kato approximation theorem. 

For general polynomial systems, \eqref{eq_poly_nonlinear_system}, we give conditions under which the semigroup is generically a $1$-integrated semigroup. The question of convergence and the convergence similarly being addressed by variants of the Trotter-Kato approximation theorem. By establishing conditions for convergence we fill a gap in the functional analytic foundation of Carleman linearization.

\appendix
\section{Appendix}
\begin{prop}For $M>2$, \label{prop_hyperviscous_cross_bound} \aln{ K_M &= \dfrac{1}{4\pi}\sum_{p\in \ZZ }\sum_{m+n=p}\dfrac{p^2}{m^{2M}+n^{2M}}  \\
&\leq \dfrac{1}{4\pi}\sum_{p\in \ZZ, p\neq 0}p^2\sum_{m\in \ZZ} \dfrac{1}{m^{2M} + (p-m)^{2M}} := \dfrac{1}{4\pi}\sum_{p\neq 0} p^2 K_{M, p}
< \infty.
}
\end{prop}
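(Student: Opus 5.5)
We have to show that $K_M<\infty$ for $M>2$. The plan is to bound the inner sum $K_{M,p}=\sum_{m\in\ZZ}(m^{2M}+(p-m)^{2M})^{-1}$ by a power of $|p|$, and then show that the outer sum converges.

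First, the $p=0$ term. It carries the factor $p^2=0$. The only problematic summand is $m=n=0$, where the denominator vanishes; this corresponds to $e_{00}\in\txt{kernel}(W_2)$. We treat it as contributing $0$, consistent with the convention in Example~\ref{example_hyperviscuous} that $b\notin\txt{Kernel}(\SymmSum_2(W_1))$. Hence only $p\neq0$ matters, which gives the first inequality in the statement. By the symmetry $p\mapsto -p$, $m\mapsto -m$, we may assume $p\geq1$.

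Next we bound $K_{M,p}$ for $p\ge 1$. We use the elementary inequality $x^{2M}+y^{2M}\geq 2^{1-2M}(|x|+|y|)^{2M}$, which follows from convexity of $t\mapsto t^{2M}$. We also use $|m|+|p-m|\geq \max(p,\,2|m|-p)$. We split the $m$-sum into two ranges:
\begin{itemize}
\item For $0\le m\le p$: here $|m|+|p-m|=p$, so each term is at most $2^{2M-1}p^{-2M}$. There are $p+1$ such terms, contributing $O(p^{1-2M})$.
\item For $m>p$ or $m<0$: here $|m|+|p-m|=|2m-p|\geq |m|+1$ in the appropriate sense. Alternatively, one can use $m^{2M}+(p-m)^{2M}\ge \max(|m|,|p-m|)^{2M}$ together with a comparison with $\int_{p}^{\infty}x^{-2M}\,dx$. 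Either way the contribution is $O(p^{1-2M})$.
\end{itemize}
Thus $K_{M,p}\le C_M\,p^{1-2M}$, and so $p^2K_{M,p}\le C_M p^{3-2M}$. This matches the $O(p^{3-2M})$ claimed in Example~\ref{example_hyperviscuous}.

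Finally, $\sum_{p\ge1}p^{3-2M}<\infty$ exactly when $2M-3>1$, that is, when $M>2$. Adding the mirrored $p<0$ half gives $K_M<\infty$. The only delicate point is bookkeeping the tails uniformly in $p$, so that the constant $C_M$ does not depend on $p$. I would handle this with the crude bound $\sum_{|j|\ge p/2} j^{-2M}\le 2\sum_{j\ge \lceil p/2\rceil}j^{-2M}\le C p^{1-2M}$, after observing that for every $m$ at least one of $|m|$, $|p-m|$ is at least $p/2$. In fact this single observation bounds the whole $m$-sum directly: the terms with $\min(|m|,|p-m|)\le p/2$ number $O(p)$ and are each $\le (p/2)^{-2M}$, while the remaining terms form the tail just estimated.
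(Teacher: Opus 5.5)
Your argument is correct and is essentially the paper's: both show $K_{M,p}=O(|p|^{1-2M})$ by splitting the $m$-sum into $O(|p|)$ near terms and tails. Each near term is at most a constant times $|p|^{-2M}$ because one of $|m|,|p-m|$ is at least $|p|/2$; the tails are compared with $\int x^{-2M}\,dx$; summing $|p|^{3-2M}$ then converges for $M>2$. The only loose spot is the bound $|2m-p|\ge |m|+1$ in your second bullet, which by itself gives an $O(1)$ tail with no decay in $p$; your closing argument, or your alternative via $\max(|m|,|p-m|)^{2M}$ compared with $\int_p^\infty x^{-2M}\,dx$, is what actually delivers $O(p^{1-2M})$.
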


\begin{proof} 
For $|m|\geq |p|$, the terms in $K_{M, p}$ are decreasing with increasing $|m|$, so these can be bound by integrals, \aln{
K_{M, p}
&\leq \sum_{|m|\leq|p|} \dfrac{1}{m^{2M}+ (p-m)^{2M}} + \int_{|p|}^\infty \dfrac{1}{m^{2M} + (p-m)^{2M}} dm\\ &\hspace{2mm}+ \int^{-|p|}_{-\infty} \dfrac{1}{m^{2M} + (p-m)^{2M}} dm.
}
Since  $|p| \geq 1$, with substitutions $pw=m, |p|y=m, |p|z = m$ 
\aln{
K_{M, p}&\leq \dfrac{1}{p^{2M}}\sum_{|m|\leq |p|, w=m/p} \dfrac{1}{w^{2M} + (1-w)^{2M}} + \dfrac{1}{p^{2M}}\int_{1}^\infty \dfrac{1}{y^{2M} + (1-y)^{2M}} |p|dy\\  &\hspace{0.5in} + \dfrac{1}{p^{2M}}\int^{-1}_{-\infty} \dfrac{1}{z^{2M} + (1-z)^{2M}} |p| dz \\
&\leq \dfrac{1}{p^{2M}}\left( 4^M(2|p|+1) + |p|\int_{1}^\infty \dfrac{1}{y^{2M} + (1-y)^{2M}} dy \right.\\
&\hspace{5mm}\left.+ |p|\int^{-1}_{-\infty} \dfrac{1}{z^{2M} + (1-z)^{2M}}  dz  \right) 
}
using that one of $w, 1-w\geq 1/2$ for each term in the sum over $w$. 

The integrals decay like $y^{-2M},z^{-2M}$ for $y,z$ large, with integrands that are positive, continuous, bounded away from zero, and independent of $p$, therefore, they are bounded uniformly in $p$, giving that there exists $K_1$ satisfying
\aln{
K_{M, p} &\leq \dfrac{|p|}{p^{2M}}\left(4^{M+1} + \int_{1}^\infty \dfrac{1}{y^{2M} + (1-y)^{2M}} dy + \int^{-1}_{-\infty} \dfrac{1}{z^{2M} + (1-z)^{2M}}  dz\right)\\
&\leq \dfrac{|p|}{p^{2M}}K_1.
} This implies for $M>2$, $
K_M \leq \smfrac{1}{4\pi} K_1 \sum_{p\in \ZZ, p\neq 0}{|p|p^2}/{p^{2M}} <\infty.
$ as needed. 

\end{proof}

\emph{Acknowledgment: The first author would like to thank Jun Liu for thoughtful comments.}
\printbibliography

\end{document}